\documentclass{amsart}
\pdfoutput=1

\usepackage{amssymb}
\usepackage{amsthm}
\usepackage{amsfonts}
\usepackage{amsmath}
\usepackage{bbm}
\usepackage{bm}

\usepackage{pmboxdraw}
\usepackage{verbatim} 
\usepackage{graphicx}
\usepackage{color}
\usepackage[colorlinks=true, citecolor=blue, filecolor=black, linkcolor=black, urlcolor=black]{hyperref}
\usepackage{cite}
\usepackage[normalem]{ulem}
\usepackage{subcaption}
\usepackage{mathtools}
\usepackage{kantlipsum}
\allowdisplaybreaks
\usepackage{enumitem} 
\usepackage{tikz}
\usepackage{pgfplots}
\pgfplotsset{compat=1.18}
\usepackage{todonotes}
\usetikzlibrary{shapes.geometric, arrows.meta, positioning, patterns}

\usetikzlibrary{patterns}

\pgfdeclarepatternformonly{triangular dots}
{\pgfpointorigin} 
{\pgfpoint{4pt}{6.868pt}} 
{\pgfpoint{4pt}{6.868pt}} 
{
\pgfpathcircle{\pgfpoint{1pt}{1.732pt}}{0.7pt}
\pgfpathcircle{\pgfpoint{3pt}{5.196pt}}{0.7pt}
\pgfusepath{fill}
}

\reversemarginpar

\usepackage{xcolor}

\hypersetup{
    colorlinks=true,
    linkcolor=black,
    filecolor=magenta,      
    urlcolor=cyan,
    pdfpagemode=FullScreen,
    }

\newcommand{\RN}[1]{%
	\textup{\uppercase\expandafter{\romannumeral#1}}%
}

\newcommand{\ds}{\displaystyle}

\def\sm{\setminus}

\def\wh{\widehat}
\def\wt{\widetilde}

\DeclareMathOperator{\Res}{Res}

\def\AA{\mathcal{A}}
\def\BB{\mathcal{B}}

\def\EE{\mathcal{E}}
\def\FF{\mathcal{F}}

\def\II{\mathcal{I}}
\def\LL{\mathcal{L}}

\def\ZZ{\mathcal{Z}}

\def\QQ{\mathcal{Q}}
\def\SS{\mathcal{S}}

\def\C{\mathbb{C}}
\def\D{\mathbb{D}}

\def\P{\mathbf{P}}
\def\R{\mathbb{R}}

\def\T{\mathbb{T}}

\def\n{\mathbf{n}}
\def\t{\mathbf{t}}
\def\K{\mathbf{K}}

\newcommand{\re}{\operatorname{Re}}
\newcommand{\im}{\operatorname{Im}}
\newcommand{\ud}{\, \mathrm{d}}
\newcommand{\interior}{\operatorname{Int}}
\newcommand{\exterior}{\operatorname{Ext}}

\theoremstyle{plain}
\newtheorem{thm}{Theorem}[section]

\newtheorem{lem}[thm]{Lemma}

\newtheorem{prop}[thm]{Proposition}

\theoremstyle{remark}
\newtheorem{rem}{Remark}[section]

\numberwithin{equation}{section}
\def\alert#1{\textcolor{red}{#1}}

\usepackage{tcolorbox}
\tcbuselibrary{breakable}
\usepackage{colortbl}\newtcolorbox{defn}[1]{breakable,
colbacktitle=gray!50!white, fonttitle=\bfseries, coltitle=black, title=Definition: {#1}}

\begin{document}

\title[Free energy expansion of determinantal Coulomb gases]{Free energy expansion of determinantal Coulomb gases \\ in the quadratic fields with a point charge}
\author{Sung-Soo Byun}
\address{Department of Mathematical Sciences and Research Institute of Mathematics, Seoul National University, Seoul 08826, Republic of Korea}
\email{sungsoobyun@snu.ac.kr}

\author{Meng Yang}
\address{School of Mathematics and Statistics, HNP-LAMA, Central South University, Changsha 410083, China}
\email{mengyangrmt@csu.edu.cn}

\author{Eui Yoo}
\address{Department of Mathematical Sciences, Seoul National University, Seoul 08826, Republic of Korea}
\email{yysh0227@snu.ac.kr}

\begin{abstract}  
We study a determinantal Coulomb gas in the complex plane associated with the external potential
$$
Q(z)=\frac{1}{1-\tau^2}\bigl(|z|^2-\tau \re z^2\bigr)-2c\log|z-a|,
$$
where $\tau\in[0,1)$, $c\ge0$, and $a\ge0$. In the regimes where the associated droplet is simply or doubly connected, we derive the free energy expansion up to and including the constant term, with all coefficients computed explicitly, thereby extending recent results in the isotropic case $\tau=0$. In particular, we identify the constant term with the Liouville action associated with the droplet. Our result admits a natural interpretation in terms of asymptotic expansions of moments of characteristic polynomials for the elliptic Ginibre ensemble. The proof is based on a deformation framework involving both the singularity location $a$ and the anisotropy parameter $\tau$, relating variations of the free energy to refined asymptotics of planar orthogonal polynomials. The asymptotic analysis relies on the foliation flow method of Hedenmalm and Wennman, providing an alternative to the Riemann--Hilbert approach used in the isotropic setting. The present work suggests a general framework connecting free energy expansions, refined asymptotics of planar orthogonal polynomials, and conformally invariant geometric functionals, with several intermediate results already formulated for general algebraic Hele-Shaw potentials.
\end{abstract}

\maketitle


\section{Introduction}

One of the distinctive features of probabilistic and statistical physics models in two dimensions is the emergence of conformal geometry as an underlying signature. It often governs the large-scale behaviour of these systems and provides a natural framework for describing the emergent structures of interacting particle systems. A prominent example is given by two-dimensional Coulomb gases \cite{Se24,Fo10,BF25}, particularly in the study of their free energy expansion. 

By definition, for a given external potential $Q:\C \to (-\infty, +\infty]$ satisfying suitable assumptions, the determinantal Coulomb gas in the plane (often referred to as a random normal matrix ensemble) is a system of particles $\bm{z}=\{z_j\}_{j=1}^N \subset \C$ whose joint probability density function is defined by 
\begin{equation}\label{eq_RNM}
    \ud \mathbb{P}_N(\bm{z}) = \frac{1}{\ZZ_{N}(Q)}\prod_{1\le j<k\le N}|z_j-z_k|^2 \prod_{j=1}^N e^{-NQ(z_j)}\ud A(z_j),
\end{equation}  
where $\ud A = \ud^2 z / \pi$ is the area measure on the complex plane. Here, the normalising constant $\ZZ_N(Q)$ is referred to as the \textit{partition function}. For this model, the equilibrium measure plays a central role as the object governing the conformal geometric structure of the system at the macroscopic scale. It characterises the large-scale limit of \eqref{eq_RNM}, and, by Frostman's theorem, it takes the form
\begin{align} \label{eq for Frostman}
    \ud\mu_Q = \Delta Q \cdot \mathbbm{1}_{K_Q}\,\ud A, \qquad (\Delta := \partial \bar{\partial}),
\end{align}
where $K_Q$ is a compact subset of the complex plane, known as the \textit{droplet}. The equilibrium measure can be characterised as the unique minimiser of the energy among probability measures 
\begin{align} \label{def of log energy}
    I_Q[\mu]:=\int_{\mathbb{C}^2}\log \frac{1}{|z- w|}\ud \mu(z)\ud \mu(w) + \int_\mathbb{C}Q(z) \ud \mu(z).
\end{align}

A fundamental problem in Coulomb gas theory is to understand the large-$N$ expansion of the free energy $\log \ZZ_N(Q)$, which is believed to take the form
\begin{equation} \label{def of free energy expansion general}
\log \ZZ_N(Q) \sim C_1 N^2 +C_2 N \log N + C_3 N + C_4 \log N+ C_5, 
\end{equation}
as $N \to \infty$. (Here, one assumes that the potential \(Q\) is regular in the sense that it has no logarithmic or jump-type singularities inside the droplet, and that the associated droplet is connected with a smooth boundary; see \cite{By25a} for a more detailed discussion.) Such an expansion plays a central role in understanding the behaviour of Coulomb gases.  
For instance, the leading-order term $C_1$ is essential to determine the law of large numbers for the empirical measure \cite{HM13,Joh98}. Moreover, to capture fluctuations of linear statistics supported in the bulk of the droplet, one needs to identify the expansion up to the term $C_3$ \cite{Se23}. 

For general Coulomb gases, the first three coefficients $C_1$, $C_2$, and $C_3$ were identified in \cite{LS17}; see also \cite{BBNY19,AS21,Se23}. These terms admit a potential-theoretic interpretation and are given by 
\begin{equation}
C_1= -I_Q[\mu_Q], \qquad C_2= \frac12, \qquad C_3= \frac{ \log(2\pi) }{2} -1-\frac12 \int_\C \log(\Delta Q) \ud \mu_Q. 
\end{equation}
Here, the coefficients $C_1, C_2,$ and $C_3$ represent, respectively, the macroscopic mean-field energy, a universal self-energy contribution at the microscopic scale, and a subleading term encoding entropy effects. Note that, in these first three terms, the geometric and topological structure of the equilibrium measure does not play a significant role. In contrast, a key feature of the remaining terms $C_4$ and $C_5$ is that they depend crucially on the underlying geometry and topology of the droplet. 

Over the past decades, several conjectures have been proposed for the terms $C_4$ and $C_5$. The coefficient $C_4$ is a topological term, conjectured in \cite{JMP94} to be given by 
\begin{equation}
C_4= \frac12 -\frac{\chi}{12},
\end{equation}
where $\chi$ denotes the Euler characteristic of the droplet; see also \cite{CFTW15,TF99}. 
On the other hand, the constant term $C_5$, often regarded as the most significant contribution in the physics literature—particularly from the perspective of conformal field theory \cite{ZW06}—is expected to be related to the spectral determinant of the droplet. While its general form is quite involved, in the case where $\Delta Q$ is constant so that the equilibrium measure is uniform on its support, it was conjectured that
\begin{equation}
C_5 = \chi \, \zeta'(-1) + \frac{\log(2\pi)}{2} -\frac12 \log \textup{det}_\zeta ( \Delta_{ \C \setminus K_Q } ),
\end{equation}
where $\zeta$ is the Riemann zeta function and $\textup{det}_\zeta$ denotes the regularised spectral determinant.  
We refer to \cite[Section 5.3]{BF25}, \cite[Section 9.3]{Se24} and introductions of \cite{BKS23,BSY25,ACC26,Rou25,By25a} for more on this conjecture. In the next section, we examine this conjecture in detail, replacing the spectral determinant with the Liouville action in the characterisation of $C_5$, as the latter admits a more transparent description.  

In recent years, the conjectured free energy expansion has been the subject of extensive study. A common starting point is the representation of the partition function as a structured determinant of the form 
\begin{equation} \label{ZN in terms of det}
\ZZ_N(Q)= N! \det \bigg[ \int_{\C} z^{j}\bar{z}^k   e^{-NQ(z)} \, \ud A(z) \bigg]_{j,k=0}^{N-1},
\end{equation}
which follows from the fact that \eqref{eq_RNM} forms a determinantal point process.
An effective way to analyse the structured determinant \eqref{ZN in terms of det} is to work on an orthogonal polynomial basis. More precisely, let $\{P_{j,N}\}_{j=0}^{N-1}$ be the family of monic planar orthogonal polynomials satisfying 
\begin{equation} \label{def of planar OP}
\int_\C P_{j,N}(z) \overline{P_{k,N}(z)} \, e^{-NQ(z)}\,\ud A(z) = h_{j,N} \,\delta_{j,k},\quad j,k=0,1,\ldots, N-1,
\end{equation} 
where $h_{j,N}$ is the squared norm of $P_{j,N}$.
With this choice of basis, it follows that 
\begin{equation} \label{ZN in terms of prod norms}
\ZZ_N(Q) = N! \prod_{j=0}^{N-1} h_{j,N}.
\end{equation}
This representation naturally brings tools from the theory of orthogonal polynomials into the analysis.

In recent studies of free energy expansions based on the representation \eqref{ZN in terms of det}, the main strategies can be broadly classified into three complementary approaches: \textbf{exact solvability}, \textbf{duality}, and \textbf{deformation of the free energy}.  
These approaches are implemented depending on the model and the objectives of the analysis, and together provide a flexible framework for studying the problem. We now describe each of them in more detail; see Table~\ref{Table_approaches} for a summary.

\begin{table}[h!]
 \begin{center}
\renewcommand{\arraystretch}{1.5}
\begin{tabular}{|c|c|}
\hline
\textbf{Approach} & \textbf{Key idea} \\
\hline
Exact solvability & Explicit and semi-explicit evaluations of integrals; cf. \cite{BF25,BKS23,BKSY25,ACC26} \\
\hline
Duality & Ensemble interrelations and alternative observables; cf. \cite{BSY25,BCMS25,Fo25}  \\
\hline
Deformation of the free energy & Variational analysis and the reference free energy; cf. \cite{BSY25,Rou25} \\
\hline
\end{tabular}
\end{center}
    \caption{Summary of complementary approaches}
    \label{Table_approaches}
\end{table}

\noindent \textbf{(A) Exact solvability.} The first approach, based on exact solvability, is particularly effective for many well-studied models in non-Hermitian random matrix theory, especially various extensions of the complex Ginibre ensemble \cite{BF25}. 

It is well known that the eigenvalues of the complex Ginibre matrix follow the law \eqref{eq_RNM} with the Gaussian potential $Q^{\rm g}(z):=|z|^2$. 
Several variants of the Ginibre ensemble include potentials of the form
\begin{equation} \label{def of Q induced elliptic}
Q^{ \rm i }(z):=|z|^2-2c \log|z|,\qquad Q^{ \rm e }(z):= \frac{|z|^2-\tau \re z^2}{1-\tau^2},
\end{equation}
where $c \ge 0$ and $\tau \in [0,1)$. These potentials arise from the induced and elliptic Ginibre ensembles, respectively; see \cite[Sections 2.3 and 2.4]{BF25}. Further examples include the truncated unitary ensemble and the spherical ensemble; see \cite[Sections 2.5 and 2.6]{BF25}.  
For all such exactly solvable models, the partition function can be expressed in terms of classical special functions—more precisely, in terms of the Barnes $G$-function \cite[Section 5.17]{NIST}. The large-$N$ expansion then follows from classical asymptotic analysis of these functions. This mechanism is closely analogous to Selberg integral-type evaluations in one dimension; see \cite{FW08}.

Such an approach has been systematically developed in recent works \cite{BKS23,BKSY25,ACC26} for radially symmetric potentials of the form $Q(z)=Q(|z|)$. In this setting, the matrix on the right-hand side of \eqref{ZN in terms of det} is already diagonal, and the integrals reduce to their radial components. As a result, the partition function can be analysed using the Laplace method combined with Euler--Maclaurin type expansions.  
Nevertheless, additional care is required, since the free energy expansion depends sensitively on the geometry of the droplet (which, in this case, is itself radially symmetric). In \cite{BKS23}, the cases $\chi=0,1$ (corresponding to the disc and the annulus) were treated. This was further extended in \cite{ACC26} to configurations where the droplet consists of concentric annuli, in which case non-trivial oscillatory terms appear as well. Moreover, in \cite{BKSY25}, the spherical geometry was considered, and the conjectural free energy expansion for $\chi=2$ was obtained.

\smallskip 

\noindent \textbf{(B) Duality.} The second approach is based on certain ensemble interrelations, often referred to as dualities in random matrix theory; see \cite{Fo25} for a comprehensive review. This perspective has its origins, in part, in the supersymmetry method; see e.g. \cite{Fyo18}.  
The central idea is that, for certain models, the associated free energy can be reformulated into alternative integral representations that are more amenable to analysis. 

A prominent example in this approach is the potential
\begin{equation} \label{def of Q GinUE with insertion}
Q_a^{\mathrm{i}}(z) = |z|^2 - 2c \log|z-a|,
\end{equation}
where $c, a \ge 0$. This model naturally arises in the study of moments of characteristic polynomials of the complex Ginibre ensemble. The associated droplet exhibits a topological phase transition \cite{BBLM15}: depending on the parameters $c$ and $a$, it may be either simply or doubly connected, with the critical regime characterised by the formation of a double point.

A key feature of this model is a duality relation showing that, up to an explicit multiplicative constant, the corresponding partition function can be identified with the distribution of the extremal eigenvalue of the Laguerre unitary ensemble. Exploiting this connection, the free energy expansion at criticality was derived in \cite[Proposition 2.5]{BSY25}, revealing the emergence of the Tracy--Widom distribution. 

This approach has since been extended in \cite{BCMS25,BFL25} to generalisations of the potential \eqref{def of Q GinUE with insertion} arising in the truncated unitary and induced spherical ensembles. In both cases, the dual model is given by the Jacobi unitary ensemble. Owing to the well-developed asymptotic theory of the associated Hankel determinants \cite{CG21}, one can again derive the free energy expansion to this order of precision. Furthermore, such duality relations have been extended and are now understood to connect these models with observables in integrable probability, such as the last passage time in geometric last passage percolation \cite{BCMS25}, in line with the philosophy of the seminal work \cite{Joh00}. 

\smallskip 

\noindent \textbf{(C) Deformation of the free energy.} The third approach, based on the deformation of the free energy, is one of the most classical and important, yet still actively developing, methods for analysing free energies and conformal invariants in statistical physics. It also constitutes the central framework of the present paper.

From a physical perspective, this approach is related to the Ward identities in conformal field theory, which capture the variations of conformal fields under the insertion of the stress energy tensor; see e.g. \cite[Lecture 5, Appendix 6]{KM13}. In particular, the deformation of the spectral determinant is governed by the Polyakov--Alvarez formula; see e.g. \cite[Proposition 2.3]{Dub09}. From a mathematical viewpoint, this method has been extensively developed, especially in the context of Hermitian random matrix theory; see e.g. \cite{CK15, Joh98, Kra07, DIZ97}. Related ideas also appear in the so-called transport method for general Coulomb gases; see \cite{Se24}.

The core idea is to compute the variation of the free energy along a suitable deformation, and then recover the free energy itself by integration. More precisely, the method consists of the following two steps:
\begin{itemize}
    \item[(i)] \textit{Identify a deformation for which the large-$N$ expansion of the free energy variation can be computed;}
    \smallskip
    \item[(ii)] \textit{Determine an exactly solvable reference free energy serving as the integration constant.}
\end{itemize}
In the present two-dimensional setting, where our goal is to derive the precise asymptotic expansion up to the constant term $C_5$, a prototypical example of this approach is again provided by the potential \eqref{def of Q GinUE with insertion}. In this case, the main idea in \cite{BSY25} is to perform the deformation with respect to the position parameter $a \ge 0$, so that the resulting variation can be identified with the fine asymptotic behaviour of the associated orthogonal polynomials.
The reference free energy is then obtained from a tractable model exploiting radial symmetry. The asymptotics of the corresponding planar orthogonal polynomials are analysed via the Riemann--Hilbert method, which in turn yields the free energy expansion; see \cite[Theorem 2.2]{BSY25}.

Such variational ideas have also been developed further in the recent work \cite{Rou25}. There, instead of the point charge insertion at a single point $a$, corresponding to the term $\log|z-a|$ in \eqref{def of Q GinUE with insertion}, one considers a screened version, namely a globally distributed potential obtained by integrating the point charge. In this framework, the variation induced by moving holes within the droplet---while preserving its topology---is analysed using correlation energy together with a mean-field approximation.

\medskip 

While the above three approaches have been primarily discussed in the context of free energy expansions in the regular setting~\eqref{def of free energy expansion general}, there is also a substantial body of work devoted to related problems in the presence of two-dimensional Fisher--Hartwig or hard-wall type singularities. Recent developments in the Fisher--Hartwig (logarithmic singularity) setting include \cite{DS22,WW19,By25a,DMMS25,BDHK25}, whereas hard-wall type constraints have been studied in \cite{AFLS25,ACCL24,ACCL23,Ch22,Ch23,No25,BP26,BC25,AL25,JV26}. In addition, we note that there exists another formulation of the free energy expansion, often referred to as the geometric Zabrodin--Wiegmann conjecture. In this setting, one considers ensembles on Riemann surfaces without boundary, which has also been extensively studied in the literature; see e.g. \cite{KMMW17,SY25,Bo25} and references therein.

\medskip

In this work, we extend the results of \cite{BSY25} to a class of models with more general quadratic potentials \eqref{potQ} below. This setting is both physically natural and of independent interest in non-Hermitian random matrix theory, and it exhibits a richer geometric structure than that considered in \cite{BSY25,BBLM15,KLY25} for the case $\tau=0$. 

In contrast to the case $\tau = 0$, the duality method is no longer applicable; cf. Remark~\ref{Rem_duality}. We therefore proceed by employing a deformation-based approach, suitably extended to accommodate the present setting. In particular, the deformation with respect to the position parameter used in \cite{BSY25} is no longer sufficient, and we introduce a new class of deformations adapted to the present model. 

From the perspective of asymptotic analysis, the deformation arguments in~\cite{BSY25} were carried out largely at a computational level, without an explicit identification of the underlying conformal geometry. By contrast, the present approach systematically tracks the geometric content of each variation, thereby uncovering the conformal structure governing the expansion. 
Furthermore, instead of relying on Riemann--Hilbert techniques, we make use of the recent general theory of planar orthogonal polynomials developed by Hedenmalm and Wennman~\cite{Hed24,HW21,HW24}, implemented here in a fully quantitative form; see Theorem~\ref{thm_F0F1}. Altogether, the method developed in the present work points toward a general framework relating free energy expansions, refined asymptotics of planar orthogonal polynomials, and conformally invariant geometric functionals.  

The main results are presented in detail in the next section.

\section{Main results and discussions}

In this section, we present our main results. In particular, Theorem~\ref{thm_ZW} provides the free energy expansion for the potential \eqref{potQ}, defining a Coulomb gas in a general quadratic field with a point charge.  
Subsection~\ref{Subsection_main results} introduces the model and the results in detail, followed by remarks on related work. Subsection~\ref{Subsection_main strategy} outlines the proof strategy, including the deformation of the free energy (Proposition~\ref{prop_key}), which we regard as a key result of independent interest.

As mentioned in the previous section, both the results and their proofs are fundamentally driven by conformal geometric computations. We begin by introducing several notions from conformal geometry that will play a central role in their formulation. 

We fix notation used throughout the paper. We write $\wh{\C} = \C \cup \{\infty\}$, $\D = \{ |w| < 1 \}$, $\D_e = \{ |w| > 1 \}$, and $\T = \{ |w| = 1 \}$.  
For a Jordan curve $\Gamma \subset \C$, we denote by $\interior(\Gamma)$ the bounded component and $\exterior(\Gamma)$ the unbounded component of its complement. We also use notation $D_e = \wh{\C} \sm \bar{D}$ for the unbounded component when a Jordan domain $D\subset \C$ is given. 

Let $D \subset \C$ be a Jordan domain. In what follows, we will make use of both the interior and exterior conformal maps associated with $D$. 
The \emph{normalised exterior conformal map} is the unique conformal map
\begin{equation} \label{def of psi exterior}
    \psi: \mathbb{D}_e \to D_e, \qquad \psi(\infty) = \infty, \qquad \psi'(\infty) > 0.
\end{equation}
On the other hand, a \emph{normalised interior conformal map} is a conformal map satisfying
\begin{equation} \label{def of psi interior}
    \psi: \mathbb{D} \to D, \qquad \psi'(0) > 0.
\end{equation}

The Liouville action is a fundamental functional in Liouville field theory and a central object in conformal geometry, where it arises naturally in the study of conformal maps and their geometric properties. In this work, we employ a version of the Liouville action adapted to our setting, defined as follows.

\begin{defn}{Liouville action}
For a normalised exterior conformal map $\psi:\D_e \to \C$, we define the \emph{Liouville action} of $\psi$ by 
\begin{equation}\label{def of Liouville action}
    \SS[\psi] := \int_{\D_e} \Big|\frac{\psi''(w)}{\psi'(w)}\Big|^2\ud A(w) - 4\log |\psi'(\infty)|.
\end{equation}
Similarly, for a normalised interior conformal map $\psi:\D \to \C$, we define 
\begin{equation}\label{def of Liouville action int}
    \SS[\psi] := \int_{\D} \Big|\frac{\psi''(w)}{\psi'(w)}\Big|^2\ud A(w) + 4\log |\psi'(0)|.
\end{equation}

Let $K \subset \C$ be a connected compact set whose boundary consists of finitely many Jordan curves. Let $\Omega_1, \ldots, \Omega_d$ denote the bounded components of $K^c$, and let $\Omega_\infty$ denote the unbounded component.  
We define the \emph{Liouville action} of $K$ by
\begin{equation}\label{def of Liouville action domain}
    \LL[K] := \SS[\psi_\infty]+ \sum_{j=1}^d \SS[\psi_j],
\end{equation}
where $\psi_j$ are normalised interior conformal maps of $\Omega_j$ and $\psi_\infty$ is the normalised exterior conformal map of $\Omega_\infty$. 
\end{defn}

By the Polyakov--Alvarez conformal anomaly formula, which describes the variation of the spectral determinant under conformal changes of the metric, the Liouville action naturally emerges as the governing functional. 
However, whereas the spectral determinant requires delicate regularisation, particularly on unbounded domains, the Liouville action admits a direct formulation in terms of uniformising conformal maps. We therefore state our results in terms of the Liouville action, albeit the spectral determinant is more common in the physics literature.

We also note that the Liouville action considered here is closely related to the universal Liouville action in the framework of Teichmüller geometry \cite{TT03,TT06}.
Its connection to the Löwner energy of Weil--Petersson class loops appears in~\cite{SW24, Wa19b}, which was also identified as the constant term of free energy expansion of determinantal Coulomb gases on Jordan curves~\cite{WZ22, CJ23,Joh22}; see also~\cite{JV26} for results on Coulomb gases on Jordan domains.

\subsection{Main results} \label{Subsection_main results}

In this work, we focus on the potential of the form 
\begin{equation}\label{potQ}
    Q(z) := \frac{1}{1-\tau^2}(|z|^2 -\tau \re z^2)  - 2c \log |z-a|,
\end{equation}
where $\tau \in [0,1)$, $ c \ge 0$ and $a \ge 0$. Note that when $c=0$, the potential reduces to $Q^{\mathrm{e}}$ in~\eqref{def of Q induced elliptic}, whereas when $\tau=0$, it reduces to the potential $Q_a^{\mathrm{i}}$ in~\eqref{def of Q GinUE with insertion}. 

Compared to the potential~\eqref{def of Q GinUE with insertion} studied in \cite{BSY25}, the potential \eqref{potQ} admits two complementary interpretations. From the viewpoint of potential theory and electrostatics \cite{BF25a}, it corresponds to replacing the radially symmetric term $x^2 + y^2$ (with $z = x + iy$) by a more general anisotropic quadratic form $\frac{x^2}{1+\tau} + \frac{y^2}{1-\tau}$. From the perspective of non-Hermitian random matrix theory \cite{BF25}, this potential arises naturally from the elliptic deformation of the complex Ginibre ensemble, where the parameter $\tau$ quantifies the degree of non-Hermiticity, and bridges one- and two-dimensional models.  

It follows from \eqref{eq for Frostman} that the equilibrium measure $\mu_Q$ is of the form
\begin{align}\label{eq:eqmeasure}
    \ud \mu_Q(z)  = \frac{1}{1-\tau^2}\mathbbm{1}_{K_Q}(z) \ud A(z),
\end{align}
where $K_Q$ denotes the associated droplet. The potential \eqref{potQ} belongs to the class of algebraic Hele-Shaw potentials, for which the theory of quadrature domains developed in \cite{LM16} provides an effective framework to analyse the geometry and regularity of the droplet; see also~\cite[Section 2]{BY25} for further background and \cite{MR25} for more recent development.  

In \cite{BY25}, a precise characterisation of the droplet $K_Q$ was obtained. In particular, it was shown in \cite[Theorem 1.1]{BY25} that the droplet exhibits exactly three possible topological phases: 
\begin{itemize}
    \item (Regime I) The droplet is doubly connected;
    \smallskip 
    \item (Regime II) The droplet is simply connected; 
    \smallskip 
    \item (Regime III) The droplet consists of two simply connected components.
\end{itemize}
Moreover, the phase diagram in the parameter space $(a,c,\tau)$, characterising the conditions under which each regime occurs, is determined explicitly. For completeness, we summarise these conditions in Appendix~\ref{Appendix_exact description}. See Figure~\ref{fig_phases} for an illustration of Regimes~I--III in the $(a,\tau)$-plane for a fixed value of $c$.

\begin{figure}[t]
    \centering

\begin{tikzpicture}[
    tick style/.style={-, semithick},
    arrow style/.style={->, very thick, red}
]
    \node (image) at (0,0) {\includegraphics[width=0.55\linewidth]{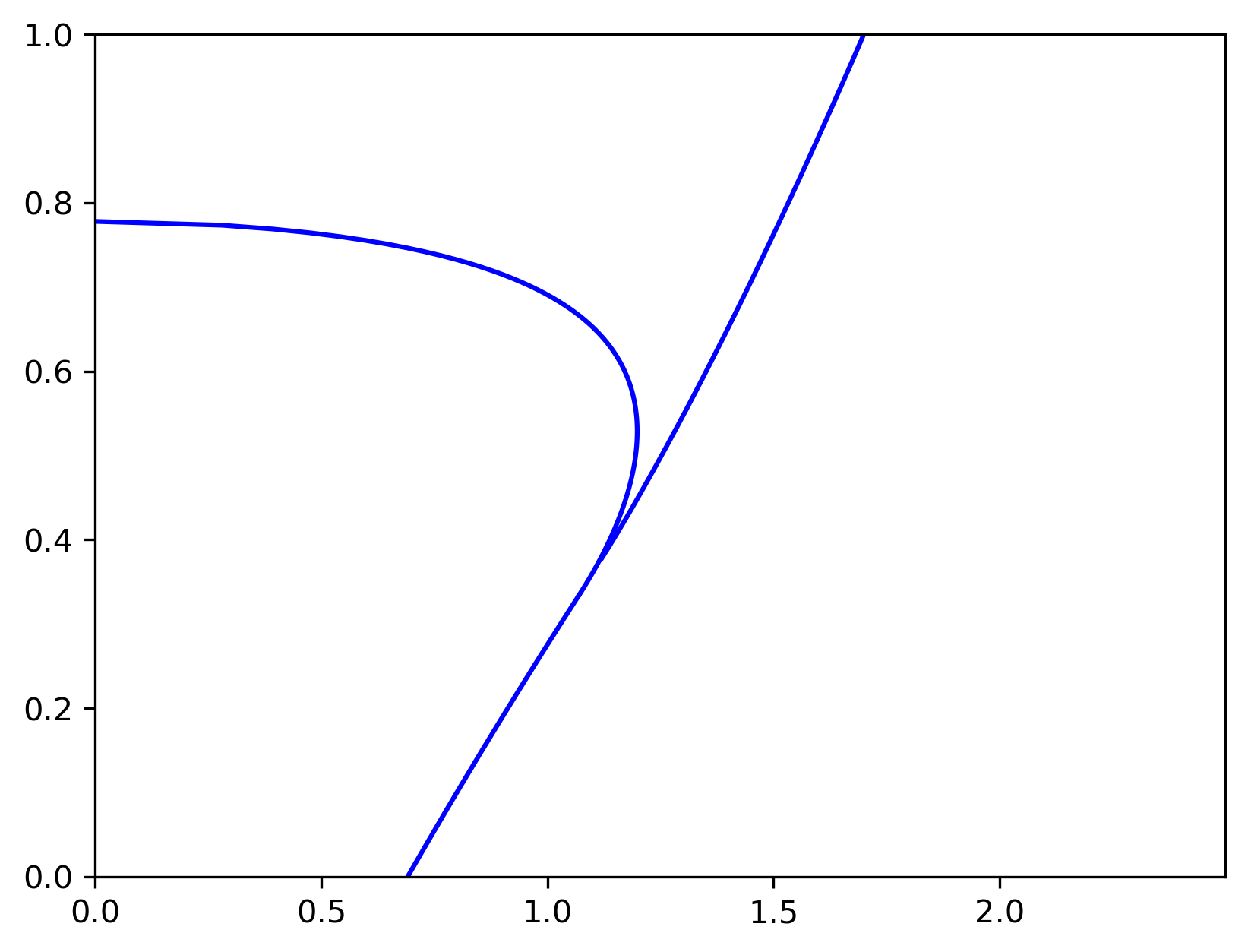}};

    \node at (4.48,-3.25) {$\infty$};

    \node at (-4.8, 0) [rotate = 90] {$\tau$};
    \node at (0.3, -3.7) {$a$};

    \node at (-3.3, 1.3) {\Large \textbf{I}};
    \node at (3.8, 2.8) {\Large \textbf{II}};
    \node at (-3.3, 2.8) {\Large \textbf{III} }; 

    \node at (-0.6, -0.19) {\large (A)};
    \node at (-3.6, -0.15) {\large (B)};
    \node at (-3.55, -2.7) {\large (C)};
    \node at (0.6, -0.19) {\large (D)};
    \node at (4.2, -0.15) {\large (E)};

    \draw [arrow style] (-0.6, -0.55) -- ( -3.93,-0.55);
    \draw [arrow style] (-3.95, -0.55) -- ( -3.95, -2.94);
    \draw [-, very thick, red] (0.6, -0.55) -- ( 3.5, -0.55);
    \draw [dotted, ->, very thick, red] (3.5, -0.55) -- (4.5, -0.55);

\end{tikzpicture}

    \caption{The plot depicts the phase diagram of three regimes---Regime~I (doubly connected), Regime~II (simply connected), and Regime~III (multi-component)---in the \((a,\tau)\)-plane for \(c = 1/7\). The arrows indicate the deformation paths used to compute the variation of the free energy from the generic points (A) and (D): in the doubly connected regime, (A)\(\to\)(B):=$(0,\tau)$ corresponds to variation in \(a\) and (B)\(\to\)(C):=$(0,0)$ to variation in \(\tau\), with (C) as the reference configuration for \(Q^{\mathrm{i}}\); in the simply connected regime, (D)$\to $ (E):=$(\infty,\tau)$ corresponds to variation in \(a\), yielding the reference configuration for \(Q^{\mathrm{e}}\). See also Figure~\ref{fig_deformation} for these deformations. }
    \label{fig_phases}
\end{figure}

In addition, in Regimes I and II, the explicit shape of the droplet is determined, which we now recall.

\begin{thm}[\textbf{Description of the droplet; cf. Theorem 1.2 in \cite{BY25}}] \label{Thm_droplet} Let $Q$ be given as~\eqref{potQ}. Then we have the following. 
\begin{itemize}
\item[\textup{(i)}] 
In \textup{Regime I}, the droplet is given by
\begin{equation} \label{def of KQ doubly}
 K_Q = \mathsf{E} \setminus \mathsf{D}, 
\end{equation}
where
    \begin{equation} \label{droplet_doubly connected}
        \mathsf{E}:= \bigg\{ (x,y)\in \R^2: \Big(\frac{x}{1+\tau}\Big)^2+\Big(\frac{y}{1-\tau}\Big)^2\le 1+c\bigg\}, \quad \mathsf{D} :=\bigg\{ (x,y)\in \R^2: (x-a)^2 +y^2 < c(1-\tau^2)\bigg\}.
    \end{equation}
\item[\textup{(ii)}] In \textup{Regime II}, the droplet is given by
\begin{equation} \label{def of KQ simply}
    K_Q = \overline{\mathrm{Int}\,\psi(\T)},
\end{equation}
where $\psi:\D_e \to \C$ is the rational map defined by
    \begin{align}\label{eq_simply conformal map}
    \psi(w) = R\Big(w + \frac{\tau}{w} - \frac{\lambda}{w-q}-\frac{\lambda}{q(1-\tau)}\Big). 
\end{align}
Here, $R>0$, $q \in (0,1)$, and $\lambda \in [0, \lambda_{\rm cri})$ (see Appendix~\ref{Appendix_exact description} for the definition of $\lambda_{\rm cri}$), and $(R,q,\lambda)$ is determined as the solution to a system of coupled algebraic equations \eqref{def of coupled equations}. 
\end{itemize}
\end{thm}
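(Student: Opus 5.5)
The plan is to verify directly that the proposed sets satisfy the variational characterisation of the droplet. For the quadratic potential with a point charge, the equilibrium measure is the unique probability measure $\mu$ for which there is a constant $\ell$ with
\begin{equation*}
U^\mu + \tfrac12 Q = \ell \ \text{ on } \operatorname{supp}\mu, \qquad U^\mu + \tfrac12 Q \ge \ell \ \text{ q.e. on } \C,
\end{equation*}
where $U^\mu(z)=\int\log\frac{1}{|z-w|}\,\ud\mu(w)$. Since $\Delta Q = 1/(1-\tau^2)$ away from $a$, any candidate droplet $K$ comes with the uniform density \eqref{eq:eqmeasure}. The argument therefore reduces to three checks: the mass equals $1$, the Euler--Lagrange equality holds on $K$, and the inequality holds off $K$. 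We can equivalently phrase the problem in terms of quadrature domains or the Schwarz function, following \cite{LM16}.

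\textbf{Regime I.} The mass check is immediate: $|\mathsf E| = (1-\tau^2)(1+c)$ and $|\mathsf D| = c(1-\tau^2)$, so the normalised mass is $(1+c)-c=1$. Write $\mu$ as the uniform measure on $\mathsf E$ minus the uniform measure on $\mathsf D$.
\begin{itemize}
\item \emph{Inside $\mathsf E$.} The classical potential of a uniform ellipse with foci on the real axis is quadratic, and with semi-axes $(1+\tau)\sqrt{1+c}$ and $(1-\tau)\sqrt{1+c}$ it cancels $\frac{1}{2(1-\tau^2)}(|z|^2-\tau\re z^2)$ up to a constant.
\item \emph{Outside $\mathsf D$.} The disc $\mathsf D$, of mass $c$, has potential $-c\log|z-a|$ there, which cancels the point charge term $-c\log|z-a|$ in $\tfrac12 Q$.
\end{itemize}
Together these give the equality on $K_Q=\mathsf E\setminus\mathsf D$. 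For the inequality, the total function $U^\mu+\frac12 Q-\ell$ vanishes with zero gradient on $\partial K_Q$. In $\mathsf D$ it is superharmonic-minus-singular, more precisely $\tfrac{1}{2(1-\tau^2)}(|z-a|^2 - r^2) - c\log(|z-a|/r)$ plus a harmonic correction from the ellipse, and this is $\ge 0$. Outside $\mathsf E$ it is subharmonic-type with vanishing boundary data and correct growth, hence positive. The only subtlety, and the defining condition of Regime I, is that $\bar{\mathsf D}\subset\operatorname{int}\mathsf E$. I would take this containment from the phase description in Appendix~\ref{Appendix_exact description}.

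\textbf{Regime II.} Here I would use the Schwarz function. On $\partial K$ we need $\bar z = S(z)$, and this $S$ must extend meromorphically to $K$ with singularities dictated by $\partial Q$. Equality of $\partial(U^\mu+\tfrac12Q)$ on $K$ gives
\begin{equation*}
\frac{\bar z-\tau z}{1-\tau^2}-\frac{c}{z-a} = \frac{1}{1-\tau^2}\,\frac{1}{\pi}\int_K\frac{\ud^2 w}{z-w}.
\end{equation*}
The Cauchy transform inside is expressed through $S$. So $S(z)-\tau z - c(1-\tau^2)/(z-a)$ must be analytic in $\operatorname{int}K$, and $S(z)\sim(1-\tau^2)/z$ at infinity on the exterior side. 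Pulling back via $\psi$, we get $S(\psi(w))=\overline{\psi(1/\bar w)}$. Consequently $\psi$ must be rational with poles only at $\infty$, at $0$ (giving the $\tau/w$ term), and at one point $q$ with $\psi(1/q)=a$ (giving the residue $c$). This yields the ansatz \eqref{eq_simply conformal map}, with the constant term fixed by the same symmetry.

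The residue conditions at $\infty$, $0$, $q$ then give the coupled algebraic equations \eqref{def of coupled equations} for $(R,q,\lambda)$:
\begin{itemize}
\item the mass normalisation, i.e. the leading coefficient of $S$ at infinity;
\item the condition $\psi(1/q)=a$;
\item the condition that the residue equals $c(1-\tau^2)$.
\end{itemize}

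\textbf{Main obstacle.} The hard part is global. We must show that the solution with $\lambda<\lambda_{\rm cri}$ makes $\psi$ univalent on $\D_e$, and that the inequality holds outside $K$, including at $a$, which must lie outside $K$. I would attack univalence through the critical points of $\psi$: they stay inside $\D$ exactly when $\lambda<\lambda_{\rm cri}$. The exterior inequality I would handle by a continuity argument in $a$ from the known limits, namely $a\to\infty$ (the ellipse) and the Regime I boundary. Along this path the droplet can change only when a double point forms, and that double point is precisely the regime boundary. This continuity step is where I expect the most care to be needed.
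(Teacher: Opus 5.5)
The paper does not prove this statement. It quotes \cite[Theorem~1.2]{BY25}, which works in the Lee--Makarov quadrature-domain framework (Theorem~\ref{thm_local droplet}), with the phase boundaries recorded in Appendix~\ref{Appendix_exact description}.

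\textbf{Regime I.} Your guess-and-verify argument is the standard one and essentially works, but two justifications need repair.
\begin{itemize}
\item Inside $\mathsf D\subset\mathsf E$ there is no ``harmonic correction from the ellipse'': the ellipse part is exactly constant there. What remains is the radial function $\frac{s^2-r^2}{2(1-\tau^2)}-c\log(s/r)$, with $s=|z-a|$ and $r^2=c(1-\tau^2)$, which is convex in $s$ with a double zero at $s=r$.
\item Outside $\mathsf E$, ``subharmonic with vanishing boundary data and correct growth, hence positive'' is not a valid inference, since the maximum principle bounds subharmonic functions from above only. Instead, note that outside $\mathsf D$ the disc contribution $-U^{\mu_{\mathsf D}}=c\log|z-a|$ cancels the charge term identically. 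The exterior inequality is then exactly the elliptic law for $Q^{\rm e}(\cdot/\sqrt{1+c})$, whose droplet is $\mathsf E$.
\end{itemize}
The Appendix inequalities do encode $\mathsf D\subset\mathsf E$: the binding tangency is at the vertex or off-axis, switching at $a=2\sqrt{2\tau(1+\tau)/(3+\tau^2)}$.

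\textbf{Regime II, local part.} Your derivation is right in substance but garbled in two places. First, it is $S(z)-\tau z-c(1-\tau^2)/(z-a)$ that is holomorphic on the exterior $K^c$ and behaves like $(1-\tau^2)/z$; the Schwarz function itself is $S(z)=\tau z+(1+c)(1-\tau^2)/z+\cdots$. Second, the constant term of $\psi$ is not fixed by symmetry. It is fixed by requiring $\psi(1/w)=\tau\psi(w)+O(1/w)$, which forces the constant to be $-R\lambda/(q(1-\tau))$.

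\textbf{Regime II, global part: the genuine gap.} You misidentify what $\lambda_{\rm cri}$ measures. Having all critical points of $\psi$ in $\D$ is an algebraic condition on $(q,\lambda,\tau)$. It would also give only local, not global, univalence. By contrast, $\lambda_{\rm cri}$ is the zero of the transcendental function $H$ in \eqref{def of H(a,kappa)}, and it has a different meaning. Indeed,
\[
\psi(w)-\psi(1/w)=R\,\bigl(w-w^{-1}\bigr)\Bigl((1-\tau)+\frac{\lambda w}{(w-q)(1-qw)}\Bigr),
\]
so $w_*$ in \eqref{def of zstar} is the root $w_*>1/q$ of $\psi(w)=\psi(1/w)$. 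Since $S(\psi(w))=\psi(1/w)$ for real $w$, the point $\psi(w_*)$ is the real critical point in $K^c$ of the function \eqref{def of R}, which equals $2U^{\mu_Q}+Q-2\ell$ there. This point lies to the right of the charge $a=\psi(1/q)$.

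By \eqref{def of coupled equations}, the logarithmic coefficients in $H$ are exactly $c(1-\tau^2)/R^2$ and $(1+c)(1-\tau^2)/R^2$. In fact $H$ is $(1-\tau^2)/R^2$ times that function evaluated at $\psi(w_*)$; I checked this for $\lambda=0$ and in a sample case with $\lambda>0$. Hence $\lambda<\lambda_{\rm cri}$ is the exterior inequality at $\psi(w_*)$. The II/III boundary is the \emph{nucleation} of a second component beyond $a$, and $\psi$ does not degenerate there.

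The double-point degeneration is instead the I/II interface. It corresponds to the limit $q\to1$ with $\lambda/(1-q^2)$ fixed, where the pole of $\psi$ reaches $\T$. In that limit \eqref{def of coupled equations} reduce to $R=\sqrt{1+c}$ and $a=(1+\tau)\sqrt{1+c}-\sqrt{c(1-\tau^2)}$, which is internal tangency of $\mathsf D$ and $\mathsf E$ at the vertex. Your continuity argument assumes the droplet can change only when a double point forms, so it misses exactly the mechanism that bounds Regime~II.

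What the global step actually requires is:
\begin{itemize}
\item[(a)] global univalence of $\psi$ on $\D_e$ over the whole range $q\in(0,1)$, $\lambda\in[0,\lambda_{\rm cri})$, meaning injectivity of $\psi|_{\T}$ and not just $\psi'\neq0$;
\item[(b)] positivity of \eqref{def of R} on $K^c\setminus\{a\}$. A first zero in $K^c$ along a deformation would be a critical point, i.e.\ a solution of $\psi(w)=\overline{\psi(1/\bar w)}$ in $\D_e$. So you must locate these, show that $\psi(w_*)$ is the only one that can produce a zero, and show the value there is positive precisely when $\lambda<\lambda_{\rm cri}$.
\end{itemize}
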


Recall that the partition function is defined by 
\begin{equation} \label{def of ZN N-fold intgral}
\ZZ_N(Q):= \int_{ \C^N } \prod_{1\le j<k\le N}|z_j-z_k|^2 \prod_{j=1}^N e^{-NQ(z_j)}\ud A(z_j). 
\end{equation}
Now we are ready to present our main result.

\begin{thm}[\textbf{Free energy expansion}] \label{thm_ZW}
Let $Q$ be given as~\eqref{potQ}. Suppose that the associated droplet $K_Q$ is either simply or doubly connected, i.e. $(a,c,\tau)$ falls within \textup{Regime I} or \textup{Regime II}. Then as $N \to \infty$, we have 
\begin{align}
\begin{split} \label{free energy expasion in main Thm}
\log \ZZ_N(Q) &= -I_Q[\mu_Q]N^2+ \frac12 N \log N + \bigg( \frac{ \log(2\pi) }{2} -1-\frac12 \int_\C \log(\Delta Q) \ud \mu_Q \bigg)N 
\\
&\quad +\frac{6-\chi}{12} \log N + \chi \, \zeta'(-1)+\frac{\log(2\pi)}{2} + \frac{1}{24} \mathcal{L}[K_Q]  -\frac{1}{24\pi}\int_{\partial K_Q} \log (\Delta Q) \kappa \ud s + O( N^{-1} ), 
\end{split}
\end{align}
where $\chi$ is the Euler characteristic of $K_Q$, $\mathcal{L}[K_Q]$ is the Liouville action~\eqref{def of Liouville action domain}, and $\kappa$ is the curvature of $\partial K_Q$.  
\end{thm}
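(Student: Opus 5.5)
The plan is the deformation approach sketched in the introduction and in Figure~\ref{fig_phases}. We write $F_N(a,\tau):=\log\ZZ_N(Q)$, with $c$ fixed. The proof has three ingredients.

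\textbf{Step 1: variational identities.} We express $\partial_a F_N$ and $\partial_\tau F_N$ as expectations under \eqref{eq_RNM}. For the position parameter,
\begin{equation*}
\partial_a F_N = -N\,\mathbb{E}\sum_{j=1}^N \partial_a Q(z_j) = -2cN\,\mathbb{E}\sum_{j=1}^N \re\frac{1}{z_j-a}.
\end{equation*}
Since $\sum_j 1/(z_j-a)$ is the logarithmic derivative of the characteristic polynomial, this should be expressible through the orthogonal polynomial data. Concretely, we use \eqref{ZN in terms of prod norms} and differentiate $\log h_{j,N}$. Alternatively, an integration by parts (Ward identity) in the $z$-variable lets us rewrite the $a$-derivative in terms of $P_{N,N}$ and $h_{N,N}$ evaluated near the singularity.

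For the anisotropy parameter, $\partial_\tau F_N$ is $-N$ times the expectation of $\sum_j\partial_\tau Q(z_j)$, where $\partial_\tau Q$ is a quadratic in $z,\bar z$. This is a linear statistic of a polynomial, hence a finite combination of inner products of the $P_{j,N}$. The form $z^2$ couples $P_j$ only to $P_{j\pm2}$ via the recurrence-type structure, so the expectation reduces to sums of ratios $h_{j+2,N}/h_{j,N}$ and subleading coefficients of the $P_{j,N}$.

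\textbf{Step 2: refined asymptotics.} We feed these into the quantitative Hedenmalm--Wennman expansion (Theorem~\ref{thm_F0F1}). This gives
\begin{equation*}
P_{j,N}\approx \phi_s' \,\phi_s^{j}\, e^{N\mathcal{Q}_s}\bigl(F_0+N^{-1}F_1+\cdots\bigr), \qquad s=j/N,
\end{equation*}
uniformly for $j/N$ in a neighbourhood of $1$, where $\phi_s$ is the exterior conformal map of the droplet at mass $s$. This yields $h_{j,N}$ up to $O(N^{-2})$ relative error. The terms with $j/N$ away from $1$ belong to smaller droplets of the same family, which lie in the same regime. We then sum over $j$ with Euler--Maclaurin, which produces the $N^2,N\log N,N,\log N$ and constant contributions. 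All expressions are written in terms of the explicit maps: in Regime~II this is the rational $\psi$ of \eqref{eq_simply conformal map}; in Regime~I it is the ellipse map together with the disc map for $\mathsf D$ from \eqref{droplet_doubly connected}.

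\textbf{Step 3: integration against reference models.} In Regime~I we integrate in $a$ from (A) down to $a=0$, and then in $\tau$ down to $\tau=0$. The endpoint is $Q^{\rm i}$, whose partition function is a product of Gamma functions, so Barnes $G$ asymptotics give the constant $\chi\zeta'(-1)$ with $\chi=0$. In Regime~II we send $a\to\infty$. After the shift $z\mapsto z$ the point charge contributes an explicit $-2cN^2\log a$-type term, and the model tends to $Q^{\rm e}$, whose partition function is $N!\prod_j j!/N^j$ times $(1-\tau^2)^{N/2}$-type factors; this gives $\chi=1$. Along each path the regime must not change, which Figure~\ref{fig_phases} confirms. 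Finally we must check that the integrated variations match the $a$- and $\tau$-derivatives of the right-hand side of \eqref{free energy expasion in main Thm}. For the $C_1$ and $C_3$ terms this is potential theory. For the constant term it requires computing $\partial_a$ and $\partial_\tau$ of $\LL[K_Q]$ and of the curvature integral; we would do this via Hadamard-type variation formulas for the Liouville action under the explicit one-parameter families of conformal maps.

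\textbf{Main obstacle.} We expect the hardest step to be the $O(1)$ term of the variation, together with its identification with $\tfrac1{24}\partial(\LL[K_Q])$. This requires the second correction $F_1$ of the orthogonal polynomials, uniformly in $j$, including control near the smallest $j$ where the droplet family may change topology. We would first verify the identity in Regime~II by direct residue computations with the rational $\psi$, since all the integrals $\int|\psi''/\psi'|^2$ are explicit sums of residues there. The $O(N^{-1})$ error also requires uniformity of Theorem~\ref{thm_F0F1} along the whole path, including the limit $a\to\infty$.
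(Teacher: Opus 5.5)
You have the right integration paths and reference models in Step~3, up to a sign: the renormalisation contributes $+2cN^2\log a$. The proof breaks in Step~2. Whether you sum $\log h_{j,N}$ directly or sum the $j$-dependent terms from your Step~1, you need asymptotics of the degree-$j$ polynomials for \emph{all} $0\le j<N$, and these are not available. Theorem~\ref{thm_Hedenmalm Wennman} (with Remark~\ref{rem HW for M=tN}) concerns $j=tN$ with $t>0$ fixed. It says nothing about $j=o(N)$, and those degrees contribute at the constant order. More seriously, your claim that the mass-$t$ droplets ($t=j/N<1$) lie in the same regime is false. By the scaling \eqref{eq_P scaling} they correspond to parameters $(a/\sqrt t,\,c/t,\,\tau)$. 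Already for $\tau=0$ and $a>0$ in Regime~I, the mass-$t$ droplet is $\{|z|\le\sqrt{t+c}\}\setminus\{|z-a|<\sqrt c\}$ only while $t\ge a^2+2a\sqrt c$. Below that threshold the hole breaks through the outer boundary and the droplet becomes simply connected, and at the threshold you are in the Painlev\'e~II/Tracy--Widom critical regime. For $\tau>0$ and $a=0$, the hole pierces the shrinking ellipse once $t<2c\tau/(1-\tau)$. The droplet then splits into two components (Regime~III), where the Hedenmalm--Wennman theory does not apply at all. So any Euler--Maclaurin summation over $j$ runs through critical and multi-component degrees. This is not a boundary effect ``near the smallest $j$''; it is why your approach cannot reach the $\log N$ and constant terms.

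The missing idea is that the variations involve only the top-degree polynomials, via telescoping (Proposition~\ref{prop_key}). For $a$, translate $z_j\mapsto z_j+a$ so that $a$ moves into the Gaussian part. Equivalently, use the translation Ward identity $\mathbb{E}\sum_j\partial Q(z_j)=0$, which turns your $\sum_j 1/(z_j-a)$ into the polynomial statistic $\sum_j(\bar z_j-\tau z_j)$. Working ``near the singularity'' is useless here, since in Regime~I the point $a$ lies in the hole, beyond the reach of Theorem~\ref{thm_Hedenmalm Wennman}. Since $\int z|P_{n,N}|^2e^{-NQ}\ud A=(\AA_{n,N}-\AA_{n+1,N})h_{n,N}$, the sum over $n<N$ telescopes to $\frac{\ud}{\ud a}\log\ZZ_N=\frac{2N}{1+\tau}\AA_{N,N}$. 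For $\tau$, the $|z|^2$ part of $\partial_\tau Q$ cannot be reduced to coefficients: $\|zP_{j,N}\|^2$ sees all lower projections, not ratios $h_{j+2,N}/h_{j,N}$. One first rescales $z\mapsto z\sqrt{1-\tau^2}$ so that $\tau$ only multiplies $\re z^2$ and the moved charge. The $z^2$-statistic then telescopes to $\BB_{N,N}+\BB_{N+1,N}-\AA_{N,N}\AA_{N+1,N}$. After this, only $P_{N,N}$ and $P_{N+1,N}$ are needed, and at these degrees the droplet is in the given regime. In Regime~II the constant comes from $[1/z](F_1/F_0)$ in Theorem~\ref{thm_F0F1}. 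This is matched with $\frac{1+\tau}{48}\frac{\ud}{\ud a}\SS[\psi]$ using the $2$D Toda integrability relations (which give the normal velocity of the $a$-deformation explicitly) together with Proposition~\ref{prop_var Liouville}. In Regime~I you need neither $F_1$ nor the disc map of $\mathsf D$. The expansion only sees the outer ellipse, and hole-filling (Proposition~\ref{prop_multiply connected}) reduces $P_{N,N}$ near infinity to the planar Hermite polynomials of the filled ellipse up to $O(N^{-\infty})$. Hence $\AA$ and $\BB$ are explicit (Proposition~\ref{Prop_OP coefficients doubly}), the variations have no $O(1)$ part, and the constant term is inherited from $Q^{\rm i}$.
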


\begin{rem}[Comparison with the $\tau=0$ case in \cite{BSY25}] Note that Theorem~\ref{thm_ZW} recovers~\cite[Theorem 2.2]{BSY25} in the case $\tau=0$.
In contrast to \cite{BSY25}, where the results are presented in an explicit, formula-driven manner, our approach is formulated in terms of conformal-geometric functionals. In particular, it does not rely heavily on explicit expressions for the droplet or related quantities, marking a conceptual shift and rendering the method amenable to extensions, for instance, to algebraic Hele-Shaw potentials. 
\end{rem}

\begin{rem}[Explicit formulas of the coefficients]\label{rem_explicit Liouville}
In \eqref{free energy expasion in main Thm}, all coefficients can be made explicit in terms of the parameters $a,c,\tau$ defining the potential \eqref{potQ}. 
More precisely, the associated energy $I_Q[\mu_Q]$ was computed in \cite[Theorem 1.2]{BY25}; see also Proposition~\ref{Prop_energy explicit} below. Moreover, for the $O(N)$ term, since $\mu_Q$ is a probability measure and $\Delta Q = 1/(1-\tau^2)$, we have
\begin{equation}\label{eq_explicit entropy}
    -\frac{1}{2} \int_{\C} \log(\Delta Q)\, \ud \mu_Q 
    = \frac{1}{2} \log(1-\tau^2).
\end{equation} 
Note also that by definition, 
\begin{equation}
\chi = \begin{cases}
0  &\textup{for Regime I}, 
\smallskip \\
1 & \textup{for Regime II}. 
\end{cases}
\end{equation} 
Furthermore, it is straightforward to compute the Liouville action; cf. \eqref{eq_rational action}. In the doubly connected regime,
\begin{equation} \label{Liouville action explicit doubly}
  \frac{1}{24} \mathcal{L}[K_Q] = \frac{1}{12}\log\Big(\frac{c}{1+c}\Big), 
\end{equation}  
while in the simply connected regime,
    \begin{align}
    \begin{split} \label{Liouville action explicit simply}
    \frac{1}{24} \mathcal{L}[K_Q] &=-\frac{1}{6}\log \Big(R(1-q^2)\Big)-\frac{1}{24}\log (\lambda+ (1-\tau)(1-q)^2) 
    - \frac{1}{24}\log(\lambda+(1-\tau)(1+q)^2)\\
    &\quad +\frac{1}{6}\log\Big(q^2\lambda+ (1-q^2)^2(1-\tau q^2)\Big) -\frac{1}{12}\log \Big((1+\tau q^2)^2\lambda -(1+\tau)(1-q^2)(1-\tau q^2)^2\Big).
        \end{split}
    \end{align} 
Here, the parameters are defined in Theorem~\ref{Thm_droplet} (ii). 
On the other hand, by the Gauss--Bonnet theorem, the last term simplifies to
\begin{equation} \label{evaluation of boundary terms kappa}
    -\frac{1}{24\pi}\int_{\partial K_Q} \log (\Delta Q) \kappa \ud s = -\frac{\chi}{12} \log (\Delta Q) = \begin{cases}
        0 &\text{for Regime I,}\\
        \ds \frac{1}{12}\log (1-\tau^2) &\text{for Regime II.}
    \end{cases}
\end{equation} 
\end{rem}

\begin{rem}[Full order expansion] \label{Rem_full order expansion} The error term $O(N^{-1})$ in \eqref{free energy expasion in main Thm} is indeed optimal. Furthermore, in the doubly connected case, the simple form of the droplet \eqref{def of KQ doubly} allows one to derive a full asymptotic expansion. 
More precisely, in this case, the error term admits the expansion: for any positive integer $k_0$,
\begin{equation} \label{doubly error}
\sum_{k=1}^{k_0} N^{-k}\EE_k+O(N^{-k_0-1}), \qquad     \EE_k = 
        \begin{cases}
            \ds \frac{B_{k+2}}{k(k+1)}\Big(\frac{1}{(1+c)^k}-\frac{1}{c^k}\Big), & k:\text{even},
            \smallskip 
            \\
            \ds \frac{B_{k+1}}{k(k+1)}, &k:\text{odd}.
        \end{cases}
\end{equation} 
Here, $B_k$ denotes the $k$-th Bernoulli number defined by 
\begin{equation} \label{def of Bernoulli number}
\frac{t}{e^t-1} = \sum_{k=0}^\infty B_k \frac{t^k}{k!}.
\end{equation}
\end{rem}

\begin{rem}[Moments of characteristic polynomials]
While we present Theorem~\ref{thm_ZW} in the framework of the free energy expansion of two-dimensional Coulomb gases, it admits a natural interpretation in non-Hermitian random matrix theory in terms of moments of characteristic polynomials. This perspective is of independent interest and has attracted considerable attention due to its various connections, for instance, with Gaussian multiplicative chaos \cite{Lam20}. 
In this context, one distinguishes between regimes where the powers are fixed and those where they scale proportionally with $N$. Recent works include \cite{WW19,By25a,DS22} for the Ginibre ensemble, \cite{BCMS25, BFL25, DMMS25} for the truncated unitary and spherical ensembles, and \cite{BDHK25} for general determinantal two-dimensional Coulomb gas  models with finite moments.

To be more precise, let $X$ be the elliptic Ginibre ensemble with non-Hermiticity parameter $\tau \in [0,1)$; see \cite[Section 2.3]{BF25}. As mentioned earlier, the eigenvalues of $X$ follow the distribution \eqref{eq_RNM} with potential $Q^{\rm e}$ defined in \eqref{def of Q induced elliptic}.  By definition, the moment of the characteristic polynomial of $X$ is related to the partition functions via
\begin{equation} \label{moment in terms of Z ratio}
\mathbb{E} \Big[ \,  \Big|\det (X-a) \Big|^{2cN} \Big]  = \frac{ \ZZ_N(Q) }{ \ZZ_N(Q^{ \rm e }) }. 
\end{equation}
Combining the above, we see that Theorem~\ref{thm_ZW} yields, as an immediate corollary, the large-$N$ expansion of the moments of the characteristic polynomial of the elliptic Ginibre ensemble; see also~\cite[Corollary 1.3]{BY25} for such formulation deriving the leading order term.

As noted above, while it is natural in the free energy expansion framework for the exponent in the left-hand side of \eqref{moment in terms of Z ratio} to be of order $O(N)$, the case of a fixed exponent is also of interest in probabilistic applications. In our setting, this corresponds to replacing $c$ by $c/N$ in the potential \eqref{potQ}, introducing an $N$-dependence. 
Such an expansion was recently obtained in \cite{BDHK25} for general genuinely two-dimensional potentials. The methods developed here, in particular Proposition~\ref{prop_key}, are also applicable in this regime, and we plan to investigate this problem further, especially in the weak non-Hermiticity regime $\tau \to 1$ under an appropriate scaling that interpolates between two- and one-dimensional behaviours.
\end{rem}

\begin{rem}[Multi-component regime] In Theorem~\ref{thm_ZW}, we focus on the case where the droplet is either simply or doubly connected. Nevertheless, as previously mentioned, there is a third regime (Regime~III) in which the droplet consists of two disjoint simply connected components. 

In Hermitian random matrix theory, such a situation where the limiting spectral measure is supported on disconnected sets (the so-called \textit{multi-cut regime}) and the associated free energy expansion have been extensively studied; see e.g. \cite{CFWW25,CGM15,BG24} and references therein. In that setting, the constant term in the free energy expansion exhibits an oscillatory contribution described in terms of the Riemann theta function, reflecting the underlying quasi-periodic behaviour of the particle system.

By contrast, in the two-dimensional Coulomb gas setting, an analogous oscillatory structure in the free energy expansion remains largely open, particularly beyond radially symmetric situations \cite{ACC26}. In fact, even the precise form of the conjectural expansion has not yet been established; we refer to \cite{By25a} for further discussion.
In particular, Regime~III falls outside the scope of the present work. For instance, the main approach of this paper—based on the general theory of planar orthogonal polynomials developed in \cite{HW21,Hed24}—does not apply in this setting. Instead, analysing this regime appears to require the development of a suitable Riemann--Hilbert analysis, which we leave for future work. 

As a related side remark, we note that the inapplicability of the results in \cite{HW21,Hed24} to multi-component ensembles is the main reason why the edge universality for determinantal Coulomb gases established in \cite{HW21} has thus far been confined to connected droplet geometries. Nevertheless, this universality has recently been extended to the multi-component regime in \cite{CW25} by means of an alternative approach that does not rely on orthogonal polynomials. 
\end{rem}

\begin{rem}[Duality relation for the elliptic Ginibre matrix] \label{Rem_duality} In the previous section, we noted that one of the key tools in recent studies of free energy expansions is the use of duality relations. However, the type of duality that arises from supersymmetric methods, particularly those based on Grassmann integration, appears to be difficult to exploit in the present model. 

Let us make this more precise. Recall that \(X\) denotes an elliptic Ginibre matrix of size \(N\) with non-Hermiticity parameter \(\tau \in [0,1]\), and suppose that \(cN\) is an integer. Then the associated duality relation is given by (see \cite[Remark 5.1]{Fo25}) 
\begin{align} \label{duality for eGinUE}
\mathbb{E} \Big[ \,  \Big|\det (X-a) \Big|^{2cN} \Big]  \propto \bigg \langle \det \begin{pmatrix}
- i a I + \sqrt{2\tau} A & C
\smallskip 
\\
C^* & -ia I+\sqrt{2\tau}B 
\end{pmatrix}^N \bigg \rangle_{ \substack{  A,B \in {\rm GUE}_{cN} \\ C \in {\rm GinUE}_{cN} }  }. 
\end{align}
Here \(A\) and \(B\) are independent GUE matrices of size \(cN\), and \(C\) is an independent Ginibre matrix of the same size; the expectation on the right-hand side is taken over all three ensembles. 

Observe that when \(\tau = 0\) (corresponding to the setting studied in \cite{BSY25}), the right-hand side of \eqref{duality for eGinUE} simplifies considerably, as the dependence on \(A\) and \(B\) disappears. Moreover, since \(C C^*\) is distributed according to the Laguerre unitary ensemble, the right-hand side can be reformulated in terms of a probability involving the smallest eigenvalue. This observation played a crucial role in \cite{BSY25}, as well as in the subsequent works \cite{BFL25,BCMS25}, where the left-hand side of \eqref{duality for eGinUE} is replaced by truncated unitary or spherical ensembles, which can likewise be related to extremal eigenvalue statistics of the Jacobi unitary ensemble.

In contrast to these earlier works, the right-hand side of \eqref{duality for eGinUE} in the present setting involves three random matrix ensembles, for which a systematic asymptotic theory is currently unavailable. This constitutes the main reason why deformation techniques provide the most effective approach in our analysis. On the other hand, our results yield the asymptotic behaviour of the right-hand side of \eqref{duality for eGinUE}, which may be of independent interest.
\end{rem}

\begin{rem}[Critical asymptotics and Tracy-Widom distribution] As is evident from the phase diagram in Figure~\ref{fig_phases}, there exists a critical line separating Regimes~I and II. Geometrically, this corresponds to the situation in \eqref{droplet_doubly connected} where the inner disc meets the outer ellipse tangentially, thereby forming a double point; see \cite[Figure~4 (A)]{BY25}. In this critical regime, it is expected that the coefficient of the $\log N$ term coincides with that of the doubly connected droplet.  
Moreover, in analogy with the case $\tau = 0$, one expects the Tracy--Widom distribution to appear in the constant term of the free energy expansion. While such a result for $\tau = 0$ was obtained via a duality argument, as mentioned above, this approach is not applicable in the present setting. We also note that a further indication of the emergence of the Tracy--Widom distribution is given by the universality of the Painlev\'e~II asymptotics for the associated orthogonal polynomial (established in \cite{BBLM15,KLY25} for $\tau = 0$).

The derivation of the corresponding critical asymptotics appears to require a different approach, or at least the use of the Riemann--Hilbert analysis framework. Nevertheless, similarly to \cite[Remark~2.6]{BSY25}, one can formally observe the emergence of an additional $\frac{1}{12}\log N$ term arising from the topological phase transition, which is related to the tail asymptotics of the Tracy--Widom distribution. 
More precisely, in this regime we consider the scaling
\begin{equation}
a= a_{ \rm cri }[{\rm I,II}] +O(N^{-2/3}), 
\end{equation}
where $a_{\rm cri}[{\rm I,II}]$ denotes the critical value separating Regimes~I and~II. This is precisely the scale at which one expects the Painlev\'e~II asymptotics to arise. With this scaling, it follows directly from the explicit formula \eqref{Liouville action explicit simply} that
\begin{equation}
\mathcal{L}[K_Q] = \frac{1}{12}\log N +O(1). 
\end{equation} 
This behaviour captures the critical interpolation in the free energy expansion in \eqref{free energy expasion in main Thm}: as the Euler characteristic transitions from $\chi = 1$ to $\chi = 0$, an additional $\frac{1}{12}\log N$ term emerges. 

We further note that, for the model under consideration, there exists an even more singular critical regime, namely the case where all three regimes~I, II, and~III meet at a single point; see \cite[Remark~1.5]{BY25} for further discussion of this regime and its geometric interpretation. The analysis of such a highly critical case, referred to as the \textit{triple point}, remains widely open. 
One can nevertheless expect that, in this regime, the Painlev\'e II hierarchy arises. In particular, in the free energy expansion, the constant term is expected to involve not the standard Tracy--Widom distribution, but rather its higher-order analogue, namely the Fredholm determinant associated with a higher-order Airy kernel \cite{CCG21,LMS18}. The analysis of this critical regime lies well beyond the scope of the present work, and we leave it for future investigation.
\end{rem}

\subsection{Strategy of the proof} \label{Subsection_main strategy}

In this subsection, we outline the overall strategy of the proof. The main idea is to relate the free energy to the associated planar orthogonal polynomials defined in \eqref{def of planar OP}. For the reader's convenience, we summarise the argument in the following steps.

\begin{itemize}
    \item We relate the variations of the free energy to the first few leading coefficients of the planar orthogonal polynomials; see Proposition~\ref{prop_key}. In doing so, we consider deformations with respect to both the insertion point $a$ and the non-Hermiticity parameter $\tau$.
    
    \smallskip
    
    \item We then derive the asymptotic behaviour of the coefficients of the planar orthogonal polynomials, in particular the first two subleading coefficients (Propositions~\ref{Prop_OP coefficients doubly} and ~\ref{prop_sub_simply}). For this, we make use of the theory and algorithm developed in \cite{HW21,Hed24}. Implementing this framework in a fully explicit manner in the present setting requires substantial work, including a number of conformal geometric computations. (From a methodological viewpoint, this marks a key distinction from \cite{BSY25}, where the coefficients were derived via Riemann--Hilbert analysis combined with a partial Schlesinger transform.)
    
    \smallskip
    
    \item Finally, to integrate the variations, we need to identify appropriate reference partition functions. In particular, in the doubly connected regime, it is essential to combine the deformations with respect to both $a$ and $\tau$. The reference partition functions are then obtained from exact solvability (Lemma~\ref{Lem_reference free energy}). After this, we interpret all resulting terms within the framework of the general conjecture, leading to the form \eqref{free energy expasion in main Thm}.
\end{itemize}

See Figure~\ref{Fig_diagram strategy} for a schematic summary of the overall strategy.

\begin{figure}[ht]
\centering
\begin{tikzpicture}[
  box/.style={
    draw,
    rectangle,
    minimum height=1.4cm,
    text width=4.6cm,
    align=center,
  inner sep=3pt
  },
  boxL/.style={
    box,
    fill=blue!8,
    draw=blue!40
  },
  boxR/.style={
    box,
    fill=green!8,
    draw=green!40!black
  },
  boxC/.style={
    box,
    fill=orange!10,
    draw=orange!60!black
  },
  boxT/.style={
    box,
    fill=red!10,
    draw=red!50!black
  },
  arr/.style={-Latex, thick},
  node distance=1.2cm and 2.5cm,
  every node/.style={font=\small}
]

\node[boxL] (refpf) {Asymptotics of the\\ reference free energy\\
(Lemma~\ref{Lem_reference free energy})};

\node[boxR, below=of refpf] (variation) {Asymptotics of the\\ variation of the free energy\\
(Propositions~\ref{Prop_OP coefficients doubly} and~\ref{prop_sub_simply})};

\path (refpf) -- (variation) coordinate[midway] (midpoint);

\node[boxC, right=3.5cm of midpoint] (deformation)
{Deformation of the free energy\\
(Proposition~\ref{prop_key})};

\node[boxT, right=1.25cm of deformation] (expansion)
{Free energy expansion\\
(Theorem~\ref{thm_ZW})};

\draw[arr] (refpf.east) to[bend left=15] (deformation.north west);
\draw[arr] (variation.east) to[bend right=15] (deformation.south west);
\draw[arr] (deformation.east) -- (expansion.west);

\end{tikzpicture}
\caption{Schematic diagram of the proof strategy for Theorem~\ref{thm_ZW}. }
\label{Fig_diagram strategy}
\end{figure}

We proceed to describe this strategy in greater detail.
It follows from the well-known representation \eqref{ZN in terms of prod norms} that the free energy can be expressed in terms of the corresponding orthogonal norms. However, the right-hand side of \eqref{ZN in terms of prod norms} involves contributions from polynomials of all degrees ranging from $j=0$ to $j=N-1$. This presents a difficulty, since even though there exist well-developed methods for analysing the asymptotic behaviour of orthogonal polynomials—such as the Riemann--Hilbert approach or the foliation flow framework—these techniques typically provide precise information only for polynomials of large degree, rather than for those of low degree.

On the other hand, the key idea used in \cite{BSY25} is that, by suitably deforming the free energy, its variation can be expressed solely in terms of finitely many orthogonal polynomials of highest degree. (This viewpoint is reminiscent of the philosophy in the classical Christoffel--Darboux formula.) We adopt a similar strategy in the present work; however, in our setting, this idea must be developed at a higher level of refinement and precision.

To be more precise, we relate the variations of the free energy with respect to the parameters to the leading-order coefficients of $P_{n,N}(z)$. To this end, we denote by $\AA_{n,N}$ and $\BB_{n,N}$ the coefficients of $z^{n-1}$ and $z^{n-2}$, respectively, in $P_{n,N}(z)$, i.e.  
\begin{equation}\label{defAB}
    P_{n,N}(z)= z^n + \AA_{n,N}z^{n-1} + \BB_{n,N}z^{n-2} + O(z^{n-3}), \quad z\to \infty.
\end{equation}
The variations of the free energy are then expressed in terms of these coefficients as follows. 

\begin{prop}[\textbf{Deformations of the free energy}]\label{prop_key}
 Let $Q$ be given by \eqref{potQ}. Let $P_{n,N}(z)$ be the monic planar orthogonal polynomial of degree $n$ with respect to the weight $e^{-NQ}\,\ud A$, and let $\AA_{n,N}$ and $\BB_{n,N}$ be defined as in \eqref{defAB}. Then the free energy $\ZZ_N(Q) \equiv \ZZ_N(a,c,\tau)$ satisfies the following. 
\begin{enumerate}
    \item[\textup{(i)}] \textup{(\textbf{Deformation with respect to }$a$)}
    \begin{equation}\label{eq_deform a}
        \frac{\ud }{\ud a}\log \ZZ_N(Q) = \frac{2N}{1+\tau}  \AA_{N,N}. 
    \end{equation}
    \item[\textup{(ii)}] \textup{(\textbf{Deformation with respect to }$\tau$)}
    \begin{equation}\label{eq_deform tau}
         \frac{\ud }{\ud \tau}\log \ZZ_N(Q) =-\frac{N}{1-\tau^2} \bigg[\tau (1+2c)N +\tau  +(\BB_{N,N}+\BB_{N+1,N}- \AA_{N,N}\AA_{N+1,N})-\frac{2\tau a}{1+\tau} \AA_{N,N}\bigg].
    \end{equation}
\end{enumerate}
\end{prop}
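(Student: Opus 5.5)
The plan is to differentiate $\log\ZZ_N(Q)$ under the integral sign in \eqref{def of ZN N-fold intgral}. This expresses each variation as an expectation of a linear statistic under $\mathbb{P}_N$. I will then use Ward-type identities (integration by parts for translations and dilations of all particles) to reduce these statistics to symmetric polynomial statistics $\mathbb{E}\sum_j z_j$ and $\mathbb{E}\sum_j z_j^2$. These are read off from the Heine formula $\mathbb{E}\prod_{j}(w-z_j)=P_{N,N}(w)$. Expanding at $w=\infty$ gives $\mathbb{E}\sum_j z_j=-\AA_{N,N}$ and $\mathbb{E}\sum_j z_j^2=\AA_{N,N}^2-2\BB_{N,N}$. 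Since $Q(\bar z)=Q(z)$, the coefficients $\AA_{n,N},\BB_{n,N}$ are real, so real parts cause no trouble. Write $Q=Q_0-2c\log|z-a|$ with $Q_0=Q^{\rm e}$.

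For (i), note that $\partial_a(-2c\log|z-a|)=-\partial_x(-2c\log|z-a|)$. Hence $\partial_aQ=\partial_xQ_0-\partial_xQ$. Translating all particles simultaneously in the $x$-direction leaves the Vandermonde factor and $\ud A$ invariant, which gives the identity $\mathbb{E}\sum_j\partial_xQ(z_j)=0$. Therefore
$$\frac{\ud}{\ud a}\log\ZZ_N=-N\,\mathbb{E}\sum_j\partial_xQ_0(z_j)=-\frac{2N}{1+\tau}\,\mathbb{E}\sum_j\re z_j=\frac{2N}{1+\tau}\AA_{N,N},$$
using $\partial_xQ_0=2x/(1+\tau)$. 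This step is routine.

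For (ii), first compute $\partial_\tau Q_0$ as an explicit combination of $|z|^2$ and $\re z^2$ with coefficients rational in $\tau$, so that $\frac{\ud}{\ud\tau}\log\ZZ_N=-N\,\mathbb{E}\sum_j\partial_\tau Q_0(z_j)$. The statistic $\mathbb{E}\sum_j\re z_j^2$ is given by the Heine expansion. For $\mathbb{E}\sum_j|z_j|^2$ I would use two Ward identities.
\begin{itemize}
\item Dilation $z_j\mapsto(1+\epsilon)z_j$, taking the holomorphic part. This yields $N\,\mathbb{E}\sum_j z_j\partial_zQ(z_j)=\binom{N}{2}+N$, where $z\partial_zQ=(|z|^2-\tau z^2)/(1-\tau^2)-c-ca/(z-a)$.
\item Holomorphic translation. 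This yields $\mathbb{E}\sum_j c/(z_j-a)=\mathbb{E}\sum_j(\bar z_j-\tau z_j)/(1-\tau^2)$.
\end{itemize}
Substituting the second identity into the first eliminates the singular term. This expresses $\mathbb{E}\sum|z_j|^2$ through $N^2$, $cN$, $\mathbb{E}\sum z_j$, $\mathbb{E}\sum\bar z_j$ and $\mathbb{E}\sum z_j^2$, hence through $\AA_{N,N}$ and $\BB_{N,N}$. This already produces the structure $\tau(1+2c)N+\tau$ and the $\frac{2\tau a}{1+\tau}\AA_{N,N}$ term.

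The main obstacle is matching the precise combination $\BB_{N,N}+\BB_{N+1,N}-\AA_{N,N}\AA_{N+1,N}$ in \eqref{eq_deform tau}. The naive computation above yields instead something like $2\BB_{N,N}-\AA_{N,N}^2$, possibly with extra terms. To close the gap I would derive an algebraic relation between consecutive coefficients. Integrating $\bar\partial(P_{N+1,N}\,\overline{z^{k}}\,e^{-NQ})$ by parts, and using that $\bar\partial Q$ is explicit (linear in $z,\bar z$ plus $-c/(\bar z-a)$), shows that $(\bar z-\tau z)P_{N,N}$ has a controlled expansion in the orthogonal basis. Comparing the coefficients of $z^{N}$ and $z^{N-1}$ should produce an identity of the form $\BB_{N+1,N}-\BB_{N,N}-\AA_{N,N}(\AA_{N+1,N}-\AA_{N,N})=\text{explicit}$. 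If that route stalls, I would alternatively compute $\mathbb{E}\sum|z_j|^2=\int|z|^2K_N\,\ud A$ directly. Here I would write $zP_{n,N}=P_{n+1,N}+(\AA_{n,N}-\AA_{n+1,N})P_{n,N}+\cdots$ and telescope in $n$, which naturally produces the index $N+1$. I expect this bookkeeping, which carefully tracks the $c$- and $a$-dependent contributions from the logarithmic singularity, to be the delicate part.
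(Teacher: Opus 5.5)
Part (i) is fine and is essentially the paper's argument: the Heine formula does give $\mathbb{E}\sum_j z_j=-\AA_{N,N}$, because $\sum_j z_j$ is the elementary symmetric function $e_1$. Your dilation and holomorphic-translation identities in part (ii) are also correct.

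\textbf{The gap is the claimed identity} $\mathbb{E}\sum_j z_j^2=\AA_{N,N}^2-2\BB_{N,N}$. Heine's formula $\mathbb{E}\prod_j(w-z_j)=P_{N,N}(w)$ only gives expectations of elementary symmetric functions, namely $\mathbb{E}e_1=-\AA_{N,N}$ and $\mathbb{E}e_2=\BB_{N,N}$. Since $\sum_j z_j^2=e_1^2-2e_2$, your formula silently replaces $\mathbb{E}[e_1^2]$ by $(\mathbb{E}e_1)^2$; equivalently, it swaps $\mathbb{E}$ and $\log$.

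In fact $\mathbb{E}[e_1^2]-(\mathbb{E}e_1)^2=\gamma_N:=\langle zP_{N,N},P_{N-1,N}\rangle/h_{N-1,N}$, which is nonzero. For $c=0$ one has $\gamma_N=\tau$ and $\mathbb{E}\sum_j z_j^2=\tau N$, whereas $\AA_{N,N}^2-2\BB_{N,N}=\tau(N-1)$. Your route would then give $\frac{\ud}{\ud\tau}\log\ZZ_N=-\frac{2\tau N}{1-\tau^2}$ instead of the correct $-\frac{\tau N}{1-\tau^2}$.

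For the same reason, the ``explicit'' consecutive-coefficient identity you hope for does not exist. The quantity $\BB_{N+1,N}-\BB_{N,N}-\AA_{N,N}(\AA_{N+1,N}-\AA_{N,N})$ is exactly $-\gamma_N$. For $c>0$ this is not a closed expression in the leading coefficients: your $\bar\partial$-integration by parts brings in the non-polynomial term $c/(\bar z-a)$, i.e.\ a Cauchy-type integral of $P_{N,N}$ at $a$, as well as norm ratios.

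Your fallback also fails as stated. The integral $\int|z|^2|P_{n,N}|^2e^{-NQ}\ud A=\|zP_{n,N}\|^2$ involves the full expansion of $zP_{n,N}$ and the norms $h_{k,N}$, so it does not telescope into leading coefficients.

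\textbf{The missing idea} is to telescope the holomorphic statistic $z^2$, not $|z|^2$; your Ward identities already take care of $|z|^2$. Write $\mathbb{E}\sum_j z_j^2=\sum_{n=0}^{N-1}\langle z^2P_{n,N},P_{n,N}\rangle/h_{n,N}$. The coefficient of $P_{n,N}$ in $z^2P_{n,N}$ is $\BB_{n,N}-\BB_{n+2,N}-(\AA_{n,N}-\AA_{n+2,N})\AA_{n+1,N}$, so the sum collapses to $-(\BB_{N,N}+\BB_{N+1,N}-\AA_{N,N}\AA_{N+1,N})$; this is exactly what the paper does. Your own reduction gives
\[
\frac{\ud}{\ud\tau}\log\ZZ_N=-\frac{N}{1-\tau^2}\Big[\tau(1+2c)N+\tau-\mathbb{E}\sum_j z_j^2-\frac{2\tau a}{1+\tau}\AA_{N,N}\Big],
\]
and inserting the correct value of $\mathbb{E}\sum_j z_j^2$ yields \eqref{eq_deform tau} exactly.

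With this repair your argument is a valid alternative to the paper's. The paper never meets $\mathbb{E}\sum_j|z_j|^2$: it rescales to $\wt Q(z)=Q(z\sqrt{1-\tau^2})+c\log(1-\tau^2)$, so that $\tau$ enters only through $-\tau\re z^2$, an explicit prefactor, and the moved charge $a/\sqrt{1-\tau^2}$, which part (i) handles. You instead absorb the $|z|^2$-dependence through the dilation and translation Ward identities. Both routes produce the same terms $\tau(1+2c)N+\tau$ and $-\frac{2\tau a}{1+\tau}\AA_{N,N}$.
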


See Remark~\ref{Rem_defomration for c=0 case} for the exactly solvable case \(c = 0\), in which Proposition~\ref{prop_key}~(ii) can be verified by a direct computation using properties of the Hermite polynomials. 

We note that, in the case $\tau=0$, the deformation with respect to the insertion point $a$ in \eqref{potQ} was employed in \cite{BSY25} (see also \cite{DMMS25}). More precisely, Proposition~\ref{prop_key} (i) extends \cite[Proposition 3.9]{BSY25}, which serves as a key ingredient in the proof of the main results therein. 
On the other hand, the second type of deformation with respect to the non-Hermiticity parameter $\tau$, as formulated in Proposition~\ref{prop_key}~(ii), appears to be new in the literature, to the best of our knowledge. These two types of deformations are complementary, and it is precisely their combination that allows us to derive our main results; see Figures~\ref{fig_phases} and ~\ref{fig_deformation}.  

The proof of Proposition~\ref{prop_key} is given in Section~\ref{Section_proof main}. In the case $\tau = 0$, Proposition~\ref{prop_key}~(i) was based on the isomonodromic tau-function associated with a determinantal Coulomb gas model, with the underlying idea tracing back to \cite{Be09}. 
In contrast, in the present paper we provide a simple and elementary proof of Proposition~\ref{prop_key}, which does not rely on any of these frameworks. Our approach is based on the observation that variations of the free energy can be interpreted as particular one-point observables of the Coulomb gas model, and then the remaining proof uses only the orthogonality of the polynomials.  
Notably, the proof of \cite[Proposition 3.9]{BSY25} is rather technical and depends on substantial preliminary work, including the construction of the associated Riemann--Hilbert problem in \cite{BBLM15}. In this sense, the elementary proof presented here may be regarded as one of the main contributions of this paper. In addition, our proof extends to a more general setting, including algebraic Hele-Shaw potentials. 

We also note that, in the literature, deformation methods based on varying the strength of the point charge $c \ge 0$ in \eqref{potQ} have been used as well; see \cite{WW19,HW24a}. While this approach is natural, for our purposes it is more convenient to work with deformations with respect to $a$ and $\tau$, which better align with its simple reduction to coefficients of orthogonal polynomials; see Remark~\ref{rem_deform c}. 

\medskip 

With Proposition~\ref{prop_key} at hand, we proceed as follows. In applying these deformations, it is necessary to distinguish between the simply and doubly connected regimes.

\begin{itemize}
    \item In the simply connected case, we use only the deformation with respect to $a$. In particular, we exploit the fact that as $a \to \infty$, the potential \eqref{potQ}, after a suitable renormalisation, reduces to $Q^{\rm e}$. The latter corresponds to an exactly solvable case, for which the free energy admits an explicit expression; see Lemma~\ref{Lem_reference free energy}. Therefore, it suffices to obtain the asymptotics of $\mathcal{A}_{n,N}$, which is provided in Proposition~\ref{prop_sub_simply}.
    
    \smallskip
    
    \item In the doubly connected regime, however, the deformation with respect to $a$ alone is not sufficient, since the case $a=0$ for the potential \eqref{potQ} does not admit a representation in terms of a well-known special function. We therefore additionally employ the deformation with respect to $\tau$ to reach the point $a=\tau=0$, where the potential reduces to $Q^{\rm i}$ in \eqref{def of Q induced elliptic}, which is exactly solvable; see Lemma~\ref{Lem_reference free energy}. In this case, we require the asymptotics of both $\mathcal{A}_{n,N}$ and $\mathcal{B}_{n,N}$, obtained in Proposition~\ref{Prop_OP coefficients doubly}. 
    
   In this two-step deformation, it is essential to first deform with respect to $a$ and then with respect to $\tau$. Indeed, reversing the order may cause the deformation path to pass through Regime~II; see Figure~\ref{fig_phases}. In that case, a transition in the topology of the equilibrium measure may occur, necessitating a careful analysis of critical asymptotics. By contrast, in the order of deformation adopted here, no such issue arises. 
\end{itemize}

\begin{figure}[t]
    \centering
\begin{tikzpicture}[ 
    domain fill/.style={pattern=triangular dots, pattern color=blue!40!gray},
    shape style/.style={draw, very thick, inner sep=0pt},
    ellipse style/.style={shape style, domain fill, ellipse, minimum width=3cm, minimum height=1cm},
    ellipse2 style/.style={shape style, domain fill, ellipse, minimum width = 2.7cm, minimum height = 0.9cm},
    circle style/.style={shape style, domain fill, circle, minimum size=1.73cm},
    circle2 style/.style={shape style, domain fill, circle, minimum size= 1.558cm},
    nucleus style/.style={shape style, circle, minimum size=0.4cm, fill=white}, 
    label node/.style={anchor=south, node distance=0.8cm, font=\large},
    arrow style/.style={->, thick, red},
    arrow blue/.style={->, thick, blue},
    arrow2 style/.style={->, thick},
    scale = 0.9
]
    \node (A) [ellipse style] at (0,0) {};
    \node at (0, 0.6) [label node] {(A)};
    \node at (A.center) [nucleus style, xshift =0.7cm] {}; 

    \node (B) [ellipse style] at (-4.0,0) {};
    \node at (-4.0, 0.6) [label node] {(B)};
    \node at (B.center) [nucleus style] {}; 

    \node (C) [circle style] at (-4.0, -3.5) {};
    \node at (C.north) [label node] {(C)};
    \node at (C.center) [nucleus style] {};

    \node at (4.0, 0.6) [label node] {(D)};
    \begin{scope}[shift={(4.0,0)}]
      \draw [shape style, domain fill] plot [variable=\t, domain=0:360, samples=300, smooth] 
      (
          {(1.5*cos(\t) - 0.1*(cos(\t) - 0.8)/(1.64 - 1.6*cos(\t)))}, 
          {(0.5*sin(\t) + 0.1*sin(\t)/(1.64 - 1.6*cos(\t)))}         
      ) -- cycle;
    \end{scope}

    \node (E) [ellipse2 style] at (8.0, 0){};
    \node at (8.0, 0.6) [label node] {(E)};

    \node [circle style] at (0.0, -3.5) {};
    \node [nucleus style] at (0.6, -3.5) {};

    \node [circle2 style] at (8.0, -3.5) {};

    \begin{scope}[shift={(4.0,-3.5)}]
      \draw [shape style, domain fill] plot [variable=\t, domain=0:360, samples=300, smooth] 
      (
          {0.89*(cos(\t) - 0.1*(cos(\t) - 0.8)/(1.64 - 1.6*cos(\t)))}, 
          {0.89*(sin(\t) + 0.1*sin(\t)/(1.64 - 1.6*cos(\t)))}         
      ) -- cycle;
    \end{scope}

    \node at (-4.0, -5.6) [label node] {$a=0$};
    \node at (8.0, -5.6) [label node] {$a=\infty$};
    \node at (2.0, -5.6) [label node] {$0<a<\infty$};
    \node at (-6.7, -3.8) [label node] {$\tau =0$};
     \node at (-6.7, -0.35) [label node] {$\tau >0$}; 
     
    \draw [arrow style] (-4.0, -1.1) -- (-4.0, -1.5);
    \draw [arrow style] (-1.8, 0.0) -- (-2.2, 0.0);
    \draw [arrow style] (5.8, 0.0) -- (6.2, 0.0);
    \draw [arrow2 style] (-6.3, -5.0) -- (10.0, -5.0);
    \draw [arrow2 style] (-6.0, -5.3) -- ( -6.0, 1.2);
    \draw [arrow blue] (-1.8, -3.5) -- (-2.2, -3.5);
    \draw [arrow blue] (5.8, -3.5) -- (6.2, -3.5);
\end{tikzpicture}
    \caption{Schematic illustration of the deformation procedures. The labels (A)--(E) are consistent with those in Figure~\ref{fig_phases}. 
In the simply connected case, the deformation from the generic configuration (D) to the reference configuration (E) involves only a deformation in $a$. In contrast, in the doubly connected case, the deformation from the generic configuration (A) to the reference configuration (C) requires successive deformations with respect to $a$ and then $\tau$. 
The second row illustrates the deformation path for $\tau = 0$ employed in \cite{BSY25}; see also Figure~5 therein.}
    \label{fig_deformation}
\end{figure}
See Figure~\ref{fig_phases} for an illustration of the deformation paths in each case within the phase diagram. See also Figure~\ref{fig_deformation}, where the deformation strategy is depicted, with particular emphasis on the topology of the droplet. 

Combining all of these ideas yields the outline of the proof of our main result, Theorem~\ref{thm_ZW}; see Section~\ref{Section_proof main}.

\begin{rem}[Generality of the method] In contrast to the earlier work~\cite{BSY25} on the model~\eqref{def of Q GinUE with insertion}, whose method is rather specifically tailored to that particular setting, the approach developed in the present paper has the possibility of being extended to a much broader class of potentials~\(Q\).

Indeed, although our main results are stated for the particular potential~\eqref{potQ}, many of the intermediate steps are developed in the more general framework of algebraic Hele-Shaw potentials of the form~\eqref{def of algebraic HeleShaw}. These include the refined asymptotic analysis of planar orthogonal polynomials in Theorem~\ref{thm_F0F1} and Proposition~\ref{prop_multiply connected}, which we regard as contributions of independent interest in the study of more general potentials.

Nevertheless, extending the free energy expansion to a genuinely broad class of potentials requires several additional ingredients that are not yet fully understood. For instance, one needs an appropriate deformation path preserving the topology of the droplet, as well as finer asymptotic expansions beyond those established in Theorem~\ref{thm_F0F1}. Once a suitable deformation framework is available, the strategy developed in the present paper strongly suggests that the first subleading correction term should already determine the $O(1)$ contribution in the free energy expansion. On the other hand, finer asymptotic expansions still appear to be necessary in order to control the corresponding error terms rigorously. 
This is one of the main reasons why, in the present work, we focus on the special potential~\eqref{potQ}, for which the earlier work~\cite{BY25} provides the necessary structural inputs.

At the same time, one of the principal messages of the present paper is that the general strategy developed here---namely, relating the asymptotics of planar orthogonal polynomials to conformally invariant geometric functionals---indeed works in a concrete nontrivial model \eqref{potQ}. We regard the present work as a first step toward developing a more systematic framework for understanding free energy expansions in considerably broader settings.
\end{rem}

\subsection*{Organisation of the paper}

The remainder of this paper is organised as follows. In Section~\ref{section_prelim}, we review several well-known results, including those from \cite{HW21,Hed24}, and present preliminary material on conformal geometric computations. In Section~\ref{Section_proof main}, we give the proof of the main results, following the overall strategy outlined above.  
The key ingredients---Propositions~\ref{Prop_OP coefficients doubly} and~\ref{prop_sub_simply}, which concern the asymptotic behaviour of the variation of the free energy, or equivalently, the leading coefficients of the planar orthogonal polynomials---are established in Sections~\ref{sec simply} and ~\ref{sec doubly}.

\subsection*{Acknowledgements} SSB was supported by the National Research Foundation of Korea grants (RS-2025-00516909).

\section{Preliminaries: conformal geometry, potential theory and orthogonal polynomials} \label{section_prelim}

In this section, we introduce several preliminary results that will be used later. While we mainly review well-known facts, we also provide self-contained proofs of certain folklore formulas adapted to our purposes. These include the variational formulas for the Liouville action (Proposition~\ref{prop_var Liouville}) and the harmonic extension of Hele-Shaw potentials expressed in terms of the Schwarz function (Proposition~\ref{prop_R Hele-Shaw}).

\subsection{Variational formula of the Liouville action} \label{sec conformal Liouville}

Recall that $\psi$ is the exterior conformal map defined in \eqref{def of psi exterior}. Unless stated otherwise, we always assume that the exterior conformal map extends holomorphically to an open neighbourhood of $\overline{\mathbb{D}}_e$. 
Let $\Gamma = \partial D$ denote the boundary of $D$, which we assume throughout to be a smooth, positively oriented Jordan curve. 
Let $\phi = \psi^{-1} : D_e \to \D_e$. 
Then the outward unit normal vector $\n$ and the unit tangent vector $\t$ to $\Gamma$ at the point $z = \psi(w)$ can be expressed in terms of $\psi$ and $\phi$ as 
\begin{equation}\label{def of normal tangent}
    \n(z)   = \frac{\phi(z) \overline{\phi'(z)}}{|\phi'(z)|} = \frac{w \, \psi'(w)}{|\psi'(w)|}, \qquad \t(z) = i\n(z) = \frac{i\phi(z) \overline{\phi'(z)}}{|\phi'(z)|}=  \frac{iw\, \psi'(w)}{|\psi'(w)|},
\end{equation}
respectively. 

Note that $\n$ and $\t$ extend from $\Gamma$ to real-analytic vector fields in a neighbourhood of $D_e$, since $\phi'$ and $\psi'$ do not vanish on $D_e$. 
For functions defined in a neighbourhood of $\Gamma$, the directional derivatives in the normal and tangential  directions are given, respectively, by
\begin{equation}
    \partial_\n  = \n \partial + \bar{\n}\bar{\partial}, \qquad \partial_\t = \t \partial + \bar{\t}\bar{\partial} = i (\n \partial -\bar{\n}\bar{\partial}).
\end{equation}
We recall that the (signed) curvature $\kappa$ of the curve $\Gamma$ is given by 
\begin{equation}\label{def of curvature}
    \kappa(z) = \kappa(\psi(w)) := -i\partial_\t \arg \n = \partial \n -\bar{\n}^2\bar{\partial} \n = \frac{1}{|\psi'(w)|}\Big(1+ \re \frac{w\psi''(w)}{\psi'(w)}\Big);
\end{equation}
see also \eqref{pre Schwar derivative change of variable} for the derivation of the final identity. 
Among the important objects in conformal computations is the \textit{Schwarzian derivative}, defined as
\begin{equation}\label{def of Schwarzian}
    \{\phi, z\} := \frac{\phi'''(z)}{\phi'(z)} -\frac{3}{2}\Big(\frac{\phi''(z)}{\phi'(z)}\Big)^2 = N_\phi'(z) -\frac12 N_\phi(z)^2,  
\end{equation}
where 
\begin{equation} \label{def of pre Sch N_phi}
N_{\phi}(z):= \frac{\phi''(z)}{ \phi'(z) } =  \frac{\ud}{\ud z}\log \phi'(z) 
\end{equation}
is the pre-Schwarzian derivative of $\psi$. 

We now investigate the variational formula for the Liouville action associated with an exterior conformal map. 
Consider a family of exterior conformal maps $\{\psi_t\}_{t \in I}$ governed by a \textit{Löwner--Kufarev equation} 
\begin{equation}\label{eq_LK evolution}
    \dot{\psi}_t(w) = w \, \psi_t'(w) P(t,w), \qquad w\in \bar{\D}_e.
\end{equation}
Here, $\dot{X}$ and $X'$ denote differentiation of $X$ with respect to $t$ and $w$, respectively. 
We assume that $P(t,w)$ is smooth in $t$ and bounded and holomorphic on $\D_e$, and extends analytically to a neighbourhood of $\bar{\D}_e$ with respect to $w$. 
On the unit circle $\T$, the normal velocity of the evolution is given by 
\begin{equation}\label{eq_normal velocity}
    v_{\n}(t,w)= \dot{\psi}_t (w)\cdot \n =\re \Big(\frac{ \dot{\psi}_t(w) \overline{w\psi_t'(w)}}{|\psi_t'(w)|}\Big)=|\psi_t'(w)| \re P(t,w), \quad w\in \T,
\end{equation}
where $\n=\n(\psi_t(w))$ denotes the outward unit normal vector at $\psi(w)\in \partial D$. 

Recall that the Liouville action of an exterior conformal map is given by \eqref{def of Liouville action}. Then we have the following. 

\begin{prop}[\textbf{Variational formula of the Liouville action}]\label{prop_var Liouville}
Let $\{\psi_t\}_{t\in I}$ be a family of exterior conformal maps evolving according to the Löwner--Kufarev equation~\eqref{eq_LK evolution}. 
Then the variation of the Liouville action with respect to the parameter $t$ is given by 
\begin{equation}\label{eq_var Liouville}
    \frac{\ud}{\ud t}\SS[\psi_t] = -\frac{2}{\pi}\int_\T \Big(\re (w^2\{\psi_t, w\})+ |\psi'_t(w)|^2 \kappa(\psi_t(w))^2\Big) \re P(t,w) \ud s(w).
\end{equation} Here, $\ud s$ is the arc length measure on curves.  
\end{prop}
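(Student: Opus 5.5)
We differentiate the two pieces of $\SS[\psi_t]$ separately. Write $N_t := \psi_t''/\psi_t'$ for the pre-Schwarzian of $\psi_t$. The starting point is the identity
\begin{equation*}
\dot N_t = \frac{\ud}{\ud w}\frac{\ud}{\ud t}\log\psi_t' .
\end{equation*}
Differentiating \eqref{eq_LK evolution} in $w$ gives
\begin{equation*}
\dot\psi_t' = \psi_t' P + w\psi_t'' P + w\psi_t' P' .
\end{equation*}
Set $F := (\log \psi_t')^{\boldsymbol{\cdot}} = P + wN_tP + wP'$. This function is holomorphic on $\D_e$ and bounded at $\infty$, since $N_t = O(w^{-2})$ there. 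We then have $\dot N_t = F'$.

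For the logarithmic term, note that $\psi_t'(\infty) = \lim_{w \to \infty}\psi_t(w)/w$. Hence
\begin{equation*}
\frac{\ud}{\ud t}\log\psi_t'(\infty) = F(\infty) = P(t,\infty),
\end{equation*}
because $w N_t P \to 0$ and $wP' \to 0$ as $w \to \infty$.

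For the Dirichlet-type integral, we have
\begin{equation*}
\frac{\ud}{\ud t}\int_{\D_e}|N_t|^2 \ud A = 2\re\int_{\D_e} F'\,\overline{N_t}\ud A .
\end{equation*}
We write $F'\overline{N_t} = \partial(F\overline{N_t})$ and apply Green's formula on $\D_e$. The boundary consists of $\T$, with clockwise orientation relative to $\D_e$, and a large circle. The contribution of the large circle vanishes since $F\overline{N_t} = O(|w|^{-2})$. Recall that $\int_{\D_e}\partial G\ud A = -\frac{1}{2\pi i}\int_\T G\ud\bar w$, with $\T$ oriented counterclockwise. On $\T$ we have $\ud\bar w = -i\bar w\ud s$, so this becomes $\frac{1}{2\pi}\int_\T G\bar w\ud s$. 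Thus the variation of the integral is
\begin{equation*}
\frac1\pi\re\int_\T F\,\overline{wN_t}\ud s .
\end{equation*}

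Next we express $F$ in terms of $P$. Writing $F = (1+wN_t)P + wP'$, the integrand is
\begin{equation*}
\re\bigl[(1+wN_t)\overline{wN_t}\,P\bigr] + \re\bigl[wP'\,\overline{wN_t}\bigr].
\end{equation*}
The goal is to move all derivatives off $P$ and onto $N_t$. On $\T$ the operator $w\partial_w$ acts as $-i\partial_\theta$, so we integrate by parts in $\theta$ along the circle. However, $\overline{wN_t}$ is antiholomorphic, and the product $wP'\,\overline{wN_t}$ is not a derivative of something simple. We therefore use the holomorphic structure: $\int_\T g\,\bar h\ud s$ pairs Fourier coefficients of $g$ and $h$. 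Alternatively, we may use that, for holomorphic $P$, only $\re P$ enters once everything is symmetrised.

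The cleanest route seems to be to rewrite everything in terms of the real function $u := \re P$ on $\T$ together with its conjugate. One then uses the boundary identities
\begin{equation*}
1 + \re(wN_t) = |\psi_t'|\kappa,
\end{equation*}
from \eqref{def of curvature}, and
\begin{equation*}
w^2\{\psi_t,w\} = w^2N_t' - \tfrac12 (wN_t)^2 .
\end{equation*}
The integration by parts produces a term $-\re\bigl(w\partial_w(wN_t)\bigr)u = -\re\bigl(wN_t + w^2N_t'\bigr)u$.

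Collecting everything, we expect the combination $\re(w^2N_t') - \tfrac12\re(wN_t)^2 + \bigl(1+\re wN_t\bigr)^2$, possibly up to terms whose integral against $u$ vanishes. This should reduce to $\re(w^2\{\psi_t,w\}) + |\psi_t'|^2\kappa^2$. Along the way we will need the identity $|wN_t|^2 = (\re wN_t)^2 + (\im wN_t)^2$ and $\re(z^2) = (\re z)^2 - (\im z)^2$ to convert $|wN_t|^2$ into $\re$-squared expressions. The constant term $-4P(t,\infty)$ coming from the $\log\psi'(\infty)$ part should also be absorbed. Since $P(t,\infty) = \frac{1}{2\pi}\int_\T \re P\ud s$ (the mean value on $\T$, after noting $P(\infty)$ is real by the normalisation $\psi_t'(\infty)>0$), this constant is accounted for by the constant $1$ inside $(1+\re wN_t)^2$.

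The main obstacle is the bookkeeping in the integration by parts. Specifically, we must show that the non-real part $\im P$ drops out. We would handle this by using that for holomorphic $g,h$ on $\D_e$ the pairing $\int_\T g\bar h\ud s$ depends only on Fourier modes. Then $\re\int_\T (iv)\,G\ud s$ for a holomorphic $G$ vanishing at $\infty$ can be rewritten via the harmonic conjugate, which turns $\im P$ into $\re P$ terms with the conjugate operator. Finally, we must check the overall sign and the factor $-2/\pi$ against a simple example, such as $\psi_t(w) = e^{t}w$ (where $P=1$, the integral term vanishes, and $\SS = -4t$), or $\psi_t(w) = w + t/w$ to first order.
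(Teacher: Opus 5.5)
Your outline follows the paper's route: Green's formula moves $\int_{\D_e}|N_t|^2\,\ud A$ onto $\T$, the mean value property handles $\log\psi_t'(\infty)$, you integrate by parts on $\T$, and then recombine into $(1+\re wN_t)^2$. However, it stops exactly where the proof lives, and the one boundary formula you commit to has the wrong sign.

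The positively oriented boundary of $\D_e$ is $\T$ traversed clockwise. So for counterclockwise $\T$ one has $\int_{\D_e}\partial G\,\ud A=\frac{1}{2\pi i}\int_\T G\,\ud\bar w=-\frac{1}{2\pi}\int_\T G\,\bar w\,\ud s$. You can check this with $G=w^{-1}\bar w^{-2}$: both sides equal $-1$, whereas your version gives $+1$. Hence
\begin{equation*}
\frac{\ud}{\ud t}\int_{\D_e}|N_t|^2\,\ud A=-\frac{1}{\pi}\re\int_\T F\,\overline{wN_t}\,\ud s,
\end{equation*}
not $+\frac1\pi\re\int_\T F\,\overline{wN_t}\,\ud s$. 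With your sign, the Dirichlet contributions $4\re(wN_t)+2(\re wN_t)^2$ enter with the opposite sign to the constant $-\frac{2}{\pi}\int_\T\re P\,\ud s$ produced by $-4\log\psi_t'(\infty)$. The absorption into $(1+\re wN_t)^2$ that you anticipate therefore cannot occur.

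Your proposed checks would not catch this. For $\psi_t=e^tw$ one has $N_t\equiv0$. For $\psi_t=w+t/w$ at $t=0$, the Dirichlet term is quadratic in $N_t$, so its first variation vanishes. You need $t\neq0$: there $\SS[\psi_t]=-2\log(1-t^2)$, and the entire derivative $\frac{4t}{1-t^2}$ comes from the Dirichlet term.

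The step you identify as the main obstacle also needs an actual argument. It is a one-line fact rather than Fourier or harmonic-conjugate bookkeeping. If $G$ is holomorphic near $\overline{\D}_e$ with $G(\infty)=0$, then $\int_\T PG\,\ud s=\frac1i\int_\T PG\,\frac{\ud w}{w}=2\pi P(\infty)G(\infty)=0$. Hence $\re\int_\T P\,\overline{G}\,\ud s=2\int_\T\re P\,\re G\,\ud s$.

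The integration by parts is not an obstruction either. With $f:=wN_t$, the relation $\ud P=iwP'\,\ud s$ gives $\int_\T wP'\,\overline{f}\,\ud s=\int_\T P\,\overline{wf'}\,\ud s$, with a plus sign. Moreover $wf'=f+w^2\{\psi_t,w\}+\frac12f^2$ vanishes at $\infty$. Applying the fact to $G=f$ and $G=wf'$ (the term $|f|^2P$ already has a real weight) yields
\begin{equation*}
\frac{\ud}{\ud t}\int_{\D_e}|N_t|^2\,\ud A=-\frac{1}{\pi}\int_\T\Big(4\re f+|f|^2+\re(f^2)+2\re\big(w^2\{\psi_t,w\}\big)\Big)\re P\,\ud s .
\end{equation*}
Add $-\frac{2}{\pi}\int_\T\re P\,\ud s$, and use $|f|^2+\re(f^2)=2(\re f)^2$ and $1+\re f=|\psi_t'|\kappa$. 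This gives the claim, and it is exactly the paper's computation with $I_1$ and $I_2$.
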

 
The variational formula~\eqref{eq_var Liouville} and its derivation can be found in the physics literature; see ~\cite[Appendix~B]{ZW06}. Related variational formulas for the universal Liouville action, arising in the context of the Löwner energy of Weil--Petersson class loops, also appear in~\cite{SW24}. For the convenience of the reader, we provide here a self-contained and rigorous proof of Proposition~\ref{prop_var Liouville}.

We also note that the special case $\re P(t,w)=1/|\psi_t'(w)|^2$ of the variational formula~\eqref{eq_var Liouville} appears in~\cite{GV06} in the context of Hele-Shaw flows. Our argument may be viewed as an extension of the proof of~\cite[Theorem~7.4.10]{GV06} to the general Löwner--Kufarev setting.

\begin{proof}[Proof of Proposition~\ref{prop_var Liouville}]

As $\psi'(w)$ is a bounded holomorphic function on a neighbourhood of $\bar{\D}_e$, it follows that 
\begin{equation}\label{eq_psi infty}
    \frac{1}{2\pi}\int_\T \log \psi'(w) \ud s(w) = \frac{1}{2\pi i}\int_\T \frac{\log \psi'(w)}{w}\ud w = \log \psi'(\infty).
\end{equation} 
Recall that $X'$ denotes the derivative of $X$ with respect to $w$. Due to Löwner--Kufarev equation~\eqref{eq_LK evolution}, we have 
\begin{equation*}
    \frac{\ud}{\ud t}\log \psi_t'(w) 
    = \Big(1+ w N_{ \psi_t }(w) \Big)P(t,w) + wP'(t,w),
\end{equation*} 
where $N_\psi$ is the pre-Schwarzian derivative~\eqref{eq_pre Schwarzian}.
By using \eqref{eq_psi infty} and $\ud \bar{w}= -i \bar{w}\ud s(w)$, we obtain
\begin{align*}
    \frac{\ud }{\ud t}\SS[\psi_t] &=
    2\re \int_{\D_e} \overline{N_{\psi_t}(w)}\frac{\ud}{\ud t}N_{\psi_t}(w) \ud A(w) - \frac{2}{\pi}\re \int_\T \frac{\ud}{\ud t} \log \psi_t'(w) \ud s(w)\\
    &=  2\re \bigg[\frac{1}{2\pi i }\int_\T \Big(\overline{ N_{ \psi_t }(w) }+\frac{2}{\bar{w}}\Big)\frac{\ud}{\ud t}\log \psi'_t(w) \ud \bar{w} \bigg]  =   -\re \big( I_1 + I_2 \big), 
\end{align*} 
where 
\begin{align}
I_1 &:= \frac{1}{\pi} \int_\T \Big(2+ \overline{ w N_{ \psi_t }(w) }\Big)\Big(1+w N_{ \psi_t }(w)  \Big) P(t,w)\ud s(w) ,
\\
I_2 &:= \frac{1}{\pi} \int_\T \Big(2+ \overline{ w N_{ \psi_t }(w) } \Big) wP'(t,w) \ud s(w). 
\end{align}

Since $P(t,w)$ and $N_{\psi_t}(w)$ are bounded holomorphic functions on a neighbourhood of $\bar{\D}_e$ with asymptotic behaviour
\begin{equation*}
    P(t,w) = O(1), \qquad N_{ \psi_t }(w) = O(1/w^2),  
\end{equation*}
as $w \to \infty$, it follows that
\begin{equation*}
    \frac{1}{\pi }\int_\T w N_{ \psi_t  }(w)P(t,w) \ud s(w)= \frac{1}{\pi i }\int_\T N_{ \psi_t  }(w)P(t,w)\ud w=0.
\end{equation*}
Consequently, we obtain 
\begin{equation}\label{eq_int I1}
    \re I_1 = \frac{1}{\pi}\int_\T \Big(1+\Big|1+ w N_{ \psi_t }(w)  \Big|^2\Big)\re P(t,w)\ud s(w).
\end{equation}

On the other hand, since $\ud P = i w P'\,\ud s$, integration by parts yields
\begin{align*}
    I_2 &= \frac{1}{\pi i}\int_\T \Big(2+ \overline{ w N_{ \psi_t }(w)  }\Big) \ud P(t,w) = -\frac{1}{\pi i}\int_\T P(t,w) \overline{ (w N_{\psi_t  }(w))'  }  \ud \bar{w}  
    \\
    &= \frac{1}{\pi}\int_\T P(t,w)\Big(\overline{w^2\{\psi_t, w\} + \frac{1}{2}\big( w N_{ \psi_t  }(w) \big)^2+ w N_{ \psi_t  }(w) }\Big)\ud s(w),
\end{align*}
where we recall that $\{\psi_t, w\}$ denotes the Schwarzian derivative defined in~\eqref{def of Schwarzian}. 
Note that
\begin{align*}
&\quad  \frac{1}{\pi}\int_\T P(t,w)\Big( w^2\{\psi_t, w\} + \frac{1}{2}\big( w N_{ \psi_t  }(w) \big)^2+ w N_{ \psi_t  }(w)  \Big)\ud s(w)
\\
&= \frac{1}{\pi i}\int_\T P(t,w)\Big( w^2\{\psi_t, w\} + \frac{1}{2}\big( w N_{ \psi_t  }(w) \big)^2+ w N_{ \psi_t  }(w)  \Big) \frac{\ud w}{w} = 0 
\end{align*}
since the integrand in the second expression is holomorphic on a neighbourhood of $\bar{\D}_e$ and satisfies the asymptotic bound $O(1/w^{2})$ as $w \to \infty$. Therefore, we obtain 
\begin{equation}\label{eq_int I2}
    \re I_2 = \frac{2}{\pi} \int_\T \re P(t,w) \re \Big( w^2\{\psi_t, w\} + \frac{1}{2}\big( w N_{ \psi_t  }(w) \big)^2+ w N_{ \psi_t  }(w)  \Big)\ud s(w). 
\end{equation}

Combining \eqref{eq_int I1} and~\eqref{eq_int I2}, together with the identity
\begin{equation*} 
    1 + \Big|1+w N_{ \psi_t }(w)\Big|^2 + \re\big(w N_{ \psi_t }(w)\big)^2+2\re \big( w N_{ \psi_t }(w) \big) = 2\Big(1+ \re  \big( w N_{ \psi_t }(w) \big)   \Big)^2 = 2|\psi'_t(w)|^2 \kappa( \psi_t(w) )^2,
\end{equation*}
where $\kappa$ is the curvature defined in~\eqref{def of curvature}, we obtain the desired conclusion~\eqref{eq_var Liouville}. 
\end{proof}

\subsection{Algebraic Hele-Shaw potentials and quadrature domains}
 
The use of conformal mapping methods in the study of the macroscopic behaviour of two-dimensional Coulomb gases has been extensively explored in the physics literature \cite{WZ00,MWZ00,KKMWZ01}, particularly in connection with integrable hierarchies and string equations. The underlying idea is to consider the conformal map from the droplet onto a reference domain, typically the unit disc, and to characterise this map via a Liouville-type argument; see e.g. \cite[Appendix A]{By24} for further exposition.  

A central concept in this construction, particularly for defining the analytic continuation of the conformal map to the entire complex plane, is the \textit{Schwarz function}. 
To recall its definition, let $\Omega \subsetneq \wh{\C}$ be a domain, not necessarily bounded, such that $\Omega = \interior \overline{\Omega}$ and $\infty \notin \partial \Omega$. By definition, the Schwarz function of $\Omega$ is a continuous function $\mathsf{S}_\Omega : \overline{\Omega} \to \wh{\C}$ which is meromorphic in $\Omega$ and satisfies 
\begin{equation}
    \mathsf{S}_\Omega(z) =\bar{z}, \quad \text{for }z\in\partial \Omega.
\end{equation}
In particular, for an exterior domain $D_e$ and its conformal map $\psi: \D_e\to D_e$ we have
\begin{equation}
    \overline{\mathsf{S}_{D_e}(z)} = \psi(1/ \overline{\psi^{-1}(z)}), \quad z\in \bar{D}_e.
\end{equation}

Using conformal mapping methods, the equilibrium measure associated with external potentials of the form 
\begin{equation}\label{eq_poly potential}
    Q(z) =\frac{1}{t_0} \Big(|z|^2 - 2\re \sum_{j=1}^k \frac{t_j}{j} z^j\Big), \quad t_0>0, \; t_1,\ldots, t_k \in \C,
\end{equation}
was derived in~\cite{EF05}. (For $k \ge 3$, this requires an appropriate localisation to a bounded domain.) Observe that $\Delta Q = 1/t_0$ is constant, which implies that the corresponding equilibrium measure $\ud \mu_W$ has uniform density $1/(\pi t_0)$ on its droplet $K$. Moreover, the coefficients $t_j$ in~\eqref{eq_poly potential} encode the area and the exterior harmonic moments of $K$: for $j=1,\dots, k$, 
\begin{equation}\label{eq_harmonic moment rel}
    t_0 = \frac{1}{\pi}\text{area }K, \qquad t_j =-\int_{K^c} z^{-j}\ud A(z).
\end{equation}
The boundary of the droplet $\partial K$ is identified as an algebraic curve, often referred to as a polynomial curve. Equivalently, the exterior conformal map $\psi : \D_e \to K_e$ is a rational function of the form 
\begin{equation*}
    \psi(w) = R \Big(w + a_0 + \frac{a_1}{w} + \ldots + \frac{a_k}{w^k}\Big). 
\end{equation*}

In the seminal paper~\cite{LM16}, Lee and Makarov extended this class of external potentials to a broader family exhibiting similar properties, namely the \emph{algebraic Hele-Shaw potentials}, defined as 
\begin{equation}\label{def of algebraic HeleShaw}
    Q(z) = \frac{1}{t_0} \Big(|z|^2-  H(z)\Big), \qquad  H(z):\text{real harmonic and}\;\; \partial H(z) : \text{rational}.
\end{equation}
One observes that the polynomial potentials~\eqref{eq_poly potential}, as well as the potentials with insertions~\eqref{def of Q induced elliptic},~\eqref{def of Q GinUE with insertion},~\eqref{potQ}, all belong to this class. 

Here we recall some basic notions from~\cite{LM16}. Given an algebraic Hele-Shaw potential $Q$, a compact set $K \subset \C$ is called a  \emph{local droplet}
\footnote{The definition of local droplets presented here corresponds to \emph{non-maximal} local droplets in~\cite{LM16}. We will always refer to non-maximal local droplets throughout the paper.} 
of $Q$ if there exists an open neighbourhood $U$ of $K$ such that $K$ is the droplet associated with a localisation of $Q$ to $U$: 
\begin{equation} \label{def of localised Q}    Q_{\bar{U}} = Q\cdot \mathbbm{1}_{\bar{U}} + \infty \cdot \mathbbm{1}_{\bar{U}^c}.
\end{equation}
Note that the (global) droplet of $Q$ is, in particular, a local droplet of $Q$. The first part of the Lee--Makarov theorem relates local droplets to domains with a special property, known as \emph{quadrature domains}. 

A domain $\Omega$ is called a quadrature domain if there exists a rational function $r_\Omega$ such that the following \emph{quadrature identity} holds: 
\begin{equation}\label{eq_quadrature identity}
    \int_\Omega f(z) \ud A(z) = \frac{1}{2\pi i}\int_{\partial\Omega} f(z) r_\Omega(z) \ud z,
\end{equation}
for all analytic functions $f$ on $\Omega$ that are integrable and continuous on $\overline{\Omega}$. 
It is evident that if $\Omega$ is a quadrature domain, such rational function $r_\Omega$ is uniquely determined. We call $r_\Omega$ the \emph{quadrature function} of $\Omega$. One of the main results in~\cite{LM16}, relating this property to the equilibrium measure problem, is the following. 

\begin{thm}[\textbf{cf. Theorem 3.3 in}~\cite{LM16}]\label{thm_local droplet}
Let $K$ be a local droplet associated with an algebraic Hele-Shaw potential $Q$ of the form~\eqref{def of algebraic HeleShaw}. Then the complement $K^c$ is a finite union of disjoint quadrature domains $\Omega_1, \ldots, \Omega_q$, whose corresponding quadrature functions $r_1, \ldots, r_q$ satisfy 
\begin{equation} \label{LM sum of quad functions}
    r_1(z) + \ldots +r_q(z) = \partial H(z).
\end{equation} 
\end{thm}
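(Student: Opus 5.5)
The plan is to derive the quadrature identities from the Euler--Lagrange (Frostman) conditions for the localised problem, by identifying the Schwarz function of each complementary component.

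\textbf{Step 1: Frostman condition.} Write $\mu = \mu_{Q_{\bar U}} = \frac{1}{t_0}\mathbbm{1}_K \ud A$, using $\Delta Q = 1/t_0$. Consider the Cauchy transform
\begin{equation*}
C_K(z) := \int_K \frac{\ud A(w)}{z-w},
\end{equation*}
which is holomorphic on $K^c$ and satisfies $C_K(z) = t_0/z + O(z^{-2})$ at $\infty$. Frostman's theorem gives $Q(z) - 2\int \log|z-w|\,\ud\mu(w) = \text{const}$ on $K$. Applying $\partial$ on the interior gives
\begin{equation*}
\bar z - \partial H(z) = C_K(z), \qquad z \in \interior K .
\end{equation*}
Both sides are continuous on $\C$, away from the poles of $\partial H$. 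Indeed, $C_K$ is continuous everywhere, and the obstacle-problem solution is $C^{1,1}$, so $\partial$ of the equilibrium potential is continuous. Hence the identity persists on $\partial K$.

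\textbf{Step 2: poles avoid the droplet.} The poles of $\partial H$ do not lie in $K$. Near a pole, $Q$ is either $+\infty$ (a logarithmic charge with the favourable sign) or violates the local subharmonicity and energy requirements. This is part of the standard regularity input for local droplets.

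\textbf{Step 3: the Schwarz function.} Let $\Omega_1,\dots,\Omega_q$ be the components of $K^c$. On each $\Omega_j$ set
\begin{equation*}
\mathsf S_j := \partial H + C_K|_{\Omega_j} .
\end{equation*}
This function is meromorphic in $\Omega_j$, its poles are exactly the poles of $\partial H$ lying in $\Omega_j$, and by Step~1 it satisfies $\mathsf S_j = \bar z$ on $\partial\Omega_j$. Thus $\mathsf S_j$ is the Schwarz function of $\Omega_j$.

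\textbf{Step 4: quadrature identities.} Decompose $\partial H = p + \sum_\alpha \mathrm{pp}_\alpha$, where $\mathrm{pp}_\alpha$ are the principal parts at the finite poles $\alpha$ and $p$ is the polynomial part. Define $r_j$ as the sum of the $\mathrm{pp}_\alpha$ with $\alpha \in \Omega_j$. For the unbounded component, also add $p$ (including its constant term).

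Take $f$ analytic and integrable on $\Omega_j$. Green's formula gives
\begin{equation*}
\int_{\Omega_j} f \ud A = \frac{1}{2\pi i}\int_{\partial\Omega_j} f\,\bar z\,\ud z = \frac{1}{2\pi i}\int_{\partial\Omega_j} f\,\mathsf S_j\,\ud z ,
\end{equation*}
with the orientation induced by $\Omega_j$. Now replace $\mathsf S_j$ by $r_j$. The difference $\mathsf S_j - r_j$ is holomorphic in $\Omega_j$, so by Cauchy's theorem the difference of the two contour integrals vanishes on bounded components. On the unbounded component, integrability of $f$ forces $f = O(z^{-2})$, while $C_K = O(1/z)$ and the leftover principal parts of the other poles are also $O(1/z)$. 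The residue at $\infty$ of $f$ times that leftover then vanishes, so the difference of contour integrals is zero there as well. This gives \eqref{eq_quadrature identity} with $r_{\Omega_j} = r_j$.

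\textbf{Step 5: summing.} Every pole of $\partial H$ lies in exactly one $\Omega_j$ by Step~2, and the polynomial part was assigned to the unbounded component. Therefore $\sum_j r_j = \partial H$, which is \eqref{LM sum of quad functions}.

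\textbf{Main obstacle.} The delicate points are the regularity needed to justify Green's formula on $\partial\Omega_j$ and the finiteness of the number of components. For both I would first invoke Sakai's regularity theory. Since $\Omega_j$ carries a meromorphic Schwarz function, its boundary is real-analytic except for finitely many cusps or double points. Green's formula can then be applied on slightly shrunken smooth subdomains of $\Omega_j$ and passed to the limit. Finiteness of the components should follow because each component must contain a pole of $\partial H$ or contain $\infty$: otherwise $\mathsf S_j$ is holomorphic on a bounded $\Omega_j$, and taking $f = 1$ in the quadrature identity gives $\int_{\Omega_j} \ud A = 0$, which is impossible.
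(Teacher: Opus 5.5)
The paper gives no proof of this statement and simply imports it from \cite{LM16}. Your argument is the standard one underlying that result, and it is correct: the Euler--Lagrange relation $\bar z-\partial H=C_K$ on $K$ makes $\partial H+C_K$ a Schwarz function on each component of $K^c$, Sakai's regularity theorem controls the boundary, and $r_j$ is read off as the principal parts of $\partial H$ in $\Omega_j$, with the polynomial part assigned to $\Omega_\infty$.

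The only point to tighten is Step~1. Continuity from $\interior K$ gives the identity only on $\overline{\interior K}$, but you need it at every point of $\partial K$ before invoking Sakai. It does hold there: $V(z):=Q(z)-2\int\log|z-w|\,\mathrm{d}\mu(w)$ is $C^1$ near $K$, and by Frostman's conditions it attains its minimum over the open set $U$ at every point of $K$. Hence $\nabla V=0$ on all of $K$, which is exactly the identity you need.
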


It is well known that a domain $\Omega$ is a quadrature domain if and only if it admits a Schwarz function. In this case, the Schwarz function and the quadrature function are related by 
\begin{equation}\label{eq_Schwarz QD}
    \mathsf{S}_\Omega(z) = r_\Omega(z) + \int_{\Omega^c}\frac{\ud A(\zeta)}{z-\zeta}, \quad z \in \bar{\Omega}.
\end{equation}
Another useful characterisation of local droplets is given as follows. 

\begin{lem}[\textbf{cf. Lemma 3.2 in}~\cite{LM16}]\label{lem_variational equality}
Given an algebraic Hele-Shaw potential $Q$ of the form~\eqref{def of algebraic HeleShaw}, a compact set $K$ is a local droplet if and only if $\mathrm{area}(K) = \pi t_0$ and there exists a constant $C$ such that 
\begin{equation}\label{eq_Euler lagrange}
    C = \int_K \log \frac{1}{|z-\zeta|^2}\ud A(\zeta) + t_0 Q(z),
\end{equation}
for all $z\in K$.
\end{lem}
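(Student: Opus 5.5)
The plan is to reduce the statement to Frostman's characterisation of the equilibrium measure for the localised potential $Q_{\bar U}$ in \eqref{def of localised Q}. The key observation is that $\Delta Q = 1/t_0$ is constant for \eqref{def of algebraic HeleShaw}. So any candidate equilibrium measure supported on $K$ must be $\mu = t_0^{-1}\mathbbm{1}_K\,\ud A$ by \eqref{eq for Frostman}. With $\ud A = \ud^2 z/\pi$, this $\mu$ is a probability measure exactly when $\operatorname{area}(K) = \pi t_0$ (Lebesgue area). I will write
\[
F(z) := \int_K \log\frac{1}{|z-\zeta|^2}\,\ud A(\zeta) + t_0 Q(z) - C ,
\]
which is $t_0$ times the Frostman function $2U^{\mu}+Q$, minus a constant.

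\emph{Forward direction.} Suppose $K$ is the droplet of $Q_{\bar U}$. Its equilibrium measure is $t_0^{-1}\mathbbm{1}_K\,\ud A$, and the mass condition gives $\operatorname{area}(K)=\pi t_0$. The Euler--Lagrange equality of potential theory, $2U^{\mu}+Q = \text{const}$ on $\operatorname{supp}\mu$, holds quasi-everywhere a priori. Multiplying it by $t_0$ gives \eqref{eq_Euler lagrange} on $\operatorname{supp}\mu$. To get everywhere on $K$, I use that the logarithmic potential of a bounded density is $C^1$, so both sides are continuous and the identity extends to the closure of the support, which is $K$.

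\emph{Converse.} Suppose $\operatorname{area}(K)=\pi t_0$ and \eqref{eq_Euler lagrange} holds on $K$. Then $F=0$ on $K$. In our normalisation $\Delta\log|z|^2=\delta_0$, so $\Delta F = 1-\mathbbm{1}_K = \mathbbm{1}_{K^c}\ge 0$. Moreover $F\in C^{1,1}_{\rm loc}$, so $\nabla F=0$ on $K$. It then suffices to find an open $U\supset K$ with $F\ge 0$ on $\bar U$. Given such $U$, $\mu=t_0^{-1}\mathbbm{1}_K\,\ud A$ satisfies the Frostman equality on its support and the Frostman inequality on $\bar U$. By uniqueness of the minimiser of $I_{Q_{\bar U}}$, $\mu$ is the equilibrium measure, so $K$ is the droplet of $Q_{\bar U}$.

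\emph{Main obstacle.} The hard step is producing the neighbourhood $U$ where $F\ge 0$. Near a point of $\partial K$, $F$ vanishes to first order and satisfies $\Delta F=1$ off $K$, which is the obstacle-problem structure. My first attempt would use the algebraic Hele--Shaw structure. The identity $\partial F = 0$ on $K$ together with $\bar\partial\partial F=1$ off $K$ shows that $\bar z - t_0^{-1}\cdot(\text{Cauchy transform})$ continues the Schwarz function of each component of $K^c$, as in \eqref{eq_Schwarz QD}. Hence $\partial K$ is real-analytic away from finitely many singular points. At regular boundary points, the Cauchy--Kovalevskaya expansion then gives $F(z)\asymp \operatorname{dist}(z,K)^2/2>0$ just outside $K$. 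The finitely many singular points (cusps or double points) would need a separate local check. Failing that, I would fall back on taking $U$ to be the union of $K$ with small discs on which $F\ge 0$, which I would justify from the nonnegativity of $F$ along normals guaranteed by $\Delta F = 1$ and $\nabla F|_{\partial K} = 0$.
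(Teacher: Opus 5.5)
\paragraph{Where the argument breaks.} Your forward direction is fine, and so is your reduction of the converse to finding an open $U\supset K$ with $F\ge 0$ on $\bar U$ and then applying Frostman uniqueness. The paper itself gives no proof here; it only cites \cite{LM16}. The gap is the step you flag yourself, and your fallback does not close it. The conditions $\Delta F=\mathbbm 1_{K^c}$ and $F=|\nabla F|=0$ on $\partial K$ give $F\approx 2\operatorname{dist}(z,K)^2$ only near regular boundary points. At a singular boundary point they do not force $F\ge 0$ on any disc.

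\paragraph{A counterexample.} With the definition \eqref{def of localised Q}, the ``if'' direction is false for general compact $K$. Take $\psi(w)=w+\frac{1}{2w^2}$ and $K=\C\setminus\psi(\D_e)$, the closed deltoid, which has three $(2,3)$-cusps. Take $Q=2|z|^2-\frac23\re z^3$, so $t_0=\frac12$ and $\partial H(z)=\frac12 z^2$.
\begin{itemize}
\item The area condition holds: $\operatorname{area}K=\pi(1-2\cdot\frac14)=\pi t_0$.
\item $\mathsf S_{K^c}(\psi(w))=\frac1w+\frac{w^2}{2}=\frac12\psi(w)^2+O(w^{-1})$, so $\mathsf S_{K^c}-\partial H$ is holomorphic on $K^c$ and vanishes at $\infty$.
\item The Cauchy--Green computation used for \eqref{eq_Euler lagrange2} gives $\partial F=0$ on $\operatorname{int}K$. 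Hence \eqref{eq_Euler lagrange} holds on $K=\overline{\operatorname{int}K}$.
\item On $K^c$ one has $\partial F=\bar z-\mathsf S_{K^c}(z)$.
\end{itemize}
Along the ray $\psi((1,\infty))=(\frac32,\infty)\subset K^c$, and using $F(\frac32)=0$, this gives
\[
\frac{\ud}{\ud w}F(\psi(w))=2\big(\psi(w)-\psi(1/w)\big)\psi'(w)=-\frac{(w-1)^4(w+1)(w^2+w+1)}{w^5}<0,\qquad w>1.
\]
So $F<0$ at points of $K^c$ arbitrarily close to the cusp $\frac32\in K$. Frostman's inequality for $Q_{\bar U}$ therefore fails for every open $U\supset K$. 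Thus $K$ satisfies both conditions of the lemma but is not a local droplet, and no ``separate local check'' at cusps can rescue the argument.

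\paragraph{What can be proved.} Your argument does establish the equivalence under an extra hypothesis. Suppose $\partial K$ consists of finitely many smooth Jordan curves. Then every boundary point is regular and the Schwarz function continues across $\partial K$. Your Cauchy--Kovalevskaya expansion plus compactness then gives $F\ge c\operatorname{dist}(z,K)^2$ on a uniform collar, which produces $U$. Alternatively, you can build ``$F\ge 0$ on a neighbourhood of $K$'' into the characterisation.

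Two further points are needed even then:
\begin{itemize}
\item Frostman only identifies the droplet of $Q_{\bar U}$ as $\operatorname{supp}(\mathbbm 1_K\,\ud A)$. To conclude that the droplet is $K$ itself, you need $K=\overline{\operatorname{int}K}$, or at least $K=\operatorname{supp}(\mathbbm 1_K\,\ud A)$.
\item The same density assumption is what justifies your claim that $\nabla F=0$ on all of $K$.
\end{itemize}
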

From the viewpoint of logarithmic potential theory, Lemma~\ref{lem_variational equality} remains valid for droplets associated with general external potentials; see~\cite{ST97}. The constant $C$ is commonly referred to as the \emph{(modified) Robin constant}. If $K$ is not only a local droplet but also the (global) droplet of $Q$, we denote $C = C_Q$ and refer to it as the Robin constant associated with the external potential $Q$.

We also note that a classical result~\cite{AS76} asserts that a simply connected domain is a quadrature domain if and only if its conformal map is rational, in agreement with the polynomial curves obtained in~\cite{EF05}. This characterisation has been used to derive explicit conformal maps of droplets arising in the context of non-Hermitian random matrix theory; see~\cite{BBLM15,By24,BFL25,BY25,ABK21}. 

An alternative, yet equivalent, approach to computing the conformal map is to solve the string equation in terms of exterior moments~\cite{WZ00,EF05}; see also~\eqref{eq_integrability} below. For explicit implementations of this method and its connection to vector equilibrium problems, we refer to~\cite{BS20,KKL25}.

The second part of the Lee--Makarov theorem in \cite{LM16} classifies all possible topologies of the droplet for a given degree of the rational function $\partial H$, under the assumption that the boundary is non-singular. This result also played a crucial role in the authors' previous work~\cite{BY25}, where it was used to determine all possible topological phases corresponding to the external potential~\eqref{potQ}.  
For boundaries with singularities, such as cusps or double points, a classification including the number and types of singularities has been recently obtained in~\cite{MR25}. For further background on quadrature domains, we refer to~\cite{GS05} and the references therein.

For a simply connected (unbounded) quadrature domain, the Liouville action~\eqref{def of Liouville action} admits an explicit representation due to the rationality of the conformal map. Suppose that the normalised exterior conformal map is given by a rational function 
\begin{equation}
    \psi'(w) = \psi'(\infty) \frac{(w-p_1) \ldots (w-p_r)}{(w-q_1)\ldots (w-q_r)}.
\end{equation}
By standard residue computations (see, for instance,~\cite[Section~6]{ZW06}), it is straightforward to verify that
\begin{equation}\label{eq_rational action}
    \SS[\psi] = -4\log \psi'(\infty) - \sum_{j=1}^r \log \psi'(1/\bar{p}_j) + \sum_{j=1}^r \log \psi'(1/\bar{q}_j).
\end{equation}
Using~\eqref{eq_rational action}, we obtain explicit expressions for the Liouville action of connected domains; see Remark~\ref{rem_explicit Liouville}.

\subsection{Determinantal Coulomb gases and planar orthogonal polynomials}\label{sec RNM OP}

As previously mentioned, a remarkable feature of determinantal Coulomb gases is their underlying integrable structure. We briefly review several aspects of this structure, together with the basic properties of the planar orthogonal polynomials~\eqref{def of planar OP}. 

Given an external potential $Q$ satisfying the growth condition
\begin{equation}\label{eq_growth cond}
    \liminf_{|z|\to \infty}\frac{Q(z)}{2\log |z|}> 1,
\end{equation}
we define the Bergman polynomial kernel associated with $L^2(\C, e^{-NQ(z)} \ud A)$ by 
\begin{equation}\label{def of polynomial kernel}
    \K_N(z,z') = \sum_{n=0}^{N-1}\frac{P_{n,N}(z) \overline{P_{n,N}(z')}}{h_{n,N}},  
\end{equation}
where $P_{n,N}$ are the monic planar orthogonal polynomials defined via the orthogonality relation~\eqref{def of planar OP}. Note that $\K_N$ is the reproducing kernel of the space of complex polynomials of degree less than $N$, endowed with the inner product 
\begin{equation}
    \langle p_1, p_2\rangle_{e^{-NQ}\ud A} := \int_\C p_1(z) \overline{p_2(z)} e^{-NQ(z)}\ud A(z).
\end{equation}
The Bergman polynomial kernel encodes the $m$-point correlation functions of the eigenvalues. In particular, the one-point function of the determinantal Coulomb gas \eqref{eq_RNM} can be expressed as 
\begin{equation}\label{def of one point function}
    \rho_N(z) := \frac{1}{N}\int_{\C^{N-1}} \mathbb{P}_N(z,z_2,\ldots, z_N)\prod_{j=2}^N \ud A(z_j) =\frac{1}{N}\K_N(z,z) e^{-NQ(z)}.
\end{equation}

The asymptotic behaviour of planar orthogonal polynomials has been extensively studied for special classes of external potentials using the Riemann--Hilbert approach. An incomplete list of works in this direction includes~\cite{BBLM15,BFKL26,BK12,DMMS25,KKL25,LY17,LY23,BEG18,BGM17}. 
In addition, fine asymptotic expansions (up to an error of order $O(N^{-k_0-1})$) can be obtained via partial Schlesinger transformations; see e.g.~\cite{BL08,LY17,LY23,BSY25,KLY25,BY23}. 

For more general potentials, a major breakthrough was achieved by Hedenmalm and Wennman~\cite{HW21}, who generalised results from special cases within the framework of foliation flow. This approach was later reformulated in~\cite{Hed24} in terms of a soft Riemann--Hilbert problem. 
Here we present the main theorem of~\cite{HW21,Hed24} in a form suited to our purposes.

\begin{thm}[\textbf{cf. Theorem 1.3 in \cite{HW21}, Theorem  1.2 in \cite{Hed24}}]\label{thm_Hedenmalm Wennman}
Let $Q$ be a $C^2$-smooth external potential satisfying the growth condition~\eqref{eq_growth cond}, associated with the connected (possibly multiply connected) droplet $K$. 
Let $\Omega_\infty$ denote the unbounded component of $K^c$, and let $\Gamma = \partial \Omega_\infty$ be the outermost boundary of $K$. 

Let $\phi : \Omega_\infty \to \D_e$ be the inverse of normalised exterior conformal map of $\Omega_\infty$, and let $\QQ$ be the unique bounded holomorphic function in a fixed neighbourhood of $\overline{\Omega}_\infty$ such that $\re \QQ = Q$ on $\Gamma$ and $\im \QQ(\infty) = 0$. We further assume that $Q$ is real-analytic and strictly subharmonic in a neighbourhood of $\Gamma$.

Then, for any positive integer $k_0$, there exist bounded holomorphic functions $F_k$ ($k=0,1,\ldots,k_0$), defined in a fixed neighbourhood of $\Omega_\infty$, such that 
\begin{equation}\label{def of HW expansion}
    P_{N,N}(z) = (\phi'(\infty))^{-N-1}(\phi(z))^N e^{N\QQ(z)/2}\Big(\sum_{k=0}^{k_0}N^{-k}F_k(z) +O(N^{-k_0-1})\Big), 
\end{equation}
as $N \to \infty$, where the error term is uniform for all $z \in \C$ satisfying 
\begin{equation}\label{eq_Omega dist}
    \textup{dist}(z, \Omega_\infty) \le (N^{-1}\log N)^{1/2}.
\end{equation}
\end{thm}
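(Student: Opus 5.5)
The plan is to deduce the statement from the results of \cite{HW21,Hed24}, not to redo the foliation-flow analysis. The main work is to check that the hypotheses there are met in our setting. The one genuine adaptation is that the droplet $K$ may be multiply connected, while only the outer boundary $\Gamma=\partial\Omega_\infty$ enters the expansion \eqref{def of HW expansion}.

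First, I will set up the ansatz. Since $Q$ is real-analytic and strictly subharmonic near $\Gamma$, the function $\QQ$ exists on a fixed neighbourhood of $\overline{\Omega}_\infty$, obtained by harmonic extension of $Q|_\Gamma$ followed by taking the holomorphic completion. The same holds for $\phi$, by reflection across the real-analytic curve $\Gamma$. I then consider quasi-polynomials
\[
\Phi_N(z)=(\phi'(\infty))^{-N-1}\phi(z)^N e^{N\QQ(z)/2}\sum_{k=0}^{k_0}N^{-k}F_k(z)
\]
near $\Gamma$. By Lemma~\ref{lem_variational equality}, the function $|\phi^N e^{N\QQ/2}|^2 e^{-NQ}$ equals $1$ on $\Gamma$ and decays like a Gaussian in the normal direction.

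The coefficients $F_k$ are determined by the approximate orthogonality condition
\[
\int \Phi_N\,\bar{p}\,e^{-NQ}\,\ud A=O(N^{-k_0-1})\quad\text{for all } p \text{ of degree } <N.
\]
Following \cite{HW21}, I would rewrite this condition in flow coordinates $z=f_s(\zeta)$, where the foliation is built by Laplace's method in the normal variable. Order by order in $N^{-1}$, it becomes a sequence of conditions of the form $|F_0|^2\,(\text{Jacobian})=(\Delta Q)^{1/2}\cdot\text{const}$ on $\T$. The next steps give real-valued equations whose right-hand sides depend on $F_0,\dots,F_{k-1}$. Each of these is solved by a holomorphic function on $\Omega_\infty$ bounded at $\infty$, using the Herglotz/Szeg\H{o} outer-function construction. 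The normalisation $\im\QQ(\infty)=0$ together with monicity fixes the constants, and this is what produces the prefactor $(\phi'(\infty))^{-N-1}$. This is exactly the algorithm of \cite[Theorem 1.2]{Hed24}, which I will quote.

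Next, I cut $\Phi_N$ off by a smooth function $\chi$ that equals $1$ near $\overline{\Omega}_\infty\cup\{|z-\Gamma|<\delta\}$. I then correct $\chi\Phi_N$ to a genuine polynomial of degree $N$ by solving a $\bar\partial$-problem with H\"ormander's $L^2$ estimate, using the weight $NQ$ modified so that it grows like $2(N+1)\log|z|$ at infinity. The estimate gives an $L^2(e^{-NQ})$ bound of order $O(N^{-k_0-1})$ on the correction, and uniqueness of the monic orthogonal polynomial identifies the result with $P_{N,N}$ up to an error of that order. The $L^2$ error is converted into the pointwise uniform bound on the region \eqref{eq_Omega dist} by the sub-mean-value property on discs of radius about $N^{-1/2}$. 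On that region, $|\phi^N e^{N\QQ/2}|^2e^{-NQ}$ is at least a negative power of $N$, so dividing by the prefactor costs at most polynomial factors; absorbing these requires enlarging $k_0$ by a fixed amount.

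I expect the main obstacle to be justifying this last step when $K$ has holes $\Omega_1,\dots,\Omega_d$. The functions $F_k$ are only defined near $\overline{\Omega}_\infty$, so the cut-off error $\bar\partial\chi\cdot\Phi_N$ is supported inside $K$, at distance at least $\delta$ from $\Gamma$. I must show that it is exponentially small in $L^2(e^{-NQ})$. I would handle this through the strict inequality $|\phi|^2 e^{\re\QQ}<e^{Q}$ at positive distance inside $K$. This inequality follows from the maximum principle, since $Q-\log|\phi|^2-\re\QQ$ is subharmonic, equals $0$ on $\Gamma$, and is nonnegative with a quadratic lower bound off $\Gamma$ by the obstacle-problem characterisation of the droplet. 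It also has to hold near the inner boundaries, where one uses that $Q$ dominates its obstacle there as well. Once this gives $e^{-cN}$ decay, the H\"ormander step goes through unchanged and the expansion is insensitive to the inner components.
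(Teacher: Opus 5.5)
Your overall route matches the paper's. The paper does not reprove this statement: it quotes \cite{HW21,Hed24}, notes that only the behaviour of $Q$ near $\Gamma$ matters, and covers the multiply connected case by citing the authors' confirmation plus the heuristic that $|P_{N,N}|^2e^{-NQ}$ approximates harmonic measure of $\Omega_\infty$ at $\infty$.

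Where you try to make the insensitivity to holes precise, however, your argument for the ``main obstacle'' is wrong. The function $R=Q-\log|\phi|^2-\re\QQ$ is subharmonic, so the maximum principle bounds it from above, not from below. The obstacle-problem characterisation gives $R=Q-\check Q\ge0$ only on $\Omega_\infty$, where $\check Q$ is the obstacle function. Inside $K$, $\log|\phi|^2+\re\QQ$ is the harmonic continuation of $\check Q|_{\Omega_\infty}$, not $\check Q$ itself (which equals $Q$ on $K$). In general this continuation exists only in a collar along $\Gamma$. In Regime~I, for example, $\phi_{1+c}$ has branch points at the foci $\pm2\sqrt{\tau(1+c)}$ of $\mathsf{E}$, and these can lie in $K$; so nothing is defined ``near the inner boundaries''.

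What is true, and all that is needed, is local. On $\Gamma$ one has $R=0$, $\nabla R=0$ and $\Delta R=\Delta Q>0$, so $R\asymp\mathrm{dist}(z,\Gamma)^2$ on both sides of $\Gamma$ in a fixed collar. Placing $\operatorname{supp}\bar\partial\chi$ in that collar gives the bound $e^{-cN\delta^2}$. The holes then never appear: the H\"ormander step with weight $N\check Q+\log(1+|z|^2)$ uses only $\check Q\le Q$ on $\C$ and $\check Q=Q$ on $K$, both valid for any topology. That, not an estimate near the inner boundaries, is why the proof of \cite{HW21,Hed24} carries over.

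Your passage from $L^2$ to pointwise bounds also fails as written. The region \eqref{eq_Omega dist} contains all of $\Omega_\infty$. There $|\phi^Ne^{N\QQ/2}|^2e^{-NQ}=e^{-NR}$ is exponentially small away from $\Gamma$; for $Q=|z|^2$, $R=|z|^2-1-2\log|z|$. Dividing a sub-mean-value bound by the prefactor therefore loses an exponential factor, not a polynomial one. Use the sub-mean-value estimate only within $O((N^{-1}\log N)^{1/2})$ of $\Gamma$. Then apply the maximum principle to $(P_{N,N}-\tilde P)/(\phi^Ne^{N\QQ/2})$, where $\tilde P=\chi\Phi_N-u$ is your corrected polynomial. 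This quotient is holomorphic on $\Omega_\infty$ and vanishes at $\infty$, since the numerator has degree $<N$.

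Finally, monicity does not yield the bare prefactor $(\phi'(\infty))^{-N-1}$. The condition $\im\QQ(\infty)=0$ does not make $\QQ(\infty)$ vanish: under $Q\mapsto Q+C$, $\re\QQ(\infty)$ shifts by $C$ while $P_{N,N}$ is unchanged. For $Q=|z|^2$, for instance, $\QQ\equiv1$ and $P_{N,N}=z^N$. With $F_0(\infty)=\phi'(\infty)$, the correct prefactor is $(\phi'(\infty))^{-N-1}\phi^Ne^{N(\QQ-\QQ(\infty))/2}$. This normalisation slip is already in the statement; the paper's later formula for $g$ drops the same constant, so its applications are unaffected. Your sketch, however, asserts the wrong normalisation instead of catching it.
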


\begin{rem}[Hedenmalm--Wennman theorem for $M=tN$] \label{rem HW for M=tN}
The asymptotic expansion~\eqref{def of HW expansion} also remains valid for the general planar orthogonal polynomial $P_{M,N}(z)$ in the regime $M=tN \to\infty$ where $t>0$ is fixed. Indeed, if the external potential $Q$ satisfies the growth condition
\begin{equation}
    \liminf_{|z|\to \infty}\frac{Q(z)}{2\log |z|}>t 
\end{equation}
for some $t>0$, in place of~\eqref{eq_growth cond}, then Theorem~\ref{thm_Hedenmalm Wennman} applies after the rescaling $Q\mapsto Q/t$. 

More precisely, suppose that the rescaled external potential $Q/t$ satisfies~\eqref{eq_growth cond} together with the assumptions of Theorem~\ref{thm_Hedenmalm Wennman}. Let $K$, $\Omega_\infty$, $\Gamma$, and $\phi$ be defined analogously for the potential $Q/t$. Furthermore, let $\QQ$ denote the unique bounded holomorphic function in a fixed neighbourhood of $\overline{\Omega}_\infty$ satisfying $\re \QQ= Q$ on $\Gamma$ and $\im \QQ(\infty)=0$. Then,  
there exist bounded holomorphic functions $F_k(\cdot; t)$ $(k=0,1,\ldots, k_0)$ such that
\begin{equation}\label{def of HW inhomogeneous}
    P_{M,N}(z) = (\phi'(\infty))^{-M-1}(\phi(z))^M e^{N\QQ(z)/2}\Big(\sum_{k=0}^{k_0}N^{-k}F_k(z;t) +O(N^{-k_0-1})\Big), 
\end{equation}
for $z\in \C$ satisfying~\eqref{eq_Omega dist}
as $M=tN \to \infty$ along integers. 

As we explain below, the coefficients $F_k(z;t)$ are completely determined by the local behaviour of the potential $Q$ near $\Gamma$. In particular, $F_k(z;t)$ depend continuously on $t$, provided that the conformal map $\phi$ varies continuously with respect to $t$. In this situation, the outermost boundary $\Gamma$ deforms continuously, and the droplet $K$ undergoes no topological phase transition.
\end{rem}



\begin{rem}[Hedenmalm--Wennman theorem for multiply connected droplets]
One aspect of the statement in Theorem~\ref{thm_Hedenmalm Wennman} that differs from its original formulation in~\cite{HW21,Hed24} is that it applies not only to simply connected droplets, but also to multiply connected ones, provided that the droplet itself is connected. Indeed, although the authors state their main results for the simply connected case, the proof extends to the multiply connected setting as well (as confirmed by the authors). 

To give an intuitive explanation, we note that the harmonic measure at infinity for $\Omega_\infty$, supported on $\Gamma$, is asymptotically given by $|P_{N,N}|^2 e^{-NQ}$; see, for instance,~\cite{HM13,BBLM15,AHM11}. Formally, this suggests that the asymptotic behaviour of $P_{N,N}$ observed at infinity is governed by data supported in a neighbourhood of $\Gamma$, and is therefore insensitive to the multiple connectedness of the droplet (or the presence of non-trivial inner boundary components); cf. \cite{AC26}. 
\end{rem}

\begin{rem}[Hedenmalm--Wennman theorem for localised potentials] \label{Rem_HW localised potentials}
Although Theorem~\ref{thm_Hedenmalm Wennman} assumes that the external potential is $C^2$-smooth on the entire complex plane, this assumption is only needed to ensure the regularity of $\partial K$ and of the harmonic continuation $\re \QQ$; see~\cite{HM13} and~\cite[Section~2.1]{HW21}. 

Consequently, Theorem~\ref{thm_Hedenmalm Wennman} also applies to localised algebraic Hele-Shaw potentials~\eqref{def of localised Q}, provided that the associated local droplet exists. Indeed, the boundary of a local droplet is a union of algebraic curves~\cite{GS05,LM16}, and the harmonic continuation $\re \QQ$ admits the explicit representation~\eqref{eq_QQ algebraic Hele Shaw}. Furthermore, the localised potential~\eqref{def of localised Q} is real-analytic and strictly subharmonic on a neighbourhood of $K$. See Theorem~\ref{thm_F0F1} below. 
\end{rem}

In addition to the asymptotic expansion~\eqref{def of HW expansion}, a constructive algorithm for deriving the subleading terms $F_k$ is presented in~\cite{HW21,Hed24}. For our purposes, in view of the deformation framework in Proposition~\ref{prop_key}, it is necessary to identify explicitly the first subleading correction term $F_1$. However, implementing the algorithm in~\cite{HW21,Hed24} at a fully explicit level is technically highly demanding. To the best of our knowledge, no such explicit implementation has appeared in the literature. 

Indeed, although the results of~\cite{HW21,Hed24} have recently been applied in the determinantal 2D Coulomb gas literature to a number of important problems, such as macroscopic correlation functions~\cite{AC23}, anomalous fluctuations~\cite{AC26}, smallest gaps~\cite{Ch25}, and number variance~\cite{MMO25}, these applications are largely restricted to the use of leading-order asymptotics in~\eqref{def of HW expansion}. 

In contrast, the main contribution of the present paper as an application of the Hedenmalm--Wennman theory lies in the identification of subleading corrections, which are essential for our proof strategy (and, we believe, optimal for establishing our main results). These contributions can be summarised as follows:
\begin{itemize}
    \item In Theorem~\ref{thm_F0F1} below, we provide a fully explicit description of the first $1/N$ correction term $F_1$ in the general algebraic Hele-Shaw setting. 
    \smallskip
    \item In the special case~\eqref{potQ} of algebraic Hele-Shaw potentials, we characterise its direct relation to zeta-regularised spectral determinants and the Liouville action.
\end{itemize} 


We conclude this subsection by briefly outlining the algorithm for constructing the terms $F_k$ in Theorem~\ref{thm_Hedenmalm Wennman}. We first define 
\begin{equation}\label{def of R}
    R(z) = Q(z)- \re \QQ(z) - 2\log |\phi(z)|,
\end{equation}
in a fixed neighbourhood of $\overline{\Omega}_\infty$. This function is closely related to the logarithmic potential of the equilibrium measure, as well as to the obstacle function associated with $Q$. It will play an important role in the algorithm, and for this purpose it is convenient to have an explicit description. In fact, for algebraic Hele-Shaw potentials, the function $R$ admits an explicit representation in terms of the Schwarz function $\mathsf{S}_{\Omega_\infty}$, which we now formulate. 

\begin{prop}\label{prop_R Hele-Shaw}
Let $K$ be a connected local droplet of area $\pi t_0$ associated with an algebraic Hele-Shaw potential $Q$ of the form \eqref{def of algebraic HeleShaw}. Let $\Omega_\infty$, $\phi$, $\Gamma$, and $\QQ$ be as defined in Theorem~\ref{thm_Hedenmalm Wennman}. Then, for a fixed point $z_0 \in \Gamma$, the function $R$ defined in~\eqref{def of R} is given by 
\begin{equation}\label{def of R Hele-Shaw}
    R(z) = \frac{1}{t_0}\bigg( |z|^2 -|z_0|^2-2\re \int_{z_0}^z\mathsf{S}_{\Omega_\infty}(u) \ud u\bigg),
\end{equation}
which is real-analytic in a neighbourhood of $\overline{\Omega}_\infty$.
\end{prop}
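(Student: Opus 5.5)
We compare the two sides through their $\partial$-derivatives and then fix the additive constant on $\Gamma$. Write $\widetilde R$ for the right-hand side of \eqref{def of R Hele-Shaw}. The claim is that $R$ and $\widetilde R$ agree on a neighbourhood of $\overline{\Omega}_\infty$.

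\emph{Step 1: the $\partial$-derivative of $R$.} By \eqref{def of algebraic HeleShaw}, $Q = t_0^{-1}(|z|^2 - H)$ with $H$ real harmonic, so $\partial Q = t_0^{-1}(\bar z - \partial H)$. Since $\QQ$ is holomorphic, $\partial \re \QQ = \tfrac12 \QQ'$. Likewise $\partial (2\log|\phi|) = \phi'/\phi$. Hence
$$
\partial R = \frac{1}{t_0}\big(\bar z - \partial H(z)\big) - \tfrac12 \QQ'(z) - \frac{\phi'(z)}{\phi(z)} .
$$
On the other side, $\partial \widetilde R = t_0^{-1}(\bar z - \mathsf{S}_{\Omega_\infty}(z))$, because $\int_{z_0}^z \mathsf S$ is holomorphic in $z$. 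So everything reduces to the holomorphic identity
$$
\mathsf{S}_{\Omega_\infty}(z) = \partial H(z) + \tfrac{t_0}{2}\QQ'(z) + t_0\frac{\phi'(z)}{\phi(z)}, \qquad z \in \Omega_\infty ,
$$
where the functions are first continued into a neighbourhood of $\Gamma$.

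\emph{Step 2: proving the holomorphic identity.} This is the main step. Let $G$ denote the difference of the two sides. I will show that $G$ is holomorphic on $\Omega_\infty$, vanishes at $\infty$, and has real part zero on $\Gamma$ after multiplication by a suitable tangent factor; this forces $G \equiv 0$.

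For the behaviour in $\Omega_\infty$, use \eqref{eq_Schwarz QD}: $\mathsf S_{\Omega_\infty} = r_{\Omega_\infty} + \int_{\Omega_\infty^c} \frac{\ud A(\zeta)}{z-\zeta}$. By Theorem~\ref{thm_local droplet}, $r_{\Omega_\infty}$ carries exactly the poles of $\partial H$ lying in $\Omega_\infty$ (together with the polynomial part at $\infty$), since the other quadrature functions $r_j$ have poles only in the other components. Therefore $\mathsf S_{\Omega_\infty} - \partial H$ is holomorphic in $\Omega_\infty$. Its behaviour at $\infty$ is $t_0/z + O(z^{-2})$, because $\mathrm{area}(K) = \pi t_0$ and the bounded holes contribute only through holomorphic terms. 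The terms $\tfrac{t_0}{2}\QQ'$ and $t_0\phi'/\phi$ are holomorphic in $\Omega_\infty$ with $t_0\phi'/\phi = t_0/z + O(z^{-2})$ and $\QQ' = O(z^{-2})$. So the $1/z$ terms cancel and $G = O(z^{-2})$ at infinity.

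For the boundary condition, the cleanest route is tangential integration. On $\Gamma$ we have $\re \QQ = Q$ and $\log|\phi| = 0$. The variational equality \eqref{eq_Euler lagrange} gives that $t_0 Q$ plus the logarithmic potential of $\mathbbm 1_K$ is constant on $K$, so the tangential derivative of $\partial$ of it vanishes along $\Gamma$. Combining this with $\mathsf S = \bar z$ on $\Gamma$, I expect to get $\re\big(G(z)\,\ud z\big) = 0$ along $\Gamma$.

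\emph{Step 3: conclusion.} The condition $\re(G\,\ud z) = 0$ on $\Gamma$, together with $G = O(z^{-2})$, means that $\int^z G$ is a single-valued holomorphic function on $\Omega_\infty$, bounded at $\infty$, with constant real part on $\Gamma$. By the maximum principle it is constant, so $G \equiv 0$.

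Step 1 then gives $\partial(R - \widetilde R) = 0$. Since $R - \widetilde R$ is real-valued, this forces it to be constant. At $z_0 \in \Gamma$ both sides vanish: $R(z_0) = Q - \re\QQ - 0 = 0$, and $\widetilde R(z_0) = 0$ by construction. Hence $R = \widetilde R$.

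Real-analyticity follows since $\mathsf S_{\Omega_\infty}$ extends holomorphically across the algebraic curve $\Gamma$. Path independence of $\int_{z_0}^z \mathsf S$ holds because $\mathsf S\,\ud z$ has real periods equal to zero. I expect the main obstacle to be the bookkeeping of poles of $\partial H$ across the components in Step 2, and the multi-valuedness of the integral in the doubly connected case.
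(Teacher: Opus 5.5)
Your reduction is the paper's. The holomorphic identity $\mathsf S_{\Omega_\infty}=\partial H+\tfrac{t_0}{2}\QQ'+t_0\phi'/\phi$ is the derivative of the paper's formula $\QQ(z)=Q(z_0)+\tfrac{2}{t_0}\int_{z_0}^z(\mathsf S_{\Omega_\infty}-\partial H)\,\ud u-2\log\phi(z)$, and your $G\equiv 0$ argument is the uniqueness of $\QQ$ made explicit. However, the step that carries the real content is not justified. Write
\[
\mathsf S_{\Omega_\infty}(z)-\partial H(z)=\int_{\Omega_\infty^c}\frac{\ud A(\zeta)}{z-\zeta}-\sum_{j=1}^d r_j(z).
\]
The Cauchy transform has $1/z$-coefficient $\mathrm{area}(\Omega_\infty^c)/\pi=t_0+\sum_j\mathrm{area}(\Omega_j)/\pi$, so the holes \emph{do} contribute at order $1/z$. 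Saying they ``contribute only through holomorphic terms'' is not a reason, since every term is holomorphic near $\infty$. The cancellation comes from $r_j(z)=\frac{\mathrm{area}(\Omega_j)}{\pi z}+O(z^{-2})$. This is the quadrature identity on $\Omega_j$ applied to $f\equiv 1$ (i.e.\ $\Res_{z=\infty} r_j=-\mathrm{area}(\Omega_j)/\pi$), which is exactly the computation the paper performs.

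Alternatively, the Euler--Lagrange relation you mention belongs here. Differentiating \eqref{eq_Euler lagrange} gives $\bar z-\partial H=\int_K\frac{\ud A(\zeta)}{z-\zeta}$ on $K$. On $\Gamma$ this equals $\mathsf S_{\Omega_\infty}-\partial H$, and both sides are holomorphic in $\Omega_\infty$ and bounded at $\infty$. Hence $\mathsf S_{\Omega_\infty}-\partial H=\int_K\frac{\ud A(\zeta)}{z-\zeta}=t_0/z+O(z^{-2})$ there.

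This point is not cosmetic. Holomorphy together with your boundary condition only forces the $1/z$-coefficient of $G$ to be real. For instance, $G=\alpha\,\phi'/\phi$ with $\alpha\in\mathbb{R}$ satisfies $\re(G\,\ud z)=\alpha\,\ud\log|\phi|=0$ on $\Gamma$. So Step~3 genuinely needs $G=O(z^{-2})$, and that rests on this bookkeeping.

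Second, the boundary condition you leave as ``I expect'' does not need the Euler--Lagrange equation. Set $G:=\mathsf S_{\Omega_\infty}-\partial H-\tfrac{t_0}{2}\QQ'-t_0\phi'/\phi$. Since $\mathsf S_{\Omega_\infty}=\bar z$ on $\Gamma$, your Step~1 formula shows $G=t_0\,\partial R$ on $\Gamma$. Moreover $R\equiv 0$ on $\Gamma$, because $\re\QQ=Q$ and $|\phi|=1$ there. Therefore
\[
\re(G\,\ud z)=t_0\re(\partial R\,\ud z)=\tfrac{t_0}{2}\,\ud(R|_\Gamma)=0
\]
along $\Gamma$. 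This is the same tangential check the paper uses to verify $\re\QQ=Q$ on $\Gamma$. With these two points supplied, your argument is complete.
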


Equivalent formulas to~\eqref{def of R Hele-Shaw} for some special case of potentials have appeared in the literature, see e.g.~\cite[Eq.~(2.24)]{BBLM15} and~\cite[Eq.~(32)]{LR16}, which in turn trace back to~\cite[Section~3]{WZ00} and~\cite[Eq.~(7)]{MWZ00}. 
For completeness, we provide a proof in a more general setting using the quadrature domain framework of~\cite{LM16}.

\begin{proof}[Proof of Proposition~\ref{prop_R Hele-Shaw}]
Note that by \eqref{def of R}, it is enough to show that 
\begin{equation}\label{eq_QQ algebraic Hele Shaw}
    \QQ(z) = Q(z_0)+\frac{2}{t_0} \int_{z_0}^z \big(\mathsf{S}_{\Omega_\infty}(u) -\partial H(u)\big)\ud u  - 2 \log \phi(z)
\end{equation}
for an arbitrarily chosen point $z_0 \in \Gamma = \partial \Omega_\infty$. To establish~\eqref{eq_QQ algebraic Hele Shaw}, it suffices to verify the characterising properties of $\QQ$, namely that $\re \QQ = Q$ on $\Gamma$ and $\im \QQ(\infty) = 0$. The first property follows immediately, since $\re \QQ(z_0) = Q(z_0)$ and, by differentiating the right-hand side of~\eqref{eq_QQ algebraic Hele Shaw}, we obtain
\begin{equation}
    \frac{1}{t_0} \big(\bar{z}- \partial H(z)\big)  = \partial Q(z), \quad z\in \Gamma,
\end{equation}
where we used that $\mathsf{S}_{\Omega_\infty}(z) = \bar{z}$ on $\Gamma = \partial \Omega_\infty$. It therefore remains to show that $\QQ$ is bounded near infinity and $\im \QQ(\infty) =0$.

Since $\partial H$ is rational, there exist constants $k \in \mathbb{N}$, $t_j$, $s_0$, and $s_1$ such that 
\begin{equation}\label{eq_Hele Shaw H}
    H(z)= |z|^2- t_0Q(z) =  2\re \bigg(\sum_{j=1}^k \frac{t_j}{j}z^j + s_0+s_1 \log z + O(\frac{1}{z})\bigg),
\end{equation}
as $z \to \infty$. 
On the other hand, by Theorem~\ref{thm_local droplet}, we have 
\begin{equation}
    K^c = \Omega_1\cup \ldots \cup \Omega_d \cup \Omega_\infty,
\end{equation}
where $\Omega_1, \ldots, \Omega_d$ are disjoint, simply connected, bounded quadrature domains, and $\Omega_\infty$ is the simply connected unbounded quadrature domain. Let $r_1, \ldots, r_d$ and $r_\infty$ denote their respective quadrature functions. 

Note that, by~\eqref{LM sum of quad functions}, the quadrature function $r_\infty$ is uniquely determined by the requirement that it captures the principal part of $\partial H$ at infinity. In particular, it admits the expansion 
\begin{equation}
    r_\infty(z) = \sum_{j=1}^k t_jz^{j-1} +O(\frac{1}{z}), \qquad z \to \infty. 
\end{equation}  
On the other hand, by the quadrature identity~\eqref{eq_quadrature identity} for $\Omega_j$, $j=1,\ldots,d$, we have 
\begin{equation*}
    \frac{1}{\pi}\text{area }\Omega_j= \frac{1}{2\pi i}\int_{\partial \Omega_j} r_j(z) \ud z = - \Res_{z=\infty} r_j. 
\end{equation*}
It follows that $r_j$ has a simple pole at infinity with residue $-\frac{1}{\pi}\mathrm{area}(\Omega_j)$, and hence 
\begin{equation}
    r_j(z) = \frac{\text{area }\Omega_j}{\pi z} + O(\frac{1}{z^2}), \qquad z \to \infty. 
\end{equation} 
Combining the above with~\eqref{eq_Schwarz QD} and $\mathrm{area}(K) = \pi t_0$, we obtain
\begin{align*}
    \mathsf{S}_{\Omega_\infty}(z) -\partial H(z) &= r_\infty(z) + \int_{\Omega_\infty^c} \frac{\ud A(\zeta)}{z-\zeta} - \partial H(z) = \int_{\Omega_\infty^c}\frac{\ud A(\zeta)}{z-\zeta}- \sum_{j=1}^d r_j(z)\\
    &= \frac{\text{area }\Omega_\infty^c}{\pi z} - \sum_{j=1}^d \frac{\text{area }\Omega_j }{\pi z} + O(\frac{1}{z^2}) = \frac{t_0}{\pi z} + O(\frac{1}{z^2}), \quad z\to \infty.
\end{align*} 
Notice that, by definition of the normalised exterior conformal map \eqref{def of psi exterior}, we have
\begin{equation}
\log \phi(z) =  \log \phi'(\infty) + \log z +O(\frac{1}{z}),\qquad z \to \infty. 
\end{equation}
Therefore, $\QQ$ remains bounded and the imaginary part of the right-hand side of~\eqref{eq_QQ algebraic Hele Shaw} vanishes as $z \to \infty$, in accordance with the characterising condition $\im \QQ(\infty) = 0$. This establishes the desired identity~\eqref{eq_QQ algebraic Hele Shaw} and completes the proof. 
\end{proof}

\begin{rem}
As a byproduct of the above proof, we also verify that~\eqref{eq_harmonic moment rel} holds not only for polynomial potentials~\eqref{eq_poly potential}, but also for algebraic Hele-Shaw potentials with simply connected local droplets. In this case, since $K^c = \Omega_\infty$, we readily obtain 
\begin{equation}
    -\int_{K^c} z^{-j} \ud A(z) = \frac{1}{2\pi i}\int_{\partial K} z^{-j}\mathsf{S}_{\Omega_\infty}(z)\ud z = \frac{1}{2\pi i}\int_{\partial K} z^{-j}r_\infty(z)\ud z = t_j. 
\end{equation}
\end{rem}

We proceed to give a brief overview of the algorithm in \cite{Hed24} for constructing the functions $F_k$; a more detailed discussion will be given in Section~\ref{sec simply} with a concrete example. 
We write 
\begin{equation}
\label{def of F}
    F = \sum_{k=0}^{k_0} N^{-k} F_k +O(N^{-k_0-1}),
\end{equation}
where $F_k$ are bounded holomorphic functions in a fixed neighbourhood of $\overline{\Omega}_\infty$. Our goal is to determine the coefficients $F_k$ to the desired order of precision.

In addition, we introduce auxiliary functions $A$ and $B$ via the asymptotic expansions
\begin{equation} \label{def of AB} 
A = \sum_{k=0}^{k_0} N^{-k}A_k +O(N^{-k_0-1}), \qquad B=\sum_{k= 0}^{k_0} N^{-k}B_k+O(N^{-k_0-1}),
\end{equation}
where $A_k$ are bounded holomorphic functions in a fixed neighbourhood of $\overline{\Omega}_\infty$, while $B_k$ are real-analytic functions defined in a neighbourhood of $\Gamma$. These functions $A$ and $B$ arise in the ansatz for the asymptotic expansion of the orthogonal polynomials; see~\cite[Eq.~(1.6.5)]{Hed24}. 
Finally, we impose the normalisation conditions 
\begin{equation}\label{eq_FA asymp infty}
    F(z) = \phi'(\infty) + O(\frac{1}{z}), \qquad A(z) = O(\frac{1}{z}),  
\end{equation}
as $z \to \infty$. 

Since $R(z) = 0$ and $\Delta R = \Delta Q > 0$ on $\Gamma$, there exists a real-analytic function $\wh{R}$ defined in a neighbourhood of $\Gamma$ such that 
\begin{equation}
\wh{R}^2 = R
\end{equation}
and $\wh{R} > 0$ in a neighbourhood outside $\Gamma$. Based on the soft Riemann--Hilbert formulation, the \textit{master equation} governing the asymptotic functions $F$, $A$, and $B$ takes the form
\begin{equation}\label{def of master eq}
    \sqrt{2}A\bar{\partial}\wh{R} + \frac{1}{N} \bar{\partial}B- B\bar{\partial}R =\sqrt{2\pi}\bar{F} 
\end{equation}
on a fixed small neighbourhood of $\Gamma$.\footnote{Note that~\eqref{def of master eq} differs from~\cite[Eq.~(1.7.3)]{Hed24} by certain constant factors of $\sqrt{2}$, due to our choice of planar measure $e^{-NQ}\, \ud A$ rather than $e^{-2NQ}\, \ud A$.}

The master equation~\eqref{def of master eq} can be solved iteratively via the following two steps. First, since $\bar{\partial} R$ vanishes on $\Gamma$, the equation at order $k$ in the $1/N$-expansion of~\eqref{def of master eq} reduces to a Riemann--Hilbert jump problem 
\begin{equation} \label{eq_iteration Ak}
    \sqrt{2}A_k \bar{\partial}\wh{R} +\bar{\partial}B_{k-1} =\sqrt{2\pi}\bar{F}_k,
\end{equation}
where $B_{k-1}$ has already been constructed at the previous stage of the iteration. Since $A$ and $F$ are bounded and holomorphic in a neighbourhood of $\overline{\Omega}_\infty$, the jump problem can be solved using the Szeg\H{o} projection operator $\P_{H^2_{-,0}} : L^2(\T) \to H^2_{-,0}$, defined by 
\begin{equation}\label{def of Szego projection}
    \Big(\P_{H^2_{-,0}} f \circ\phi\Big)(z) = \frac{1}{2\pi}\int_\T \frac{w f(w)}{\phi(z)-w}\ud s(w).
\end{equation}
Here, $H^2_{-,0}$ denotes the codimension-one subspace of the conjugate Hardy space $H^2_{-}$ consisting of functions that vanish at infinity. The elements of $H^2_{-}$ are identified with holomorphic functions on $\D_e$.

Second, the presence of the factor $1/N$ in front of $\bar{\partial} B$ implies that the zeroth-order equation of~\eqref{def of master eq} involves only $A$ and $\overline{F}$. Thus, once $A$ and $F$ have been determined up to order $k$, the coefficient $B_k$ can be obtained by substituting $A_k$, $B_{k-1}$, and $F_k$ into~\eqref{def of master eq}, namely, 
\begin{equation} \label{eq_iteration Bk}
    B_k =\frac{\sqrt{2}A_k \bar{\partial}\wh{R} + \bar{\partial}B_{k-1} - \sqrt{2\pi}F_k}{\bar{\partial }R}.
\end{equation}
Consequently, starting from the zeroth-order equation on $\Gamma$, this procedure yields a constructive scheme for computing $A$, $B$, and $F$ to arbitrary order. 

To solve the master equation~\eqref{def of master eq}, a key step is to compute the derivatives of $R$ in a neighbourhood of $\Gamma$. We note that the Taylor expansion of $R$ near $\Gamma$ completely determines the asymptotic expansion of $F$, as is evident from~\eqref{def of master eq}.  
Moreover, for algebraic Hele-Shaw potentials, the representation~\eqref{def of R Hele-Shaw}, together with~\cite[Eq.~(39)]{LR16}, provides the necessary input. In Section~\ref{sec simply}, we compute $F_1$ based on these observations.

\section{Proof of Theorem~\ref{thm_ZW}} \label{Section_proof main}
 
In this section, we present the proof of Theorem~\ref{thm_ZW}, elaborating on the strategy outlined in Figure~\ref{Fig_diagram strategy}.

Recall from Figure~\ref{fig_deformation} that we consider the models associated with the potentials $Q^{\rm i}$ and $Q^{\rm e}$, defined in~\eqref{def of Q induced elliptic}, and use their free energies as references. 
We begin with the following lemma, which provides exact formulas for the reference free energies, together with their full asymptotic expansions. For this, recall that the Barnes $G$-function is defined recursively by 
\begin{equation} \label{Barnes G def}
G(z+1)=\Gamma(z)G(z),\qquad G(1)=1;  
\end{equation}
see \cite[Section 5.17]{NIST}. 

\begin{lem}[\textbf{Reference free energies}] \label{Lem_reference free energy}
Let $Q^{ \rm i }$ and $Q^{ \rm e }$ be given by \eqref{def of Q induced elliptic}. Then the associated partition functions are evaluated as 
\begin{equation} \label{ZN elliptic explicit Barnes}
\ZZ_N( Q^{ \rm i  }  )= N! \frac{G(N+cN+1)}{ G(cN+1) } N^{ -(c+\frac12)N^2-\frac{1}{2}N }, \qquad  \ZZ_N(Q^{ \rm  e })= (1-\tau^2)^{N/2} \,G(N+2)  N^{ -\frac12 N^2-\frac{1}{2}N }, 
\end{equation}
where $G$ is the Barnes $G$-function. 
Furthermore, they satisfy following asymptotic behaviours as $N \to \infty$:
\begin{align}
\begin{split} \label{Z ind Gin asy}
\log \ZZ_N( Q^{ \rm i  }  ) & = -\Big(  \frac{3}{4}+\frac{3c}{2}+\frac{c^2}{2}\log c -\frac{(c+1)^2}{2}\log(c+1) \Big) \, N^2 +\frac12 N \log N +\Big( \frac{\log(2\pi)}{2}-1 \Big)N
\\
&\quad   +\frac12 \log N+ \frac{\log(2\pi)}{2}+\frac{1}{12} \log \Big( \frac{c}{1+c}\Big) 
\\
&\quad +\sum_{k=1}^\infty \bigg( \frac{B_{2k}}{ 2k(2k-1) } \frac{1}{N^{2k-1}} + \frac{ B_{2k+2} }{ 4k(k+1) } \Big( \frac{1}{(c+1)^{2k}}-\frac{1}{c^{2k}}\Big) \frac{1}{N^{2k}} \bigg) ,
\end{split}
\\
\begin{split} \label{Z elliptic Gin asy}
 \log  \ZZ_N(Q^{ \rm e }) & = -\frac34\, N^2+\frac12 N \log N +\Big( \frac{\log(2\pi)}{2}-1 +\frac12 \log(1-\tau^2)\Big)N
\\
& \quad +\frac{5}{12} \log N+ \frac{\log(2\pi)}{2}+\zeta'(-1) +\sum_{k=1}^\infty \bigg( \frac{B_{2k}}{ 2k(2k-1) } \frac{1}{N^{2k-1}} + \frac{ B_{2k+2} }{ 4k(k+1) } \frac{1}{N^{2k}} \bigg),
\end{split}
\end{align}
where $B_k$ is the Bernoulli number \eqref{def of Bernoulli number}. 
\end{lem}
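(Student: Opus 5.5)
The plan is to use the product formula \eqref{ZN in terms of prod norms}, $\ZZ_N(Q)=N!\prod_{j=0}^{N-1}h_{j,N}$, in both cases, since the orthogonal polynomials are known explicitly. We then reduce the asymptotics to the standard expansion of the Barnes $G$-function.

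\emph{Induced case.} For $Q^{\rm i}(z)=|z|^2-2c\log|z|$ the weight is radial, so the monomials $z^j$ are orthogonal. Their norms are
\[
h_{j,N}=\int_\C |z|^{2j+2cN}e^{-N|z|^2}\ud A(z)=\frac{\Gamma(j+cN+1)}{N^{j+cN+1}} ,
\]
using $\ud A=\ud^2z/\pi$ and polar coordinates. Taking the product over $j=0,\dots,N-1$ and using $\prod_{j=0}^{N-1}\Gamma(j+cN+1)=G(N+cN+1)/G(cN+1)$ (from \eqref{Barnes G def}), together with $\sum_{j}(j+cN+1)=(c+\tfrac12)N^2+\tfrac12 N$, gives the first formula in \eqref{ZN elliptic explicit Barnes}.

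\emph{Elliptic case.} For $Q^{\rm e}$ the orthogonal polynomials are rescaled Hermite polynomials, $P_{j,N}(z)=(\tau/2N)^{j/2}H_j\big(z\sqrt{N/(2\tau)}\big)$. The classical result for the elliptic Ginibre ensemble gives the norms $h_{j,N}=\sqrt{1-\tau^2}\,j!/N^{j+1}$. One can check this at $\tau=0$, where it is the Gaussian moment, and in general via the generating function of Hermite polynomials or the known Mehler-type formula. Alternatively, one can bypass the polynomials: the linear change of variables $z=\sqrt{1+\tau}\,x+i\sqrt{1-\tau}\,y$ does not preserve the Vandermonde factor, so the Hermite norm computation is cleaner. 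Then $\prod_{j}h_{j,N}=(1-\tau^2)^{N/2}G(N+1)N^{-N(N+1)/2}$, and $N!\,G(N+1)=G(N+2)$ gives the second formula.

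For the asymptotic expansions, we insert the classical expansion
\[
\log G(z+1)=\tfrac{z^2}{2}\log z-\tfrac34 z^2+\tfrac{z}{2}\log(2\pi)-\tfrac1{12}\log z+\zeta'(-1)+\sum_{k\ge1}\tfrac{B_{2k+2}}{4k(k+1)z^{2k}},
\]
applied at $z=(1+c)N$ and $z=cN$, or at $z=N+1$ (equivalently $G(N+2)=\Gamma(N+1)G(N+1)$). We combine this with Stirling's series $\log N!=N\log N-N+\tfrac12\log(2\pi N)+\sum_k \frac{B_{2k}}{2k(2k-1)N^{2k-1}}$.

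In the induced case, the $\zeta'(-1)$ terms cancel. The $-\tfrac1{12}\log$ terms leave $\tfrac1{12}\log\frac{c}{1+c}$, and the $\log N$ contributions of the two $G$'s cancel, so only $\tfrac12\log N$ from $N!$ survives. The $N^2$ terms collect into $\frac{(1+c)^2}{2}\log(1+c)-\frac{c^2}{2}\log c-\frac34(1+2c)$, after cancelling the $N^2\log N$ pieces against $-(c+\tfrac12)N^2\log N$. The $N$ terms give $\tfrac{\log(2\pi)}{2}N$ plus $-N$ from Stirling, and the $N\log N$ term is $\tfrac12 N\log N$.

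In the elliptic case, writing $\log G(N+2)=\log N!+\log G(N+1)$ gives $\log N$ coefficient $\tfrac12-\tfrac1{12}=\tfrac5{12}$ and constant $\tfrac{\log(2\pi)}2+\zeta'(-1)$. The Bernoulli tails add up as displayed.

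The only genuinely nontrivial input is the elliptic norm formula $h_{j,N}=\sqrt{1-\tau^2}\,j!/N^{j+1}$. I would verify it by computing $\int |H_j|^2$ against the elliptic weight using the Hermite generating function $\sum_j H_j(x)s^j/j!=e^{2xs-s^2}$. This reduces it to a Gaussian integral in $(s,\bar t)$ whose diagonal coefficients yield the norms, with the off-diagonal ones vanishing, which also confirms orthogonality. The remaining bookkeeping of the Barnes expansion is routine.
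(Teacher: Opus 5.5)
Your proposal is correct and follows the paper's proof. Both use the product formula $\ZZ_N=N!\prod_j h_{j,N}$, with Gamma-function norms for the radial potential $Q^{\rm i}$ and planar Hermite norms for $Q^{\rm e}$, followed by the Barnes $G$ and Stirling expansions; your bookkeeping of every coefficient checks out, and your norm $h_{j,N}=\sqrt{1-\tau^2}\,j!/N^{j+1}$ is the correct one (the paper's displayed orthogonality relation drops the factor $N^{-j-1}$, but its final formula for $\ZZ_N(Q^{\rm e})$ agrees with yours). Your generating-function check, which gives $\int H_j\overline{H_k}\,e^{-NQ^{\rm e}}\ud A=\delta_{jk}\,j!\,(2/\tau)^j\sqrt{1-\tau^2}/N$, simply replaces the paper's citation of \cite{ACV18}.
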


\begin{proof}  
The formulas associated with $Q^{\rm i}$ can be found in~\cite[Lemma~3.1]{BSY25}. In fact, the expression for $\ZZ_N(Q^{\rm i})$ follows directly from~\eqref{ZN in terms of det}, using that the monomials form an orthogonal polynomial system due to the radial symmetry of $Q^{\rm i}$, and that the resulting radial integrals yield Gamma functions.  
Then~\eqref{Z ind Gin asy} follows from the well-known asymptotic behaviours (see~\cite[Eqs.~(5.11.1), (5.17.5)]{NIST}): as $z \to +\infty,$ 
\begin{align} 
\label{gamma asymp}
 \log \Gamma(z)&= \Big(z-\frac12\Big) \log z - z +\frac12 \log(2\pi) +\sum_{k=1}^\infty \frac{B_{2k}}{ 2k(2k-1)z^{2k-1} }, \\
\begin{split} \label{Barnes G asymp}
\log G(z+1) & =\frac{z^2 \log z}{2} -\frac34 z^2+\frac{ \log(2\pi) z}{2}-\frac{\log z}{12}+\zeta'(-1)  + \sum_{k=1}^\infty \frac{ B_{2k+2} }{ 4k(k+1)  } \frac{1}{z^{2k}},
\end{split}
\end{align}
where $B_k$ is the Bernoulli number~\eqref{def of Bernoulli number}. 

For the formula of $\ZZ_N(Q^{\rm e})$, the key input is that the associated monic planar orthogonal polynomials are given by 
\begin{equation} \label{def of planar Hermite}
    P_{n,N}^{ \rm e }(z) = \big(\frac{\tau}{2N}\big)^{n/2} H_n\big(\sqrt{\tfrac{N}{2\tau}}z\big),
\end{equation}
where $H_n$ denotes the $n$-th Hermite polynomial. More precisely, these polynomials satisfy the orthogonality relation
\begin{equation}
\int_{ \C }  P_{n,N}^{ \rm e }(z) \overline{  P_{m,N}^{ \rm e }(z) }  e^{-NQ^{\rm e}(z)}\ud A(z) = \sqrt{1-\tau^2}\,n! \,\delta_{n,m};
\end{equation}
see e.g. \cite[Lemma 7]{ACV18}. It then follows from~\eqref{ZN in terms of prod norms} that one obtains an exact formula for $\ZZ_N(Q^{\rm e})$, and the asymptotic expansion of the corresponding free energy again follows from~\eqref{Barnes G asymp}. 
\end{proof}

Before proceeding, we illustrate it with an example that can be verified using exact formulas.

\begin{rem}[Exactly solvable example: the elliptic Ginibre ensemble $c=0$] \label{Rem_defomration for c=0 case}
Note that, when $c=0$, the potential $Q$ in~\eqref{potQ} reduces to the elliptic potential $Q^{\rm e}$ defined in~\eqref{def of Q induced elliptic}. As previously mentioned, the polynomials $P^{\rm e}_{n,N} = P_{n,N}\big|_{c=0}$ are given by~\eqref{def of planar Hermite}. Using the explicit coefficients of the Hermite polynomials \cite[Eq.~(18.5.13)]{NIST}, we have 
\begin{equation}\label{eq_Hermite}
    P_{n,N}^{ \rm e }(z) = z^n -\frac{\tau n(n-1)}{2N} z^{n-2}+ O(z^{n-4}), \quad z\to \infty.
\end{equation}
In particular,
\begin{equation}
\mathcal{A}_{n,N}\big|_{c=0} = 0, \qquad 
\mathcal{B}_{n,N}\big|_{c=0} = -\frac{\tau n(n-1)}{2N}.
\end{equation}
On the other hand, by \eqref{ZN elliptic explicit Barnes}, we have 
\begin{equation}
 \frac{\ud }{\ud \tau}\log \ZZ_N(Q) \Big|_{c=0} = -\frac{\tau}{1-\tau^2} N. 
\end{equation}
Combining the above identities, one readily verifies that Proposition~\ref{prop_key} (ii) holds in the case $c=0$.
\end{rem}

We now prove Proposition~\ref{prop_key}. 

\begin{proof}[Proof of Proposition~\ref{prop_key}]
We begin by proving~\eqref{eq_deform a}. Treating $a$ as a complex variable, we observe that 
\begin{equation*}
    Q(z+a) = Q^{\rm e}(z+a) - 2c\log |z|, \qquad \partial_a Q(z+a) = \frac{(\bar{z}+\bar{a}) - \tau (z+a)}{1-\tau^2},
\end{equation*}
where $Q^{\rm e}$ is defined in~\eqref{def of Q induced elliptic}. 
Note that, under the change of variables $z_j \mapsto z_j + a$, it follows from the definition~\eqref{def of ZN N-fold intgral} that 
\begin{align*}
\ZZ_N(a,c,\tau) = \int_{ \C^N }\prod_{j<k}|z_j-z_k|^2 \prod_{j=1}^N |z|^{2cN}e^{-NQ^{\rm e}(z_j+a)}\ud A(z_j). 
\end{align*}
Therefore, we obtain 
\begin{align*}
    \partial_a \log \ZZ_N(a,c,\tau) 
    &= -\frac{N}{\ZZ_N(a,c,\tau)}\int_{\C^N} \Big(\sum_{j=1}^N \frac{(\bar{z}_j+\bar{a}) - \tau (z_j+a)}{1-\tau^2}\Big)\prod_{j<k}|z_j-z_k|^2\prod_{j=1}^N |z|^{2cN}e^{-NQ^{\rm e}(z_j+a)}\ud A(z_j).
\end{align*}
Furthermore, by applying the reverse translation $z_j \mapsto z_j - a$, together with the definition of the one-point function~\eqref{def of one point function}, we have
\begin{align*}
    \partial_a \log \ZZ_N(a,c,\tau) &=
    -\frac{N}{\ZZ_N(a,c,\tau)} \int_{\C^N} \Big(\sum_{j=1}^N \frac{\bar{z}_j - \tau z_j}{1-\tau^2}\Big)\prod_{j<k}|z_j-z_k|^2 \prod_{j=1}^N e^{-NQ(z_j)}\ud A(z_j)\\
    &= -N^2 \int_\C \frac{\bar{z}-\tau z}{1-\tau^2}  \rho_N(z)\ud A(z) = -\frac{N}{1-\tau^2}\int_\C (\bar{z}- \tau z)\K_N(z,z)e^{-NQ(z)}\ud A(z). 
\end{align*}
Applying the same argument to $\bar{\partial}_a$, we obtain that the derivative of the free energy with respect to the real variable $a$ is given by 
\begin{equation} \label{derivative of ZN a in terms of 1pt}
    \frac{\ud }{\ud a} \log \ZZ_N(a,c,\tau)= (\partial_a + \bar{\partial}_a) \log \ZZ_N(a,c,\tau) = -\frac{2N}{1+\tau}
    \int_\C z \, \K_N(z,z)e^{-NQ(z)}\ud A(z).
\end{equation}
Here, we have used the fact that $\int_{\C} z \, \K_N(z,z)e^{-NQ(z)} \ud A(z)$ is real, due to the conjugate invariance $Q(z) = Q(\bar{z})$ of the potential $Q$.

On the other hand, by comparing the leading coefficients \eqref{defAB}, we have
\begin{equation}
    z P_{n,N}(z) = P_{n+1,N}(z) + (\AA_{n,N}- \AA_{n+1,N})P_{n,N}(z) + O(z^{n-1})
\end{equation}
as $z \to \infty$. 
Note that by the orthogonality \eqref{def of planar OP} of $\{P_{j,N}\}$, we have
\begin{equation*}
    \int_\C z|P_{n,N}(z)|^2 e^{-NQ(z)} \ud A(z)= \int_\C \Big((\AA_{n,N}-\AA_{n+1,N})P_{n,N}(z)\Big) \overline{P_{n,N}(z)}e^{-NQ(z)}\ud A(z) = (\AA_{n,N}- \AA_{n+1,N})h_{j,N}.
\end{equation*}
Then, by \eqref{derivative of ZN a in terms of 1pt} and \eqref{def of polynomial kernel}, we obtain 
\begin{align*}
    \frac{\ud}{\ud a}\log \ZZ_N(a,c,\tau) &= -\frac{2N}{1+\tau} \sum_{n=0}^{N-1}  \int_\C \frac{z |P_{n,N}(z)|^2}{h_{n,N}} e^{-NQ(z)} \ud A(z) = -\frac{2N}{1+\tau}\sum_{n=0}^{N-1}  (\AA_{n,N}-\AA_{n+1,N}) = \frac{2N}{1+\tau} \AA_{N,N},
\end{align*}
which leads to the desired identity \eqref{eq_deform a}.  

Next, we prove~\eqref{eq_deform tau}. To this end, it is convenient to introduce the rescaled potential
\begin{equation}
    \wt{Q}(z) := |z|^2 -\tau \re z^2 - 2c\log \Big|z-\frac{a}{\sqrt{1-\tau^2}}\Big|= Q(z\sqrt{1-\tau^2}) + c\log (1-\tau^2).
\end{equation}
We use a tilde to denote the corresponding rescaled objects. In particular, we write $\wt{P}_{n,N}$, $\wt{h}_{n,N}$, $\wt{\K}_N$, and $\wt{\ZZ}_N$ for the associated monic planar orthogonal polynomials, their squared norms, the Bergman polynomial kernel, and the partition function corresponding to the weight $e^{-N\wt{Q}} \ud A$, respectively. 

Note that by the scaling relation, we have 
\begin{equation} \label{scaling relations of wt P wt h}
  \wt{P}_{n,N}(z)  = (1-\tau^2)^{-n/2} P_{n,N}(z\sqrt{1-\tau^2}), \qquad    \wt{h}_{n,N} = (1-\tau^2)^{-1-n-cN}h_{n,N}.
\end{equation} 
In particular, the subleading terms in the expansion of \(\wt{P}_{n,N}(z)\) are directly related to those of \(P_{n,N}(z)\) via 
\begin{equation}
    \wt{\AA}_{n,N} = \frac{\AA_{n,N}}{\sqrt{1-\tau^2}}, \qquad \wt{\BB}_{n,N} = \frac{\BB_{n,N}}{1-\tau^2}.
\end{equation}
Furthermore, by ~\eqref{ZN in terms of prod norms} and \eqref{scaling relations of wt P wt h}, we have
\begin{equation} \label{relation btw ZN wt ZN}
    \log \ZZ_N  = \Big(\frac{1+2c}{2}N^2 + \frac{1}{2}N\Big)\log (1-\tau^2) + \log \wt{\ZZ}_N .
\end{equation}

Upon setting \(\wt{a} = a/\sqrt{1-\tau^2}\), an application of~\eqref{eq_deform a} to \(\wt{\ZZ}_N\) yields 
\begin{equation}\label{eq_tau deform 1}
    \frac{\ud}{\ud \wt{a}}\log \wt{\ZZ}_N = (1-\tau^2) \frac{2N}{1+\tau} \wt{\AA}_{N,N}=\frac{2(1-\tau)N}{\sqrt{1-\tau^2}}  \AA_{N,N}.
\end{equation}
On the other hand, differentiation with respect to \(\tau\) acts on both the coefficient of \(\re z^2\) and \(\wt{a}\). Recalling the conjugate invariance $\wt{Q}(z) = \wt{Q}(\bar{z})$, we obtain 
\begin{align} \label{tau deriva wt ZN}
    \frac{\ud }{\ud \tau}\log \wt{\ZZ}_N &= N  \int_\C z^2\wt{\K}_N(z,z) e^{-N\wt{Q}(z)}\ud A(z) +\frac{\tau a}{(1-\tau^2)^{3/2}} \frac{\ud}{\ud \wt{a}}\log \wt{\ZZ}_N.
\end{align}
To compute the first term on the right-hand side, we compare coefficients and observe that 
\begin{align}
\begin{split}
    z^2 \wt{P}_{n,N}(z) & = \wt{P}_{n+2,N}(z) + (\wt{\AA}_{n,N}-\wt{\AA}_{n+2,N}) \wt{P}_{n+1,N}(z)
    \\
    &\quad +\Big(\wt{\BB}_{n,N}-\wt{\BB}_{n+2,N}-(\wt{\AA}_{n,N}-\wt{\AA}_{n+2,N})\wt{\AA}_{n+1,N}\Big)\wt{P}_{n,N}(z) +O(z^{n-1}).
\end{split}
\end{align}
This gives rise to
\begin{align}\label{eq_tau deform 2}
\begin{split}
   &\quad N  \int_\C z^2 \wt{\K}_N(z,z) e^{-N\wt{Q}(z)}\ud A(z) = N\sum_{n=0}^{N-1}  \int_\C \frac{z^2 |\wt{P}_{n,N}(z)|^2}{\wt{h}_{n,N}} e^{-N\wt{Q}(z)} \ud A(z) 
    \\
    &= N\sum_{n=0}^{N-1} \Big(\wt{\BB}_{n,N}-\wt{\BB}_{n+2,N}-(\wt{\AA}_{n,N}-\wt{\AA}_{n+2,N})\wt{\AA}_{n+1,N}\Big)
    \\
    &= -N \Big(\wt{\BB}_{N,N} + \wt{\BB}_{N+1, N} - \wt{\AA}_{N,N}\wt{\AA}_{N+1,N}\Big) =-\frac{N}{1-\tau^2}\Big(\BB_{N,N}+\BB_{N+1,N}-\AA_{N,N}\AA_{N+1,N}\Big).
\end{split}
\end{align} 
Finally, by combining \eqref{relation btw ZN wt ZN},~\eqref{eq_tau deform 1}, \eqref{tau deriva wt ZN} and~\eqref{eq_tau deform 2} we obtain the desired identity~\eqref{eq_deform tau}.
\end{proof}

It is worth noting that the above proof method applies more generally to a larger class of algebraic Hele-Shaw potentials, such as those of the form~\eqref{eq_poly potential} with respect to deformations by exterior moments $t_j$.

In addition, while we assume $c \geq 0$ in the definition of the potential $Q$ in \eqref{potQ}, we emphasise that Proposition~\ref{prop_key}, being essentially a finite-$N$ identity, remains valid for arbitrary parameter values provided that the partition function is finite. In particular, one may consider the scaling $c=\gamma/N$, where $\gamma>-1$ is an arbitrary real parameter.

\begin{rem}[Deformation with respect to the charge strength $c$]\label{rem_deform c}
Instead of deforming in \(a\) and \(\tau\), one may also consider a deformation in \(c\). This approach has been used extensively in the study of Fisher--Hartwig singularities in both the Hermitian setting~\cite{Kra07, CG21} and the non-Hermitian setting~\cite{WW19, HW24a}.
In our present model, we have 
\begin{equation}\label{eq_deform c}
    \frac{\ud}{\ud c}\log \ZZ_N(a,c,\tau) = N\int_\C \log |z-a| \rho_N(z)\ud A(z).
\end{equation}
Since \(\log |z-a|\) appearing in the integrand on the right-hand side is no longer a polynomial, the integral cannot be reduced to finitely many coefficients of the orthogonal polynomials, in contrast to the argument above. Consequently, the associated deformation formula becomes more intricate; see ~\cite[Lemma~3.2]{WW19} for the case $\tau=0$. On the other hand,~\eqref{eq_deform c} may be interpreted as the logarithmic potential of the finite-\(N\) one-point density \(\rho_N\). From this perspective,~\cite{HW24a} develops a framework for proving the existence of a \(1/N\) expansion of the free energy, albeit deriving explicit formulas or identifying the corresponding terms with conformal geometric functionals remains challenging. While this approach is of independent interest, it is beyond the scope of the present work.
\end{rem}

As illustrated in Figure~\ref{Fig_diagram strategy}, the next main ingredient in the proof of the main result is the asymptotic behaviour of the subleading coefficients \eqref{defAB} of $P_{n,N}(z)$, which, by Proposition~\ref{prop_key}, determine the variation of the free energy. For this analysis, it is necessary to distinguish between Regimes~I and~II. 

We begin with Regime~I. As explained above, in the case of a doubly connected droplet, we make use of deformations with respect to both $a$ and $\tau$; see Figure~\ref{fig_deformation}. Consequently, we require the asymptotic behaviour of the coefficients appearing in both Proposition~\ref{prop_key} (i) and~(ii).

\begin{prop}[\textbf{Variation of the free energy: Regime I}] \label{Prop_OP coefficients doubly}
Let $Q$ be given as~\eqref{potQ}. Suppose that the associated droplet $K_Q$ is doubly connected, i.e. $(a,c,\tau)$ falls within \textup{Regime I}. Then, as $N \to \infty$, the coefficients $\AA_{n,N}$ and $\BB_{n,N}$ defined in~\eqref{defAB} satisfy the asymptotic behaviours
\begin{align}
    \AA_{N,N} &= acN + O(N^{-\infty}),\label{eq_sub_doubly0}\\
    \BB_{N,N} &= \frac{c^2a^2}{2}N^2 + \Big(\frac{ca^2}{2}-\frac{\tau(1+c)^2}{2}\Big)N +\frac{\tau(1+c)}{2} + O(N^{-\infty}), \label{eq_sub_doubly1}\\
    \AA_{N+1, N}&= acN + O(N^{-\infty}), \label{eq_sub_doubly2}\\
    \BB_{N+1, N}&= \frac{c^2a^2}{2}N^2 + \Big(\frac{ca^2}{2}-\frac{\tau(1+c)^2}{2}\Big)N -\frac{\tau(1+c)}{2} + O(N^{-\infty}). \label{eq_sub_doubly3}
\end{align}
Here, $O(N^{-\infty})$ indicates $O(N^{-k_0-1})$ for any positive integer $k_0$ and the error terms are uniform over $(a,c,\tau)$ in compact subsets of \textup{Regime I}.
\end{prop}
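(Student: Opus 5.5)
The idea is to compare $P_{N,N}$, near the outer boundary, with a rescaled elliptic Ginibre (Hermite) polynomial of the larger degree $(1+c)N$, divided by $(z-a)^{cN}$. The reason to expect this: in Regime~I, Theorem~\ref{Thm_droplet}(i) shows the outer boundary $\Gamma=\partial\mathsf{E}$ of $K_Q$ is the ellipse of the elliptic Ginibre ensemble of total mass $1+c$. Precisely, $\Gamma$ is the outer boundary of the droplet of $Q^{\rm e}/(1+c)$, i.e. of the degree-$M=(1+c)N$ problem in Remark~\ref{rem HW for M=tN}.

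\textbf{Step 1: shared exterior data.} First I compare the exterior data in Theorem~\ref{thm_Hedenmalm Wennman} for the two problems: $P_{N,N}$ for $Q$, and $P^{\rm e}_{M,N}$ for $Q^{\rm e}$ with $t=1+c$.
\begin{itemize}
\item Both have the same $\Omega_\infty=\mathsf{E}^c$, hence the same $\phi$.
\item Since $\log|z-a|=\re\log(z-a)$ with $\log(z-a)$ holomorphic and single-valued near $\overline{\Omega}_\infty$ ($a$ lies inside the hole $\mathsf D\subset\mathsf E$), the holomorphic extensions satisfy
\begin{equation*}
\QQ=\QQ^{\rm e}-2c\log(z-a)+\text{const},
\end{equation*}
where the constant is fixed by $\im\QQ(\infty)=0$.
\item The functions $R$ in \eqref{def of R} differ by $-2c\log|\phi|$ terms coming from the degrees $N$ versus $M$. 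After dividing out $\phi^N$ versus $\phi^M$, the combination entering the master equation \eqref{def of master eq} is identical. This can also be read off directly from Proposition~\ref{prop_R Hele-Shaw}, since both involve the same Schwarz function of $\mathsf E^c$ and the same $t_0$ up to rescaling.
\end{itemize}
It follows that the iterative scheme \eqref{eq_iteration Ak}--\eqref{eq_iteration Bk} produces the same sequence $F_k$ for both problems, up to the explicit normalisation constants.

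Consequently, for every $k_0$,
\begin{equation*}
P_{N,N}(z)=\frac{P^{\rm e}_{(1+c)N,N}(z)}{(z-a)^{cN}}\bigl(1+O(N^{-k_0-1})\bigr)
\end{equation*}
uniformly near $\Gamma$ and outside. The monic normalisation $F(\infty)=\phi'(\infty)$ pins down the constant. If $cN\notin\mathbb Z$, I interpret $P^{\rm e}_{M,N}$ through the same Hedenmalm--Wennman expansion with the non-integer degree $M$; only the expansion at $\infty$ is needed.

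\textbf{Step 2: extract coefficients.} Both sides are holomorphic in $\Omega_\infty$, so I compute $\AA_{N,N}$ and $\BB_{N,N}$ as Laurent coefficients via a contour integral of $z^{-N}P_{N,N}(z)$ over a large circle; the relative error $O(N^{-k_0-1})$ carries over. By \eqref{eq_Hermite}, $P^{\rm e}_{M,N}=z^M-\frac{\tau M(M-1)}{2N}z^{M-2}+\cdots$. Also
\begin{equation*}
(z-a)^{-cN}=z^{-cN}\Bigl(1+\frac{cNa}{z}+\frac{cN(cN+1)a^2}{2z^2}+\cdots\Bigr).
\end{equation*}
\begin{itemize}
\item The $z^{N-1}$ coefficient gives $\AA_{N,N}=acN$.
\item The $z^{N-2}$ coefficient gives $\BB_{N,N}=\frac{c^2a^2}{2}N^2+\frac{ca^2}{2}N-\frac{\tau M(M-1)}{2N}$ with $M=(1+c)N$, which matches \eqref{eq_sub_doubly1}.
\item For $P_{N+1,N}$ the same argument applies with $M=(1+c)N+1$ and the same factor $(z-a)^{-cN}$; here I use Remark~\ref{rem HW for M=tN} with $t=1+1/N$, absorbed as a perturbation. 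This gives $\AA_{N+1,N}=acN$ and the sign flip in the constant term of \eqref{eq_sub_doubly3}.
\end{itemize}
Uniformity over compact subsets of Regime~I follows because $\Gamma$ stays uniformly separated from $\mathsf D$ there.

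\textbf{Main obstacle.} The delicate step is Step 1: showing that the two formal expansions agree to all orders, rather than just at leading order. I would argue this via uniqueness of the solution to the soft Riemann--Hilbert problem. The data entering \eqref{def of master eq} (namely $\wh R$ and $\bar\partial R$ near $\Gamma$, after factoring out the holomorphic $e^{N\QQ/2}\phi^N$) coincide for the two problems. Hence the $F_k$ are equal by induction on $k$, with the extra holomorphic factor $(z-a)^{-cN}$ passing through the Szeg\H{o} projection \eqref{def of Szego projection} unchanged. The step from $N$ to $N+1$ requires checking that the shift in $t$ only changes the Hermite degree.
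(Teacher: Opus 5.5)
Your strategy is essentially the paper's. You compare $(z-a)^{cN}P_{N,N}$ with the elliptic (Hermite) problem at degree $(1+c)N$ via the hole-filling argument of Proposition~\ref{prop_multiply connected}, then read off coefficients from \eqref{eq_Hermite} and the expansion of $(1-a/z)^{-cN}$. Your final formulas are correct.

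However, there is a genuine gap when $cN\notin\mathbb{Z}$, which is the generic case: for irrational $c$ it happens for every $N$. There is then no Hermite polynomial of degree $(1+c)N$. ``Interpreting $P^{\rm e}_{M,N}$ through the Hedenmalm--Wennman expansion'' gives an object whose Laurent coefficients at $\infty$ are exactly what is unknown, so \eqref{eq_Hermite} cannot be invoked for it. The missing idea, which the paper uses, is that the coefficients $F^{\rm e}_k(\cdot;t)$ of Remark~\ref{rem HW for M=tN} are independent of $N$ and continuous in $t$. For rational $t$, their Laurent coefficients at $\infty$ are determined by \eqref{eq_Hermite} along those $N$ with $tN\in\mathbb{Z}$. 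Continuity then transfers the resulting identities to every real $t>0$, in particular $t=1+c$. These are combined with the honest degree-$N$ expansion of $P_{N,N}$ and the identification $F=F^{\rm e}(\cdot;1+c)(1+O(N^{-\infty}))$.

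Two further steps need repair.

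First, $\QQ=\QQ^{\rm e}-2c\log(z-a)+\mathrm{const}$ is wrong. The function $\log(z-a)$ is not single-valued on $\Omega_\infty$, since a loop around $\Gamma$ encircles $a\in\mathsf D$, and it is not bounded at $\infty$; $\QQ$ must be both. The correct relation \eqref{QQ e Q e relation} contains the additional term $2c\log\phi$. That term converts $\phi^N$ into $\phi^{(1+c)N}$ and produces the factor $(z-a)^{-cN}$. It also makes $NR$, with $R$ from \eqref{def of R}, coincide exactly with the corresponding function of the degree-$(1+c)N$ elliptic problem; this is the paper's observation $R=R^0$. So there is no $2c\log|\phi|$ discrepancy to divide out, and $(z-a)^{-cN}$ never enters the master equation or the Szeg\H{o} projection.

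Second, Remark~\ref{rem HW for M=tN} is stated for fixed $t$. Taking $t=1+1/N$ for $P_{N+1,N}$ would require uniformity of the whole expansion in $t$, which ``absorbed as a perturbation'' does not supply. The paper avoids this with the exact scaling identity \eqref{eq_P scaling}. It expresses $\AA_{N+1,N},\BB_{N+1,N}$ at $(a,c,\tau)$ through $\AA_{N+1,N+1},\BB_{N+1,N+1}$ at $(\sqrt{N/(N+1)}\,a,\tfrac{N}{N+1}c,\tau)$. Applying \eqref{eq_sub_doubly0}--\eqref{eq_sub_doubly1}, with their uniformity over compact subsets of Regime~I, then gives \eqref{eq_sub_doubly2}--\eqref{eq_sub_doubly3} by direct computation.
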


Next, we consider the variation of the free energy in Regime~II. 
Unlike the Regime~I case, it suffices here to consider only the deformation with respect to~$a$; see Figure~\ref{fig_deformation}.  
By Proposition~\ref{prop_key}, this reduces the problem to analysing the asymptotic behaviour of~$\mathcal{A}_{N,N}$. 
Recall that $\psi$ is the normalised exterior conformal map of $\Omega_\infty= K_Q^c$.

\begin{prop}[\textbf{Variation of the free energy: Regime II}]\label{prop_sub_simply}
Let $Q$ be given as~\eqref{potQ}. Suppose that the associated droplet $K_Q$ is simply connected, i.e. $(a,c,\tau)$ falls within \textup{Regime II}.  Then, as $N \to \infty$, the coefficient $\AA_{N,N}$ defined in~\eqref{defAB} satisfies the asymptotic behaviour
\begin{equation}\label{eq_sub_simply}
    \AA_{N,N} = -\frac{1+\tau}{2}\Big(\frac{\ud}{\ud a} I_Q[\mu_Q] \Big)N + \frac{1+\tau}{48}\Big(\frac{\ud}{\ud a}\SS[\psi]\Big) N^{-1}+O(N^{-2}). 
\end{equation}
Here, the error term is uniform over $(a,c,\tau)$ in compact subsets of \textup{Regime II}.  
\end{prop}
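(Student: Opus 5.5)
We will read off $\AA_{N,N}$ from the large-$z$ behaviour of the Hedenmalm--Wennman expansion in Theorem~\ref{thm_Hedenmalm Wennman}. Write $\phi=\psi^{-1}$, so that $\phi(z)=\phi'(\infty)z+\phi_0+O(1/z)$, and expand $\QQ(z)=\QQ(\infty)+\QQ_1/z+O(z^{-2})$ and $F_k(z)=F_k(\infty)+F_{k,1}/z+O(z^{-2})$. Since $P_{N,N}$ is monic, the prefactors in \eqref{def of HW expansion} have to combine into $z^N(1+\AA_{N,N}/z+\cdots)$. Comparing the $z^{N-1}$ coefficients gives
\begin{equation*}
\AA_{N,N}= N\Big(\frac{\phi_0}{\phi'(\infty)}+\frac{\QQ_1}{2}\Big)+\frac{F_{0,1}}{F_0(\infty)}+N^{-1}\Big(\frac{F_{1,1}}{F_0(\infty)}-\frac{F_1(\infty)F_{0,1}}{F_0(\infty)^2}\Big)+O(N^{-2}).
\end{equation*}
This expansion holds near infinity, which lies inside $\Omega_\infty$. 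We can therefore use Cauchy's integral over a large circle to extract coefficients from the uniform asymptotics, and the error term stays uniform.

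The $O(N)$ term is explicit. Proposition~\ref{prop_R Hele-Shaw}, in the form \eqref{eq_QQ algebraic Hele Shaw}, expresses $\QQ$ through the Schwarz function of $\Omega_\infty$ and $\partial H$. For the potential \eqref{potQ} we have $t_0=1-\tau^2$ and $\partial H(z)=\tau z+2c(1-\tau^2)/(z-a)$, so the $1/z$ coefficient of $\QQ$ comes from the $z^{-2}$ coefficient of $\mathsf S_{\Omega_\infty}-\partial H$ together with $\phi_0$. We then have to identify this with $-\frac{1+\tau}{2}\frac{\ud}{\ud a}I_Q[\mu_Q]$. We intend to do this through the leading order of Proposition~\ref{prop_key}(i) and the known large-deviation asymptotics $\log\ZZ_N=-I_Q[\mu_Q]N^2+o(N^2)$, which hold locally uniformly in $a$. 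The leading coefficient of $\frac{2N}{1+\tau}\AA_{N,N}$ must then equal $-\frac{\ud}{\ud a}I_Q$. Since the limit is differentiable in $a$ (by the explicit energy in Proposition~\ref{Prop_energy explicit}), we can interchange derivative and limit by integrating the identity in $a$. Alternatively, we can check it directly against the explicit energy of \cite{BY25}.

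The $O(1)$ term must vanish, and the $O(N^{-1})$ term must equal $\frac{1+\tau}{48}\frac{\ud}{\ud a}\SS[\psi]$. To show this, we compute $F_0$ and $F_1$ explicitly by the iteration \eqref{eq_iteration Ak}--\eqref{eq_iteration Bk}. At order zero one gets $F_0=\phi'(\infty)^{1/2}\,\phi'^{1/2}\,e^{\mathcal H}$ (up to constants), where $\mathcal H$ is the outer function with real part $-\frac14\log(\Delta Q)$ on $\Gamma$. Since $\Delta Q$ is constant here, $F_0$ is a constant multiple of $\sqrt{\phi'}$. Its $1/z$ coefficient then contributes $\frac12\phi''/\phi'$ at infinity, which is $O(z^{-2})$. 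So $F_{0,1}=0$ and the $O(1)$ term vanishes.

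For $F_1$, we use the Taylor expansion of $R$ near $\Gamma$ to third order. By \eqref{def of R Hele-Shaw}, $\partial R=(\bar z-\mathsf S)/t_0$, so the higher normal derivatives of $R$ are expressed through derivatives of the Schwarz function, that is, through $\kappa$ and the Schwarzian of $\psi$. Solving the jump problem via the Szeg\H{o} projection \eqref{def of Szego projection} should then yield $F_1/F_0=\P_{H^2_{-,0}}\bigl[\alpha\,\re(w^2\{\psi,w\})+\beta|\psi'|^2\kappa^2\bigr]$ composed with $\phi$, plus constants, for universal constants $\alpha,\beta$. Its $1/z$ coefficient is then a circle integral of $w\cdot(\ldots)$ against $\ud s$. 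We compare this with Proposition~\ref{prop_var Liouville} applied to the $a$-deformation $\dot\psi=w\psi'P$, using that $\re P$ on $\T$ equals $v_{\n}/|\psi'|$. The Hele-Shaw structure should make this Löwner--Kufarev velocity produce exactly the Poisson-type kernel $\re(w/(\cdot))$ appearing in the Szeg\H{o} projection. The reason is that moving $a$ changes the quadrature function by $\partial_a\bigl[2c(1-\tau^2)/(z-a)\bigr]$, so the normal velocity is the boundary value of a harmonic measure density with pole at $a$, transported along $\Gamma$.

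The main obstacle is this last matching: computing $F_1$ fully explicitly and showing that the constants $\alpha,\beta$, once paired with the $a$-velocity, reproduce precisely the integrand of \eqref{eq_var Liouville} with the factor $\frac{1+\tau}{48}$. We will first carry the iteration through for a general algebraic Hele-Shaw potential, giving $F_1$ as a Szeg\H{o} projection of a curvature and Schwarzian density. We will then identify the $a$-derivative of $\psi$ via the rational form \eqref{eq_simply conformal map}, and calibrate the constants in the exactly solvable case $c=0$ against \eqref{eq_Hermite}.
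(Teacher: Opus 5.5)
Your extraction of $\AA_{N,N}$ from the Hedenmalm--Wennman expansion agrees with the paper. So does the vanishing of the $O(1)$ term, since $F_0=\sqrt{\phi'(\infty)\phi'}=\phi'(\infty)(1+O(z^{-2}))$.

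For the $O(N)$ coefficient you use a different but valid argument. You integrate Proposition~\ref{prop_key}(i) in $a$ along a segment inside Regime~II and compare with the classical leading order $\log\ZZ_N=-I_Q[\mu_Q]N^2+o(N^2)$ at the endpoints. This pins the continuous leading coefficient down as $-\frac{1+\tau}{2}\frac{\ud}{\ud a}I_Q[\mu_Q]$. The paper instead reads the coefficient off as $-M_1/(1-\tau^2)$ and evaluates $\frac{\ud}{\ud a}I_Q[\mu_Q]$ through the dispersionless Toda relation (Lemma~\ref{Lem_tau function energy}) for a translated potential. Your route avoids the tau-function and loses only the geometric identification of the coefficient.

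The genuine gap is the $N^{-1}$ term. By \eqref{eq_F1 1/z}, $[1/z](F_1/F_0)$ pairs the density $\re(w^2\{\psi,w\})+|\psi'|^2\kappa^2$ with $\re w/|\psi'(w)|^2$, a kernel sitting at $w=\infty$. Formula \eqref{eq_var Liouville} pairs the same density with $\re P$. You therefore need $|\psi'(w)|^2\re P(w)\propto\re w$ on $\T$, modulo velocity fields that leave $\SS$ unchanged.

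Your heuristic does not give this. Since $\partial H=\tau z+c(1-\tau^2)/(z-a)$ (there is no factor $2$), moving $a$ changes the quadrature function by $c(1-\tau^2)/(z-a)^2$. Hence $v_{\n}\,\ud s$ represents, up to sign, the functional $f\mapsto c(1-\tau^2)f'(a)$. That is the derivative of the Poisson kernel of $\D_e$ with respect to its pole at $\phi(a)$, not a kernel at infinity. It would at best match a Szeg\H{o} kernel evaluated at $\phi(a)$, whereas $\AA_{N,N}$ needs the $1/z$-coefficient. The two pairings do agree in the end, but only because $\SS$ is translation invariant, which your plan never uses.

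The missing idea is the decomposition $c(1-\tau^2)/(z-a)^2=-(1-\tau)+\bigl(1-r_\infty'(z)\bigr)$. Here $1-r_\infty'$ is the change of the quadrature function under a unit real rigid translation, which does not change $\SS$. In the paper's terms, $Q(z+a)=\mathsf Q(z;t_1)+\frac{a^2}{1+\tau}$ with $t_1=-(1-\tau)a$ and the charge fixed at $0$. Then $\psi=\varphi_{t_1}+a$ and $\frac{\ud}{\ud a}\SS[\psi]=-(1-\tau)\partial_{\re t_1}\SS[\varphi_{t_1}]$. Along the $t_1$-flow the quadrature function changes only by a constant. This forces $\frac1\pi\int_\T w^{-\ell}|\varphi'|^2\re P\,\ud s=\varphi'(\infty)\delta_{\ell,1}$ for $\ell\ge0$, hence $|\varphi'(w)|^2\re P=\varphi'(\infty)\re w$. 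Inserted into \eqref{eq_var Liouville}, with $\varphi'(\infty)=1/\phi'(\infty)$ and $\Delta Q=(1-\tau^2)^{-1}$, this gives $\frac{1+\tau}{48}\frac{\ud}{\ud a}\SS[\psi]$ equal to the $1/z$-coefficient in \eqref{eq_F1 1/z}.

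Two further steps of your plan would not work as stated.
\begin{enumerate}
\item Calibrating at $c=0$ is empty. There $a$ drops out of $Q$, and every such $1/z$-pairing vanishes by the symmetry $\psi(-w)=-\psi(w)$ of the ellipse.
\item $F_1/F_0$ is not a Szeg\H{o} projection of $\alpha\re(w^2\{\psi,w\})+\beta|\psi'|^2\kappa^2$. Its density carries the weight $|\psi'|^{-2}$, the extra terms $6w\psi''/\psi'+3|\psi''/\psi'|^2$, and $-\kappa^2$ rather than $+\kappa^2$. Only an integration by parts on $\T$ turns its $1/z$-coefficient into the Liouville-type pairing above.
\end{enumerate}
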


We emphasise that the error term in Proposition~\ref{prop_sub_simply} is uniform over $a$ in compact subsets of \textup{Regime~II}. 
Here, \textup{Regime~II} is understood to include the limiting case $a=\infty$. Consequently, the same error estimate remains valid in the limit $a \to \infty$.

We are now ready to prove Theorem~\ref{thm_ZW}.

\begin{proof}[Proof of Theorem~\ref{thm_ZW}]
We begin with the doubly connected case. Following the deformation path illustrated in Figure~\ref{fig_phases}, we decompose the free energy $\ZZ_N(Q) \equiv \ZZ_N(a,c,\tau)$ into three parts: 
\begin{align}\label{eq_double decompose}
\begin{split}
    \log \ZZ_N(Q) &= \log \ZZ_N(Q^{\rm i}) + \big(\log \ZZ_N(0,c,\tau)-\log \ZZ_N(0,c,0)\big) + \big(\log \ZZ_N(a,c,\tau)-\log \ZZ_N(0,c,\tau)\big)\\
    &= \log \ZZ_N(Q^{\rm i}) + \int_0^\tau \frac{\ud}{\ud \tau'} \log \ZZ_N(0,c,\tau') \ud \tau' + \int_0^a \frac{\ud}{\ud a'}\log \ZZ_N(a', c,\tau)\ud a'. 
\end{split}
\end{align}
Here, we recall that $Q^{\mathrm{i}}$, defined in
\eqref{def of Q induced elliptic}, corresponds to the potential
\eqref{potQ} with $a=\tau=0$.  
By Propositions~\ref{prop_key} and ~\ref{Prop_OP coefficients doubly}, we obtain 
\begin{align}
    \frac{\ud}{\ud \tau}\log \ZZ_N(0,c,\tau) &= \frac{\tau c^2}{1-\tau^2}N^2 -\frac{\tau }{1-\tau^2}N + O(N^{-\infty}), \label{eq_doubly deform tau}\\
    \frac{\ud}{\ud a}\ZZ_N(a,c,\tau) &= \frac{2ca}{1+\tau} N^2 + O(N^{-\infty}). \label{eq_doubly deform a}
\end{align}
Combining ~\eqref{eq_double decompose}, \eqref{eq_doubly deform tau}, \eqref{eq_doubly deform a} with the asymptotic expansion  ~\eqref{Z ind Gin asy}  of the reference free energy $ \log \ZZ_N(Q^{\rm i})$, for an arbitrary positive integer $k_0$ we obtain 
\begin{align}
\begin{split} \label{ZW expansion explicit Regime 1}
    \log \ZZ_N(Q) 
    &= -\Big( \frac{3}{4}+\frac{3c}{2} + \frac{c^2}{2}\log \Big( c(1-\tau^2) \Big) -\frac{(1+c)^2}{2}\log(1+c) - \frac{c a^2}{1+\tau} \Big) \, N^2 \\
    &\quad +\frac12 N \log N +\Big( \frac{\log(2\pi)}{2}-1 +\frac{1}{2}\log (1-\tau^2)\Big)N  +\frac12 \log N+ \frac{\log(2\pi)}{2}+\frac{1}{12} \log \Big( \frac{c}{1+c}\Big) \\
    &\quad +\sum_{k=1}^{k_0} \bigg( \frac{B_{2k}}{ 2k(2k-1) } \frac{1}{N^{2k-1}} + \frac{ B_{2k+2} }{ 4k(k+1) } \Big( \frac{1}{(c+1)^{2k}}-\frac{1}{c^{2k}}\Big) \frac{1}{N^{2k}} \bigg) + O(N^{-2k_0-1}).
\end{split}
\end{align}
Here, the integration of the error terms is justified by their uniformity over the relevant parameter range. 




It remains to verify that the coefficients appearing in the expansion \eqref{ZW expansion explicit Regime 1} coincide with those in \eqref{free energy expasion in main Thm}.
Observe first that the leading $N^2$ term agrees with $-I_Q[\mu_Q]$ by \eqref{weighted energy doubly connected}. Next, the coefficient of $N$ coincides by virtue of \eqref{eq_explicit entropy}. Note that since the Euler characteristic is $\chi=0$ in the doubly connected case, the logarithmic terms also agree. 

For the constant term, recall that the droplet is given by $K_Q= \mathsf{E}\setminus \mathsf{D}$ where $\mathsf{E}$ and $\mathsf{D}$ are defined in \eqref{droplet_doubly connected}. The associated normalised conformal maps are given by 
\begin{equation}
    \psi_\mathsf{E}(w) =  \sqrt{1+c}\Big(w+\frac{\tau }{w}\Big), \qquad \psi_\mathsf{D}(w) = a+\sqrt{c(1-\tau^2)} w.
\end{equation}
Applying~\eqref{eq_rational action}, the Liouville action of the droplet $K_Q$ is evaluated as 
\[
    \LL[K_Q] = \SS[\psi_\mathsf{E}] +\SS[\psi_\mathsf{D}] 
    = 2\log \Big(\frac{c}{1+c}\Big).
\]
Furthermore, since $\chi=0$, it follows from \eqref{evaluation of boundary terms kappa} that the $O(1)$ term in \eqref{free energy expasion in main Thm} becomes
\[
    \frac{\log (2\pi)}{2} + \chi\zeta'(-1) + \frac{1}{24}\LL[K_Q] -\frac{1}{24\pi} \int_{\partial K_Q} \log (\Delta Q) \kappa \ud s = \frac{\log (2\pi)}{2} +\frac{1}{12}\log \Big(\frac{c}{1+c}\Big),
\]
which agrees with the corresponding term in \eqref{ZW expansion explicit Regime 1}.
This completes the proof for the doubly connected case.

\medskip 

We now turn to the simply connected case. In contrast to the doubly connected setting, it suffices to consider only the deformation with respect to $a$; see Figure~\ref{fig_deformation}. However, additional care is required here, since we eventually take the limit $a \to \infty$ in order to use the ensemble associated with $Q^{\rm e}$ as the reference partition function.

Indeed, the potential $Q$ defined in~\eqref{potQ} diverges as $a \to \infty$. On the other hand, after a suitable renormalisation, the ensemble itself converges to the Coulomb gas associated with the external potential $Q^{\rm e}$ defined in~\eqref{def of Q induced elliptic}. More precisely, we write
\begin{equation} \label{Q in terms of Qe and log}
Q(z) = Q^{ \rm e }(z) -2c \log|z-a|= Q^{ \rm e }(z) -2c \log|1-z/a| -2c \log a. 
\end{equation}
As $a \to \infty$, the logarithmic term on the right-hand side converges to zero locally uniformly in $z$. Hence, modulo the additive constant $2c \log a$, the potential $Q$ converges to $Q^{\rm e}$. This additive constant contributes to the free energy $\log \ZZ_N(Q)$ through the term $(2c \log a) N^2$. Such a renormalisation procedure plays a crucial role in the analysis of the variation of the free energy. 

By \eqref{ZN in terms of det},
the free energy can be written as 
\begin{equation}\label{eq_free energy gNa}
    \log \ZZ_N(Q) = (2c\log a)N^2 +   g_N(a),
\end{equation}
where
\begin{equation}\label{def of gNa}
    g_N(a) := \log N! + \log \det \bigg[\int_\C z^j \bar{z}^k |1-z/a|^{2cN}e^{-NQ^{\rm e}(z)}\ud A(z)\bigg]_{j,k=0}^{N-1}. 
\end{equation}
It follows from Propositions~\ref{prop_key} and~\ref{prop_sub_simply} that as $N \to \infty,$  
\begin{equation}\label{eq_gN a expansion}
    g_N'(a) = -\Big(\frac{\ud}{\ud a}I_Q[\mu_Q]+\frac{2c}{a}\Big)N^2 + \frac{1}{24}\frac{\ud}{\ud a}\SS[\psi] + E_N(a), 
\end{equation}
where the error term $E_N(a)$ satisfies $|E_N(a)|\le \mathsf{f}(a)/N$ for some bounded continuous function $\mathsf{f}$, by the uniformity of the error estimates. Recall here that these error bounds remain valid in the limit $a \to \infty$.

Now, the free energy expansion essentially follows by integrating \eqref{def of gNa} over the interval $[a,\infty)$, and then combining the result with \eqref{eq_free energy gNa} and the reference free energy given in Lemma~\ref{Lem_reference free energy}. However, since the integration is performed over a non-compact domain, one needs to check that the error term $E_N(a)$ is integrable and that its contribution remains of order $O(N^{-1})$ after integration.

For this purpose, we first claim that $g_N(a)$ admits an asymptotic expansion in powers of $1/a$, i.e. for each non-negative integer $L$, there exist coefficients $h_\ell(N)$ such that 
\begin{equation} \label{gN(a) 1/a expansion}
    g_N(a) = \log \ZZ_N(Q^{\rm e}) + \sum_{\ell=1}^L \frac{h_\ell(N)}{a^\ell} + O\Big(\frac{1}{a^{L+1}}\Big), \qquad a\to \infty.
\end{equation}
By the determinantal representation~\eqref{def of gNa}, it suffices to verify that 
\[
    F(a) := \int_\C |1-z/a|^{2cN} f(z)\ud A(z)
\]
admits an asymptotic expansion in powers of $1/a$ for every Schwartz function $f$ on $\C$.  
Let $P_L(u,\bar{u})$ denote the Taylor polynomial of degree $L$ at $u=0$ of the function. Then there exist constants $\delta>0$ and $C>0$ such that
\[
    \Big||1-u|^{2cN} - P_L(u,\bar{u}) \Big| \le C |u|^{L+1}, \quad |u| \le \delta.
\]
It follows that 
\begin{align*}
    \bigg|F(a) - \int_\C P_L(z/a, \bar{z}/a) f(z) \ud A(z) \bigg| &\le \int_{|z|\le \delta a} \Big| |1-z/a|^{2cN}-P_L(z/a, \bar{z}/a)\Big||f(z)|\ud A(z) \\
    &\le \frac{C}{a^{L+1}}\int_\C |z|^{L+1} |f(z)| \ud A(z)  = O\Big(\frac{1}{a^{L+1}}\Big),
\end{align*}
where the final integral is finite since $f$ is a Schwartz function. 
Note that 
\begin{align*}
   \bigg|\int_{|z|>\delta a} P_L(z/a, \bar{z}/a) f(z) \ud A(z)\bigg|&\le C' \int_{|z|>\delta a} |z/a|^{L} |f(z)| \ud A(z) 
   \\
    &\le C' \Big(\sup_{z\in \C} |z|^{L+3}|f(z)| \Big) \int_{\delta a}^\infty \frac{2\ud r}{a^{L}r^2} = O\Big(\frac{1}{a^{L+1}}\Big) 
\end{align*}
for some constant $C'>0$. Also, we have 
\begin{align*}
     \bigg|\int_{|z|>\delta a}  |1-z/a|^{2cN} f(z) \ud A(z)\bigg|&\le \Big(1+\frac{1}{\delta}\Big)^{2cN} \int_{|z|>\delta a} |z/a|^{2cN}|f(z)|\ud A(z) \\
    &\le \Big(1+\frac{1}{\delta}\Big)^{2cN} \Big(\sup_{z\in \C} |z|^{2cN+L+3}|f(z)|\Big) \int_{|z|>\delta a} \frac{2\ud r}{a^{2cN}r^{ L+ 2}}  = O\Big(\frac{1}{a^{L+1}}\Big). 
\end{align*} 
Combining these estimates, it follows that 
\[
    \bigg|F(a) - \int_\C P_L(z/a, \bar{z}/a) f(z) \ud A(z) \bigg| = O\Big(\frac{1}{a^{L+1}}\Big), \qquad a\to \infty.
\]
Therefore, it follows that $F(a)$ admits an asymptotic expansion in powers of $1/a$. 
In particular, since $P_0(u,\bar{u})= 1$, we have 
\[
    F(a) = \int_\C f(z) \ud A(z) + O\Big(\frac{1}{a}\Big), \qquad a\to \infty.
\]
Then, by \eqref{def of gNa}, we obtain the desired expansion \eqref{gN(a) 1/a expansion}.  

Differentiating \eqref{gN(a) 1/a expansion} with respect to $a$, we have 
\begin{equation}
    g_N'(a) = -\frac{h_1(N)}{a^2} + O\Big(\frac{1}{a^3}\Big), \qquad a\to\infty.
\end{equation}
Comparing with~\eqref{eq_gN a expansion}, notice that
\begin{equation} 
    \frac{\ud}{\ud a}I_Q[\mu_Q] = -\frac{2c}{a} + O\Big(\frac{1}{a^2}\Big), \qquad \frac{\ud }{\ud a}\SS[\psi] = O\Big(\frac{1}{a^2}\Big),\qquad a\to \infty 
\end{equation}
as $a \to \infty$ and that 
\begin{equation}
E_N(a) = O\Big(\frac{1}{a^2N}\Big) \qquad a,N \to \infty. 
\end{equation}
Since these estimates justify the term-by-term integration of the asymptotic expansion of $g_N'(a)$, integrating \eqref{eq_gN a expansion} over $[a,\infty)$ together with \eqref{def of gNa} yields 
\begin{align} \label{ZW expansion simply v0}
     \log \ZZ_N(Q) 
    &=\log \ZZ_N(Q^{\rm e})-\Big(I_Q[\mu_Q] - \lim_{a\to \infty} (I_Q[\mu_Q] +2c \log a)\Big)N^2 + \frac{1}{24}( \SS[\psi]-\lim_{a\to \infty} \SS[\psi])  + O(N^{-1}),
\end{align}
as $N \to \infty$. 

Observe that by \eqref{def of log energy} and \eqref{Q in terms of Qe and log}, 
\begin{align*}
I_Q[\mu_Q] + 2c\log a = \int_{\mathbb{C}^2}\log \frac{1}{|z- w|}\ud \mu_Q(z)\ud \mu_Q(w) + \int_\mathbb{C} \Big( Q^{ \rm e }(z) -2c \log|1-z/a| \Big)  \ud \mu_Q(z) .
\end{align*}
Then, by the continuity of the logarithmic energy, we obtain 
\begin{align} \label{energy a infty limit} 
\lim_{ a \to \infty } ( I_Q[\mu_Q] + 2c\log a ) =  I_{Q^{\rm e}}[\mu_{Q^{\rm e}}]=\frac{3}{4}. 
\end{align}
Similarly, continuity of the Liouville action yields 
\begin{equation} \label{Liouville action a infty limit}
    \lim_{a\to \infty} \SS[\psi] = \SS[\psi_\mathsf{E}|_{c=0}] = -2\log (1-\tau^2)
\end{equation} 
since the droplet $K_Q$ converges to the ellipse $\mathsf{E}$ with $c=0$ in \eqref{droplet_doubly connected} along the deformation $a\to\infty$. 
Combining \eqref{ZW expansion simply v0}, \eqref{energy a infty limit}, and \eqref{Liouville action a infty limit} with the reference free energy expansion given in Lemma~\ref{Lem_reference free energy}, together with the identity $\LL[K_Q]=\SS[\psi]$ and \eqref{evaluation of boundary terms kappa}, we obtain the asymptotic expansion \eqref{free energy expasion in main Thm} with $\chi=1$.
\end{proof}

\section{Simply connected case}\label{sec simply}

In this section, we derive the asymptotic expansion of $P_{N,N}$ up to $1/N$ correction when the droplet $K$ is simply connected and prove Proposition~\ref{prop_sub_simply}. 

Throughout much of this section, we work in the general setting of algebraic Hele-Shaw potentials. In Subsection~\ref{Subsec_HW fine asymp}, we derive the first subleading correction term in the Hedenmalm--Wennman expansion of Theorem~\ref{thm_Hedenmalm Wennman}. In Subsection~\ref{Subsec_2D Toda}, we discuss the integrability relations for the tau-function arising in the dispersionless $2$D Toda hierarchy, which may be interpreted as the limiting free energy. Up to this point, our discussion remains within the general algebraic Hele-Shaw framework. Finally, in Subsection~\ref{Subsec_coefficient simply asymp}, we specialise to the particular potential~\eqref{potQ} and prove Proposition~\ref{prop_sub_simply}.

\subsection{Fine asymptotic behaviour of planar orthogonal polynomials} \label{Subsec_HW fine asymp}

Recall that Theorem~\ref{thm_Hedenmalm Wennman}, due to Hedenmalm and Wennman~\cite{HW21,Hed24}, provides the asymptotic expansion~\eqref{def of HW expansion} for planar orthogonal polynomials. To prove Proposition~\ref{prop_sub_simply}, we require an explicit characterisation of the first correction term $F_1$ appearing in~\eqref{def of HW expansion}.  
In the following theorem, we derive this subleading term for the general algebraic Hele-Shaw potentials with simply connected local droplets; cf.~\eqref{def of localised Q}. We also note that Theorem~\ref{thm_Hedenmalm Wennman} remains valid for localised droplets; see Remark~\ref{Rem_HW localised potentials}.

\begin{thm}[\textbf{\textup{Subleading term in the Hedenmalm--Wennman expansion; Hele-Shaw potentials}}]\label{thm_F0F1}
Let $K$ be a simply connected local droplet associated with an algebraic Hele-Shaw potential $Q$ of the form~\eqref{def of algebraic HeleShaw}. Let $U$ be an open neighbourhood of $K$ such that $K$ is the droplet associated with the localised potential $Q_{\bar U}$ defined in~\eqref{def of localised Q}. Let $\psi:\D_{\mathrm e}\to K^c$ be the normalised exterior conformal map onto the complement of $K$, and denote by $\phi:K^c\to \D_{\mathrm e}$ its inverse. 

Then for $P_{n,N}$ the monic planar orthogonal polynomials with respect to the measure $e^{-NQ} \mathbbm{1}_{\bar{U}}\ud A$, the asymptotic expansion~\eqref{def of HW expansion} holds with 
\begin{equation} \label{eq_F0}
F_0(z) = \sqrt{\phi'(\infty) \phi'(z)}
\end{equation}
and
\begin{equation}\label{eq_F1/F0}
    F_1(z)= \frac{F_0(z)}{24\pi(\Delta Q) }\int_\T \frac{w}{\phi(z)- w} \Big(3\re (w^2\{\psi ,w\}) - |\psi'(w)|^2\kappa^2(\psi(w)) +6\frac{w\psi''(w)}{\psi'(w)} +3\frac{|\psi''(w)|^2}{|\psi'(w)|^2}\Big)\frac{\ud s(w)}{|\psi'(w)|^2}.
\end{equation}
Here, $\kappa$ denotes the signed curvature~\eqref{def of curvature} of $\Gamma$, and $\{\psi,w\}$ denotes the Schwarzian derivative~\eqref{def of Schwarzian} of $\psi$. 

In particular, the coefficient of the leading $1/z$ term in the expansion of $F_1/F_0$ at infinity is given by 
\begin{equation}\label{eq_F1 1/z}
    \frac{F_1(z)}{F_0(z)}= \frac{1}{(\Delta Q) \phi'(\infty)} \frac{1}{24\pi } \bigg(\int_\T \Big(\re \big(w^2 \{\psi, w\}\big)+|\psi'(w)|^2 \kappa^2(\psi(w))\Big) \frac{w}{|\psi'(w)|^2}\ud s(w)\bigg)\frac{1}{z} + O(\frac{1}{z^2}),  
\end{equation} as $z \to \infty.$
\end{thm}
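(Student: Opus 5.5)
The plan is to run the iterative scheme \eqref{eq_iteration Ak}--\eqref{eq_iteration Bk} for the master equation \eqref{def of master eq} explicitly through order $k=1$. Everything will be pulled back to the unit circle via $z=\psi(w)$, and the explicit representation of $R$ from Proposition~\ref{prop_R Hele-Shaw} will supply the Taylor data of $R$ near $\Gamma$. Since $\Delta R=\Delta Q=1/t_0$ is constant for algebraic Hele-Shaw potentials, and $\bar\partial R=(z-\overline{\mathsf S_{\Omega_\infty}(z)})/t_0$, all normal derivatives of $R$ on $\Gamma$ can be written through $\psi$ using $\overline{\mathsf S(z)}=\psi(1/\overline{\phi(z)})$. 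Expanding in the normal variable $\rho=|w|-1$ gives $R=\frac{2}{t_0}|\psi'|^2\rho^2\bigl(1+\alpha\rho+\beta\rho^2+\cdots\bigr)$. The coefficients $\alpha,\beta$ are explicit combinations of $w\psi''/\psi'$, $\overline{w\psi''/\psi'}$ and $w^2\psi'''/\psi'$ on $\T$. From this, $\wh R=\sqrt R$ and $\bar\partial\wh R$ are obtained to the needed order.

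At order zero, $\bar\partial B_{-1}=0$, and \eqref{eq_iteration Ak} restricted to $\Gamma$ reads $\sqrt2A_0\bar\partial\wh R=\sqrt{2\pi}\bar F_0$. On $\Gamma$ we have $\bar\partial\wh R=\sqrt{\Delta Q/2}\,\bar{\n}$ up to a constant, and $\bar{\n}=\overline{w\psi'}/|\psi'|$ by \eqref{def of normal tangent}. The jump problem then splits into holomorphic and antiholomorphic parts as soon as $|F_0|^2\propto|\phi'|$ on $\Gamma$. The normalisation \eqref{eq_FA asymp infty} forces $F_0=\sqrt{\phi'(\infty)\phi'}$, which is \eqref{eq_F0}, and $A_0$ is determined simultaneously. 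Next, $B_0$ comes from \eqref{eq_iteration Bk}: the numerator vanishes on $\Gamma$ to first order, so $B_0$ is real-analytic near $\Gamma$. Its value and the first normal derivative on $\Gamma$ are fixed by $\alpha$ and by the normal derivative of $F_0$ and $A_0$.

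At order one, \eqref{eq_iteration Ak} on $\Gamma$ gives $\sqrt2 A_1\bar\partial\wh R+\bar\partial B_0=\sqrt{2\pi}\bar F_1$. Dividing by $\bar F_0$ and by the boundary value of $\bar\partial\wh R$ turns this into a scalar jump problem on $\T$ of the form
\[
\overline{F_1/F_0}-(\text{holomorphic in }\D_e)=g(w).
\]
Here $g$ is explicit. The unique solution, namely $F_1/F_0$ holomorphic and bounded with $F_1/F_0=O(1/z)$ at infinity (the latter forced by \eqref{eq_FA asymp infty}), is obtained by applying the Szeg\H{o} projection \eqref{def of Szego projection}. That application is precisely the Cauchy integral in \eqref{eq_F1/F0}.

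The main obstacle is computing $g$, i.e.\ $\bar\partial B_0$ on $\Gamma$, and simplifying it to $\bigl(3\re(w^2\{\psi,w\})-|\psi'|^2\kappa^2+6w\psi''/\psi'+3|\psi''|^2/|\psi'|^2\bigr)/(24\pi\Delta Q\,|\psi'|^2)$. This requires second-order Taylor data of $R$ (through $\beta$), an expansion of $\wh R$, and careful conversion of $\partial,\bar\partial$ into $w$-derivatives via $\partial_{\n},\partial_{\t}$. I would organise this with $N_\psi$ and the identity $1+\re(wN_\psi)=|\psi'|\kappa$ from \eqref{def of curvature}, and I would discard at each stage any terms that are boundary values of holomorphic functions in $\D_e$ vanishing at infinity, since those are killed by the projection. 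As a check, I would test the result on the ellipse case, where \eqref{eq_Hermite} gives $\AA=0$.

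Finally, \eqref{eq_F1 1/z} follows from expanding $w/(\phi(z)-w)=w/(\phi'(\infty)z)+O(z^{-2})$. The terms $6w\psi''/\psi'$ and $3|\psi''/\psi'|^2$ reduce, after integration against $w$, to combinations rewritten using the fact that $\int_\T w h(w)\,\ud s$ vanishes for $h$ holomorphic in $\D_e$ with $h=O(w^{-2})$. The remaining terms combine with $\re(w^2\{\psi,w\})$ and $\kappa^2$ into the stated integrand, in the same way as in the proof of Proposition~\ref{prop_var Liouville}.
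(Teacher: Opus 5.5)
Up to bookkeeping, your route to \eqref{eq_F0} and \eqref{eq_F1/F0} is the paper's: iterate the master equation to first order, fix $F_0$ by a Liouville argument, and solve the first-order jump problem with the Szeg\H{o} projection. (The bookkeeping difference is that you work in $(\rho,\theta)$-coordinates from the Schwarz function, whereas the paper uses the Euclidean Taylor expansion of $R$ in terms of $\kappa$ and $\partial_\t\kappa$.)

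The genuine gap is the passage to \eqref{eq_F1 1/z}. Expanding the kernel correctly gives the $1/z$-coefficient as $\frac{1}{24\pi(\Delta Q)\phi'(\infty)}\int_\T E\,\frac{w}{|\psi'|^2}\ud s$, where $E=3\re(w^2\{\psi,w\})-|\psi'|^2\kappa^2+6f+3|f|^2$ and $f:=w\psi''/\psi'$. The weight here is $w/|\psi'(w)|^2$, not $w$, and $1/\overline{\psi'(w)}$ is not the boundary value of a function holomorphic in $\D_e$. The vanishing of $\int_\T wh\,\ud s$ for holomorphic $h=O(w^{-2})$ worked in Proposition~\ref{prop_var Liouville} only because the weight $P(t,\cdot)$ there is holomorphic, so it does not reach these terms. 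Indeed, $\int_\T f\,\frac{w}{|\psi'|^2}\ud s$ is nonzero in general: for $\psi'=1+\epsilon(c_2w^{-2}+c_3w^{-3})$ it equals $6\pi\epsilon^2c_3\bar c_2+O(\epsilon^3)$.

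What you actually need is the cancellation
\[
\int_\T\Big(2\re\big(w^2\{\psi,w\}\big)-2|\psi'(w)|^2\kappa^2(\psi(w))+6f(w)+3|f(w)|^2\Big)\frac{w}{|\psi'(w)|^2}\ud s(w)=0,
\]
which turns the coefficients $(3,-1)$ of $(\re(w^2\{\psi,w\}),|\psi'|^2\kappa^2)$ into $(1,1)$. The paper proves it by integrating by parts along $\T$ with $\partial_\t=i(w\partial-\bar w\bar\partial)$, in two steps. First, $\int_\T f\frac{w}{|\psi'|^2}\ud s=\int_\T(1+\bar f)\frac{w}{|\psi'|^2}\ud s$, which converts $6f+3|f|^2$ into $3|1+f|^2$. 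Second, the pointwise identity $\re(w^2\{\psi,w\})-|\psi'|^2\kappa^2+\frac32|1+f|^2=\frac12+\partial_\t(\im f)+2(\im f)^2$, together with $\partial_\t(w/|\psi'|^2)=(i+2\im f)\,w/|\psi'|^2$, reduces the remaining integral to a multiple of $\int_\T\partial_\t(w/|\psi'|^2)\ud s=0$. Your sketch contains no substitute for this step.

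There are also some smaller points.
\begin{itemize}
\item At order zero, $\bar\partial\wh R=\sqrt{\Delta Q/2}\,\n$ on $\Gamma$ (since $\bar\partial X=\n/2$ for the normal coordinate $X$), not $\bar\n$. With $\bar\n$ the reduced equation would match a holomorphic function on $K^c$ vanishing at $\infty$ with an antiholomorphic one, forcing $F_0\equiv0$. With $\n$ it reads $\sqrt{\Delta Q/(2\pi)}\,A_0\phi/\sqrt{\phi'}=\overline{F_0/\sqrt{\phi'}}$, and Liouville gives \eqref{eq_F0}.
\item At order one, divide by $\sqrt{2\pi}\bar F_0$ only. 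Then the $A_1$-term becomes $\sqrt{\Delta Q/(2\pi)}\,A_1\phi/F_0$, which is holomorphic and bounded.
\item For the deferred computation of $\bar\partial B_0$ on $\Gamma$, the paper avoids differentiating $A_0,F_0$. It writes $\sqrt{2/\Delta Q}\,\bar\partial\wh R=\n\big(1+W_R\,\bar\partial R/\Delta Q\big)$, with $W_R=\Delta Q\,B_0/(\sqrt{2\pi}\bar F_0)$ and the extended field $\n=\phi\overline{\sqrt{\phi'}}/\sqrt{\phi'}$, so only $W_R$ and $\bar\partial W_R$ on $\Gamma$ are needed.
\item Your ellipse test is weak: by the symmetry $z\mapsto -z$, the $1/z$-coefficient vanishes there for any covariantly defined integrand.
\end{itemize}
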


\begin{rem}
We expect that Theorem~\ref{thm_F0F1} will also be useful in other related contexts. For instance, in deriving subleading corrections to the edge scaling limits of determinantal Coulomb gas models, the first correction typically follows from the leading-order asymptotics in~\eqref{def of HW expansion} (see e.g.~\cite{Ch25,MMO25,Am25,ACC24}), whereas the second subleading term requires input of the type provided by Theorem~\ref{thm_F0F1}; see~\cite{LR16} for the elliptic Ginibre ensemble.  
Moreover, our results establish a connection between spectral determinants and planar orthogonal polynomials through their refined asymptotic behaviour. We believe that this connection is of independent interest and merits further investigation in a more general setting, which we plan to address in future work. 
\end{rem}


The proof of Theorem~\ref{thm_F0F1} involves several geometric quantities, including the curvature~\eqref{def of curvature}, as well as the Schwarzian~\eqref{def of Schwarzian} and pre-Schwarzian~\eqref{def of pre Sch N_phi} derivatives. Many of these quantities are expressed simultaneously in terms of the exterior conformal map together with the normal and tangent vector fields, a feature that plays an important role not only in the ensuing integral computations but also in tracking the underlying conformal-geometric structure. To facilitate these computations, we first collect several useful identities relating these objects. Although the derivations are fairly elementary, we include them for completeness. In the context of random matrix theory, closely related computations appeared in~\cite{LR16}. Throughout, we adopt the notation and framework introduced in Section~\ref{sec conformal Liouville}. 


For $z$ in a neighbourhood of $\overline{K^c}$, the pre-Schwarzian derivatives~\eqref{def of pre Sch N_phi} may be expressed in terms of the normal vector \eqref{def of normal tangent} as
\begin{equation}\label{eq_pre Schwarzian}
    \frac{\partial \n}{\n} = \frac{\phi'}{\phi} -\frac{1}{2}\frac{\phi''}{\phi'} , \qquad \frac{\bar{\partial} \n}{\n} = \frac{1}{2} \overline{\Big(\frac{\phi''}{\phi'}\Big)}.
\end{equation}
Furthermore, the derivative of the pre-Schwarzian admits two different representations:
\begin{align*}
    \partial\Big(\frac{\partial \n}{\n}\Big) &= \frac{\phi\phi'' - (\phi')^2}{\phi^2} -\frac{1}{2}\Big(\frac{\phi''}{\phi'}\Big)'=-\Big(\frac{\phi'}{\phi}-\frac{1}{2}\frac{\phi''}{\phi'}\Big)^2 + \frac{1}{4}\Big(\frac{\phi''}{\phi'}\Big)^2 - \frac{1}{2}\Big(\frac{\phi''}{\phi'}\Big)', \\  
    \partial\Big(\frac{\partial \n}{\n}\Big)&= \frac{\partial^2 \n}{\n} - \Big(\frac{\partial \n}{\n}\Big)^2= \frac{\partial^2 \n}{\n}-\Big(\frac{\phi'}{\phi}-\frac{1}{2}\frac{\phi''}{\phi'}\Big)^2.
\end{align*}
Comparing the two expressions yields the identity 
\begin{equation}\label{eq_Schwarzian conformal1}
    \{\phi,z\} = \Big(\frac{\phi''}{\phi'}\Big)'(z)-\frac{1}{2} \Big(\frac{\phi''}{\phi'}\Big)^2(z) = -2\frac{\partial^2\n(z)}{\n(z)},
\end{equation}
valid for $z$ in a neighbourhood of $\overline{K^c}$. Similarly, using the second identity in~\eqref{eq_pre Schwarzian}, we obtain 
\begin{equation}\label{eq_Schwarzian conformal2}
    \overline{\{\phi,z\}} = 
    \overline{\Big(\frac{\phi''}{\phi'}\Big)'}(z)-\frac{1}{2} \overline{\Big(\frac{\phi''}{\phi'}\Big)^2}(z)
    = 2\frac{\bar{\partial}^2\n(z)}{\n(z)}-4\frac{(\bar{\partial}\n)^2(z)}{\n^2(z)}. 
\end{equation} 

Next, we derive an identity for the tangential derivative of the curvature $\kappa$: 
\begin{equation}\label{eq_curvature der}
    \partial_\t \kappa(z) = \im (\n^2(z) \{\phi,z\}) ,\quad z\in \Gamma. 
\end{equation}
To verify this identity, observe first that since $|\n(z)|=1$ on $\Gamma$, it follows from~\eqref{eq_Schwarzian conformal1} and~\eqref{eq_Schwarzian conformal2} that
\begin{align*}
    \n^2(z) \{\phi, z\} -\overline{\n^2(z)\{\phi,z\}}&= -2\n(z) (\partial^2\n)(z) - 2\bar{\n}^3(z) (\bar{\partial}^2\n)(z) + 4\bar{\n}^4(z)(\bar{\partial }\n)^2(z).
\end{align*}
Furthermore,
\begin{equation*}
    0 = -i\partial_\t |\n|^2= \n^2 (\partial\bar{\n}) +\partial \n - \bar{\n}^2 (\bar{\partial}\n) - \bar{\partial}\bar{\n}. 
\end{equation*}
Since $\partial (\bar{\partial}\n/\n) =0$ by the second identity in~\eqref{eq_pre Schwarzian}, we obtain 
\begin{equation*}
    \partial \bar{\partial}\n = \partial\Big(\n \cdot \frac{\bar{\partial}\n}{\n}\Big) =  \frac{\partial \n \bar{\partial}\n}{\n}.
\end{equation*}
Combining all of the above, we have 
\begin{align*}
    i\partial_\t \kappa&= -(\n \partial - \bar{\n}\bar{\partial})(\partial \n - \bar{\n}^2\bar{\partial}\n ) = -\n (\partial^2 \n) + 2\bar{\n}\bar{\partial}\partial \n - \bar{\n}^3 (\bar{\partial}^2\n)+2(\partial\bar{\n}-\bar{\n}^2\bar{\partial} \bar{\n})(\bar{\partial}\n) \\
    &= -\n (\partial^2 \n) + 2\bar{\n}\bar{\partial}\partial \n - \bar{\n}^3(\bar{\partial}^2\n) -2 (\bar{\n}^2 \partial \n - \bar{\n}^4\bar{\partial}\n)(\bar{\partial}\n)= -\n (\partial^2 \n) - \bar{\n}^3(\bar{\partial}^2\n) + 2\bar{\n}^4(\bar{\partial} \n)^2,
\end{align*}
which gives the desired identity~\eqref{eq_curvature der}.

All of the identities above can be rewritten in terms of the exterior conformal map $\psi=\phi^{-1}$ by using
\begin{equation} \label{pre Schwar derivative change of variable}
    \phi'(\psi(w)) = \frac{1}{\psi'(w)}, \qquad \phi''(\psi(w)) = -\frac{\psi''(w)}{(\psi'(w))^2}.
\end{equation}
In particular, the curvature is given by 
\begin{align} 
    \kappa(\psi(w)) = \frac{1}{|\psi'(w)|}\Big(1+\re \frac{w\psi''(w)}{\psi'(w)}\Big).
\end{align}
Moreover, by the composition formula  
\begin{equation}
    \{f\circ g, z\} = \{f, g(z)\}(g'(z))^2 + \{g,z\} 
\end{equation}
for the Schwarzian derivative, we have
\begin{equation} \label{Schwar derivative change of variable}
    \{\phi, \psi(w)\} = -\frac{\{\psi, w\}}{\psi'(w)^2}.
\end{equation}

We are now ready to prove Theorem~\ref{thm_F0F1}. 

\begin{proof}[Proof of Theorem~\ref{thm_F0F1}]
Recall that $\n$ and $\t$ denote the normal and tangent vectors of $\Gamma$, respectively, given by~\eqref{def of normal tangent} on a fixed neighbourhood of $\overline{K^c}$. Recall also that the function $R$ is defined by~\eqref{def of R}. 

As explained in Section~\ref{sec RNM OP}, the analysis requires solving the master equation~\eqref{def of master eq}, for which one needs the Taylor expansion of $R$ along $\Gamma$. Using Proposition~\ref{prop_R Hele-Shaw}, this expansion can be expressed in terms of derivatives of the Schwarz function $\mathsf{S}=\mathsf{S}_{K^c}$.
 
Fix a point $z_0 \in \Gamma$. Note that, after lengthy but straightforward computations, 
\begin{align}\label{eq_R Taylor}
\begin{split}
    \frac{R(z)}{\Delta Q} &= |z-z_0|^2 + \frac{\bar{\n}^2(z_0)}{2}(z-z_0)^2 + \frac{\n(z_0)^2}{2}(\overline{z-z_0})^2 -\frac{\kappa(z_0) \bar{\n}^3(z_0)}{3}(z-z_0)^3 -\frac{\kappa(z_0) \n^3(z_0)}{3}(\overline{z-z_0})^3\\
    &\quad +\Big(\frac{\kappa^2(z_0)}{4}+\frac{i\partial_\t\kappa(z_0)}{12}\Big) \bar{\n}^4(z_0)(z-z_0)^4+ \Big(\frac{\kappa^2(z_0)}{4}-\frac{i\partial_\t\kappa(z_0)}{12}\Big) \n^4(z_0)(\overline{z-z_0})^4 + O(|z-z_0|^5)
\end{split}
\end{align}
as $z \to z_0$. 
More precisely, it was shown in \cite[Eq.~(39)]{LR16} (where the function $R(z)/\Delta Q$ is identified with $\Omega(z)$ in the notation of \cite{LR16}) that
\begin{align}
\begin{split}
\label{eq_R Taylor LR}
\frac{R(z)}{\Delta Q}&= 2X^2 -\frac{2\kappa}{3}(X^3-3XY^2) +\frac{\kappa^2}{2}(X^4-6X^2Y^2+Y^4) -\frac{2\partial_{ \t } \kappa}{3}(X^3Y-XY^3) +O(|z-z_0|^5) ,
\end{split}
\end{align} 
where $X,Y\in\mathbb R$ denote the normal and tangential coordinates, respectively, i.e. 
\[
z-z_0 = X\n(z_0)+Y\t(z_0) = \n(z_0)( X+iY ),  \qquad u:=\bar{\n}(z_0)(z-z_0)=X+iY. 
\] 
Substituting the expressions for $X$ and $Y$ in terms of the above complex coordinates into \eqref{eq_R Taylor LR}, since 
$$
X^3-3XY^2 = \re u^3, \qquad X^4-6X^2Y^2+Y^4= \re u^4, \qquad X^3Y-XY^3= \frac14 \im u^4, 
$$
one readily obtains \eqref{eq_R Taylor}.

It follows from \eqref{eq_R Taylor} that the Taylor expansion of its square root is given by 
\begin{align}\label{eq_Rhat Taylor}
\begin{split}
    \sqrt{\frac{2}{\Delta Q}}\wh{R}(z) &= \bar{\n}(z_0)(z-z_0) + \n(z_0)(\overline{z-z_0}) +\frac{\kappa(z_0)}{3} \Big(|z-z_0|^2 - \bar{\n}^2(z_0)(z-z_0)^2 - \n^2(z_0)(\overline{z-z_0})^2\Big)\\
    &\quad +\Big(\frac{7\kappa^2(z_0)}{36}+\frac{i\partial_\t\kappa(z_0)}{12}\Big) \bar{\n}^3(z_0)(z-z_0)^3
    -\Big(\frac{\kappa^2(z_0)}{12}+\frac{i\partial_\t\kappa(z_0)}{12}\Big)\bar{\n}(z_0)(z-z_0)^2(\overline{z-z_0})\\
    &\quad+\Big(\frac{7\kappa^2(z_0)}{36}-\frac{i\partial_\t\kappa(z_0)}{12}\Big) \n^3(z_0)(\overline{z-z_0})^3 -\Big(\frac{\kappa^2(z_0)}{12}-\frac{i\partial_\t\kappa(z_0)}{12}\Big)\n(z_0)(z-z_0)(\overline{z-z_0})^2\\
    &\quad+ O(|z-z_0|^4),
\end{split}
\end{align}
as $ z \to z_0$.

Note that by \eqref{eq_iteration Ak}, the zeroth-order equation of~\eqref{def of master eq} on $\Gamma$ is given by 
\begin{equation}
    \sqrt{2}A_0\bar{\partial}\wh{R} = \sqrt{2\pi} \bar{F}_0,\quad \text{on }\Gamma.
\end{equation}
Applying $\bar{\partial}$ to~\eqref{eq_Rhat Taylor}, we obtain
\begin{equation}\label{eq_Rhat on Gamma}
    \bar{\partial}\wh{R} = \sqrt{\frac{\Delta Q}{2}}  \n=\sqrt{\frac{\Delta Q}{2}} \frac{\phi \sqrt{\bar{\phi'}}}{\sqrt{\phi'}} , \quad \text{on }\Gamma,
\end{equation}
where we used \eqref{def of normal tangent}.
Hence, we have
\begin{equation}\label{eq_zeroth reduced}
    \sqrt{\frac{\Delta Q}{2\pi }}\frac{A_0 \phi}{\sqrt{\phi'}} = \overline{\Big(\frac{F_0}{\sqrt{\phi'}}\Big)}\quad \text{on }\Gamma.
\end{equation}
Since $A_0$ is a bounded holomorphic function on $K^c$ satisfying $A_0(z)= O(1/z)$ as $z\to \infty$, the left-hand side of~\eqref{eq_zeroth reduced} extends to a bounded holomorphic function on a neighbourhood of $\overline{K^c}$. On the other hand, since $F_0$ is a bounded holomorphic function on $K^c$ satisfying $F_0(\infty)=\phi'(\infty),$ the right-hand side of~\eqref{eq_zeroth reduced} extends to a bounded anti-holomorphic function with limit $\sqrt{\phi'(\infty)}$ at infinity. Therefore, both sides must be constant on $K^c$, yielding
\begin{equation}\label{eq_F0A0}
    F_0(z) =\sqrt{\phi'(\infty)\phi'(z)}, \qquad A_0(z) = \sqrt{\frac{2\pi}{\Delta Q}} \frac{\sqrt{\phi'(\infty)\phi'(z)}}{\phi(z)}
\end{equation}
on a neighbourhood of $\overline{K^c}$.

Performing the first iteration of~\eqref{def of master eq}, it follows from~\eqref{eq_iteration Bk} that
\begin{align}
    B_0 &= \frac{\sqrt{2}A_0 \bar{\partial}\wh{R} - \sqrt{2\pi}\bar{F}_0}{\bar{\partial}R}, 
\end{align}
which holds on a neighbourhood of $\Gamma$. 
Rearranging this identity gives
\[
    \sqrt{2}\,A_0\bar{\partial}\wh{R}
    =
    \sqrt{2\pi}\,\bar{F}_0+B_0\bar{\partial}R.
\]
Substituting~\eqref{eq_F0A0}, we obtain 
\begin{equation}\label{eq_WR}
    \sqrt{\frac{2}{\Delta Q}} \bar{\partial}\wh{R}(z) = \n(z) \Big(1+W_R(z)\frac{\bar{\partial}R(z)}{\Delta Q}\Big), \qquad W_R(z):= \frac{\Delta Q}{\sqrt{2\pi}}\Big(\frac{B_0}{\bar{F}_0}\Big)(z).
\end{equation}

Next, we explicitly determine $W_R$ and $\bar{\partial}W_R$ on $\Gamma$ using the Taylor expansions~\eqref{eq_R Taylor} and~\eqref{eq_Rhat Taylor}. 

Comparing the coefficients of $(z-z_0)$ in~\eqref{eq_WR} using the expansions~\eqref{eq_R Taylor} and~\eqref{eq_Rhat Taylor}, we obtain
\begin{equation*}
    \frac{\kappa(z_0)}{3} = \n(z_0)W_R(z_0)+ \partial\n(z_0).
\end{equation*}
It then follows from~\eqref{def of curvature} that
\begin{equation}  \label{eq_WR(z0) eval}
    W_R(z_0) = -\frac{2}{3}\bar{\n}(z_0)\partial\n(z_0) -\frac{1}{3}\bar{\n}^3 (z_0)\bar{\partial }\n(z_0).
\end{equation}
Similarly, by comparing  the coefficients of $(\overline{z-z_0})^2$ in~\eqref{eq_WR}, we have 
\begin{equation*}
    \Big(\frac{7\kappa^2(z_0)}{12}-\frac{i\partial_\t\kappa(z_0)}{4} \Big)\n^3(z_0) = \frac{1}{2}\bar{\partial}^2\n(z_0) +\n^2(z_0) \bar{\partial}\n(z_0) W_R(z_0) -\n^4(z_0)\kappa(z_0)W_R(z_0)  +\n^3(z_0) \bar{\partial}W_R(z_0),
\end{equation*}
which gives 
\begin{equation*} 
  \bar{\partial}W_R(z_0) =   \frac{7\kappa^2(z_0)}{12}-\frac{i\partial_\t\kappa(z_0)}{4}    - \frac{1}{2} \bar{\n}^3(z_0) \bar{\partial}^2\n(z_0) - \Big( \bar{\n} (z_0) \bar{\partial}\n(z_0) -  \n(z_0)\kappa(z_0) \Big) W_R(z_0) . 
\end{equation*}
Note that by ~\eqref{eq_Schwarzian conformal2} and \eqref{eq_curvature der}, we have 
\begin{align*}
-\frac{i\partial_\t\kappa(z_0)}{4}    - \frac{1}{2} \bar{\n}^3(z_0) \bar{\partial}^2\n(z_0) &=  -\frac{i}{4}\im (\n^2(z_0)\{\phi, z_0\}) -\frac{1}{4}\bar{\n}^2(z_0) \overline{\{\phi, z_0\}} - \bar{\n}^4(z_0) (\bar{\partial}\n)^2(z_0) 
\\
&= -\frac{1}{4}\re \big(\n^2(z_0)\{\phi,z_0\}\big)   - \bar{\n}^4(z_0) (\bar{\partial}\n)^2(z_0) . 
\end{align*}
Combining this with \eqref{def of curvature} and \eqref{eq_WR(z0) eval}, we obtain  
\begin{equation} \label{eq_WR(z0) derivative eval}
 \bar{\partial}W_R(z_0) = \frac{7\kappa^2(z_0)}{12}-\frac{1}{4}\re \big(\n^2(z_0)\{\phi,z_0\}\big) -\frac{2}{3}(\partial \n)^2(z_0) + \bar{\n}^2(z_0) \partial \n(z_0) \bar{\partial}\n(z_0) -\frac{1}{3}\bar{\n}^4(z_0)(\bar{\partial}\n)^2(z_0). 
\end{equation}

We now derive the formula for $F_1$. Applying~\eqref{eq_iteration Ak} with $k=1$ and using~\eqref{eq_Rhat on Gamma}, we obtain
\begin{equation}
    \sqrt{\Delta Q} A_1  \n+ \bar{\partial}B_0 =\sqrt{2\pi} \bar{F}_1, \quad \text{on }\Gamma. 
\end{equation}
Dividing by $\sqrt{2\pi}\,\bar{F}_0$ and using~\eqref{def of normal tangent}, it follows that 
\begin{equation}\label{eq_first reduced}
    \Big(\frac{\bar{F}_1}{\bar{F}_0}\Big)(z)= \frac{1}{\sqrt{2\pi}}\Big(\frac{\bar{\partial}B_0}{\bar{F}_0}\Big)(z) + \sqrt{\frac{\Delta Q}{2\pi}} \frac{A_1(z)\phi(z)}{\sqrt{\phi'(\infty) \phi'(z)}}, \quad z\in \Gamma.
\end{equation}
Since $A_1$ is a bounded holomorphic function on a neighbourhood of $\overline{K^c}$ satisfying $A_1(z)=O(1/z)$ as $z\to\infty$, the second term on the right-hand side of~\eqref{eq_first reduced} extends to a bounded holomorphic function on a neighbourhood of $\overline{K^c}$. On the other hand, $\overline{F_1/F_0}$ is a bounded anti-holomorphic function on the same domain satisfying $\overline{F_1/F_0}=O(1/\bar{z})$ as $z\to\infty$. Therefore, after pulling back to the unit circle via the exterior conformal map, the function $F_1/F_0$ is obtained by applying the Szeg\H{o} projection $\P_{H^2_{-,0}}$ given by \eqref{def of Szego projection}. Then by \eqref{eq_WR}, we have
\begin{equation}
    \Big(\frac{F_1}{F_0}\Big)(z) = \frac{1}{\sqrt{2\pi}}\P_{H^2_{-,0}}\Big[\Big(\frac{\partial\bar{ B_0}}{F_0}\Big)\circ \psi\Big](\phi(z))  = \frac{1}{\Delta Q}\P_{H^2_{-,0}}\Big[\Big(\partial \bar{W}_R +\bar{W}_R \frac{\partial F_0}{F_0} \Big)\circ \psi\Big](\phi(z)). 
\end{equation} 

Now, by \eqref{def of Szego projection}, to show \eqref{eq_F1/F0}, it remains to show that 
\begin{equation} \label{eq_input of the Szego projection}
\Big(\partial \bar{W}_R +\bar{W}_R \frac{\partial F_0}{F_0} \Big)\circ \psi(w)= \frac{1}{12|\psi'(w)|^2}\Big(3\re \big(w^2\{\psi, w\}\big)- |\psi'(w)|^2 \kappa^2(\psi(w))  +6\frac{w\psi''(w)}{\psi'(w)} +3\frac{|\psi''(w)|^2}{|\psi'(w)|^2}\Big)  \Big). 
\end{equation} 
Note that by \eqref{eq_F0} and \eqref{eq_pre Schwarzian}, we have 
\begin{align*}
\Big(\overline{\partial \bar{W}_R +\bar{W}_R \frac{\partial F_0}{F_0}} \Big)\circ \psi(w) &= \bar{\partial}W_R(\psi(w))  + \frac{1}{2}W_R(\psi(w)) \overline{\Big(\frac{\phi''}{\phi'}\Big)}(\psi(w))
\\
&=  \bar{\partial}W_R(\psi(w))  + \frac{1}{2}W_R(\psi(w)) \bar{\n}(\psi(w)) \bar{\partial} \n(\psi(w)). 
\end{align*}
Then it follows from \eqref{def of curvature}, \eqref{eq_WR(z0) eval} and \eqref{eq_WR(z0) derivative eval} that 
\begin{align*}
 \Big(\overline{\partial \bar{W}_R +\bar{W}_R \frac{\partial F_0}{F_0}} \Big)\circ \psi(w)  = -\frac{\kappa^2(\psi(w))}{12}-\frac{1}{4}\re\big(\n^2(\psi(w)) \{\phi, \psi(w)\}\big)  - \bar{\n}^2(\psi(w))(\partial \n)(\psi(w))(\bar{\partial}\n)(\psi(w)) .
\end{align*}
Furthermore, by combining 
$$
\n^2(\psi(w)) = w^2 \frac{ \psi'(w) }{ \overline{ \psi'(w) } }
$$
with \eqref{eq_pre Schwarzian}, \eqref{pre Schwar derivative change of variable}, \eqref{Schwar derivative change of variable}, we obtain 
\begin{align*}
 \Big(\overline{\partial \bar{W}_R +\bar{W}_R \frac{\partial F_0}{F_0}} \Big)\circ \psi(w)  =  \frac{1}{12|\psi'(w)|^2}\Big(3\re \big(w^2\{\psi, w\}\big)- |\psi'(w)|^2 \kappa^2(\psi(w)) + 3\overline{\Big(\frac{\psi''}{\psi'}\Big)}\Big(\frac{2}{w}+\frac{\psi''}{\psi'}\Big)\Big). 
\end{align*} 
Taking complex conjugates, we obtain the desired identity~\eqref{eq_input of the Szego projection}. This completes the proof of~\eqref{eq_F1/F0}.
 

\medskip 

Next, we prove \eqref{eq_F1 1/z}. By \eqref{eq_F1/F0}, it follows immediately that the coefficient of $z^{-1}$ in the Laurent expansion of $F_1/F_0$ at infinity is given by 
\begin{equation}\label{eq_F1 1/z begin}
    \frac{1}{(\Delta Q)\phi'(\infty)}\frac{1}{24\pi}\int_\T \Big(3\re (w^2\{\psi ,w\}) - |\psi'(w)|^2\kappa^2(\psi(w)) +6\frac{w\psi''(w)}{\psi'(w)} +3\frac{|\psi''(w)|^2}{|\psi'(w)|^2}\Big)\frac{w}{|\psi'(w)|^2}\ud s(w) .
\end{equation}
We derive \eqref{eq_F1 1/z} from \eqref{eq_F1 1/z begin} by applying integration by parts. Note that for a holomorphic function $f$ on $\T$,
\begin{equation*}
    \partial_\t f(w)= \Big(iw\partial -i\bar{w}\bar{\partial}\Big)f(w).
\end{equation*}
Thus, integration by parts on $\T$ yields 
\begin{align}\label{eq_integration by parts 1}
\begin{split}
    \int_\T \frac{w^2\psi''(w)}{\psi'(w)} \frac{\ud s(w)}{|\psi'(w)|^2} &= i\int_\T \frac{w}{\overline{\psi'(w)}} \partial_\t \Big(\frac{1}{\psi'(w)}\Big) \ud s(w)=-i\int_\T \frac{1}{\psi'(w)} \partial_\t \Big(\frac{w}{\overline{\psi'(w)}}\Big)\ud s(w)\\
    &= \int_\T\Big(1+ \overline{\Big(\frac{w\psi''(w)}{\psi'(w)}\Big)}\Big) \frac{w\ud s(w)}{|\psi'(w)|^2},
\end{split}
\end{align} 
which gives rise to 
\begin{align}
  \int_\T\Big(6\frac{w\psi''(w)}{\psi'(w)} + 3\frac{|\psi''(w)|^2}{|\psi'(w)|^2}\Big)\frac{w\ud s(w)}{|\psi'(w)|^2} = 3\int_\T \Big|1 + \frac{w\psi''(w)}{\psi'(w)}\Big|^2 \frac{w\ud s(w)}{|\psi'(w)|^2}. 
\end{align}
Using this identity, we can rewrite \eqref{eq_F1 1/z begin} as 
\begin{equation}\label{eq_F1 1/z mid}
    \frac{1}{(\Delta Q)\phi'(\infty)}\frac{1}{24\pi}\int_\T \Big(3\re \big(w^2\{\psi ,w\}\big) -|\psi'(w)|^2 \kappa^2(\psi(w)) +3\Big|1+\frac{w\psi''(w)}{\psi'(w)}\Big|^2\Big)\frac{w}{|\psi'(w)|^2}\ud s(w).
\end{equation}

Let 
\begin{equation} \label{def of eq F1 1/z mid integral}
    I:=\int_\T\Big(\re \big(w^2\{\psi, w\}\big)-|\psi'(w)|^2 \kappa^2(\psi(w)) +\frac{3}{2}\Big|1+\frac{w\psi''(w)}{\psi'(w)}\Big|^2\Big) \frac{w}{|\psi'(w)|^2}\ud s(w). 
\end{equation}
In order to show that \eqref{eq_F1 1/z mid} coincides with \eqref{eq_F1 1/z}, it suffices to verify that $I=0$. 
For notational convenience, let $f(w) = w\psi''(w)/\psi'(w)$. Then, by \eqref{def of curvature} and \eqref{def of Schwarzian}, we have 
\begin{align*}
    &\quad \re \big(w^2\{\psi, w\}\big)-|\psi'(w)|^2 \kappa^2(\psi(w)) +\frac{3}{2}\Big|1+\frac{w\psi''(w)}{\psi'(w)}\Big|^2\\
    &=\Big(\re (wf') -\re f-\frac{1}{2} \re f^2\Big) -\Big(1+2\re f +(\re f)^2\Big) +\frac{3}{2}\Big(1+2\re f +|f|^2\Big)  \\
    &= \frac{1}{2} + \re (wf')+2 (\im f)^2 = \frac{1}{2}  + \partial_\t (\im f)+ 2(\im f)^2,
\end{align*}
where we have used
\begin{equation}
    \partial_\t (\im f) = i(w\partial -\bar{w}\bar{\partial}) \frac{f-\bar{f}}{2i} = \frac{1}{2}(wf' +\bar{w}\bar{f'}) =\re (wf').
\end{equation}
Observe that
\begin{align}
    \partial_\t\Big(\frac{ w}{|\psi'(w)|^2}\Big) &= \frac{ iw}{|\psi'(w)|^2}-\frac{iw }{|\psi'(w)|^2}\Big(\frac{w\psi''(w)}{\psi'(w)}- \overline{\Big(\frac{w\psi''(w)}{\psi'(w)}\Big)}\Big) = \frac{i w +2w(\im f)}{|\psi'(w)|^2}.
\end{align}
Combining all of the above, we obtain 
\begin{align*}
    I&= \int_\T \Big(\frac{1}{2}+2(\im f)^2\Big) \frac{ w}{|\psi'(w)|^2}\ud s(w) + \int_\T \partial_\t (\im f) \frac{ w}{|\psi'(w)|^2} \ud s(w)\\
    &= \int_\T \Big(\frac{1}{2}+2(\im f)^2\Big) \frac{w}{|\psi'(w)|^2}\ud s(w) - \int_\T (\im f)  \frac{iw +2w(\im f)}{|\psi'(w)|^2} \ud s(w)\\
    &= \int_\T \Big(\frac{1}{2} -i (\im f)\Big)\frac{w}{|\psi'(w)|^2}\ud s(w) = \frac{1}{2}\int_\T \partial_\t \Big(\frac{w}{|\psi'(w)|^2}\Big)\ud s(w) = 0. 
\end{align*}
This completes the proof. 
\end{proof}

\subsection{Dispersionless 2D Toda hierarchy and logarithmic energy} \label{Subsec_2D Toda}

From now on, we restrict our attention to the case where the droplet $K$ is simply connected. By virtue of the deformation formula~\eqref{eq_deform a}, it remains to relate the asymptotic behaviour of $\AA_{N,N}$ to the variation of the coefficients appearing in the free energy expansion~\eqref{free energy expasion in main Thm}. 

The underlying viewpoint of our approach is that the exterior conformal map associated with the droplet may be interpreted through the dispersionless $2$D Toda hierarchy~\cite{WZ00, KKMWZ01, Te09, GTV14}, an integrable hierarchy describing deformations of planar domains by infinitely many time parameters. More precisely, the coefficients of the potential play the role of the generalised times of the hierarchy, while the free energy corresponds to the logarithm of the associated tau-function. In this framework, variations of the free energy are naturally encoded by deformations of the conformal map and of the harmonic moments of the droplet. 

Together with the exterior moments~\eqref{eq_harmonic moment rel}, we also introduce the interior harmonic moments
\begin{equation}\label{eq_interior harmonic moment}
   M_j = \int_K  z^j \ud A(z), \quad j\ge 1.
\end{equation}
From the viewpoint of the dispersionless $2$D Toda hierarchy, the relevant integrability relations take the form 
\begin{equation}\label{eq_integrability}
    \frac{\partial}{\partial t_j} \log \tau = 
    \begin{cases}
        M_j/j, & j\ge 1,
        \smallskip 
        \\
        -C_Q, & j=0,
    \end{cases}\qquad
    \text{and } \quad \frac{\partial t_\ell}{\partial t_j} = \delta_{j,\ell},\quad \frac{\bar{\partial }t_\ell}{\partial \bar{t}_j}=0, \quad j,\ell \ge0.
\end{equation}
Here, $\tau$ denotes the tau-function of the dispersionless 2D Toda hierarchy, and $C_Q$ is the Robin constant appearing in the Euler--Lagrange relation~\eqref{eq_Euler lagrange}. (As mentioned earlier, we use a slight abuse of notation here: the symbol $\tau$ denotes the tau-function depending on the hierarchy time variables, and should not be confused with the anisotropy parameter appearing in~\eqref{potQ}. Since these objects arise in entirely different contexts, this notation should not lead to ambiguity.) 

In the present context, the tau-function under consideration should be understood as the ``large-$N$'' or dispersionless tau-function, which differs from the corresponding ``finite-$N$'' tau-function. A heuristic explanation, originating from~\cite{MWZ00} (see also~\cite{Be09}), is that for finite $N$, the tau-function $\tau_N$ is essentially the partition function $\ZZ_N$ up to multiplicative constants. This perspective played a crucial role in~\cite{BSY25}. Passing to the dispersionless limit $N\to\infty$, the leading contribution of $\log \tau_N = \log \ZZ_N$ gives rise to $\log \tau$ satisfying \eqref{eq_integrability}. 

In other words, in view of the leading-order term in the free energy expansion~\eqref{def of free energy expansion general}, the dispersionless tau-function associated with determinantal Coulomb gases is naturally identified with the weighted logarithmic energy $I_Q[\mu_Q]$ defined in~\eqref{def of log energy}. For completeness, we now verify this correspondence. 

\begin{lem} \label{Lem_tau function energy}
Let $K$ be a simply connected local droplet associated with an algebraic Hele-Shaw potential $Q$ of the form~\eqref{def of algebraic HeleShaw}, and let $t_j$ denote the exterior harmonic moments defined in~\eqref{eq_harmonic moment rel}. Assume that, under sufficiently small perturbations of the parameters $t_j$ in the potential $Q$, the corresponding local droplets remain simply connected and their boundaries $\partial K$ vary smoothly within the class of smooth Jordan curves. Then the tau-function of the dispersionless $2$D Toda hierarchy satisfying the integrability relations~\eqref{eq_integrability} is given by 
\begin{equation}\label{eq_tau function}
    \log \tau  = -\int_{K^2}\log \frac{1}{|z-\zeta|}\ud A(z)\ud A(\zeta) -t_0 \int_K Q(z)\ud A(z).
\end{equation}
In particular, if $K$ is the droplet of $Q$ then
\begin{equation}
    \log \tau = -t_0^2 I_Q[\mu_Q].
\end{equation}
\end{lem}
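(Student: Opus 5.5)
The plan is to define $F(t_0,t_1,\ldots)$ as the right-hand side of \eqref{eq_tau function}, regarded as a function of the Hele-Shaw times via $Q=\frac{1}{t_0}(|z|^2-H)$ with $H = 2\re\sum_j \frac{t_j}{j}z^j+(\text{fixed part})$, and to check directly that $\partial F/\partial t_j = M_j/j$ for $j\ge1$ and $\partial F/\partial t_0=-C_Q$. The structure of $F$ makes this a standard first-variation computation. We note that $t_0Q = |z|^2-H$, so
\[
F = -\int_{K^2}\log\frac{1}{|z-\zeta|}\ud A\,\ud A - \int_K (|z|^2-H(z))\ud A(z),
\]
and we differentiate with respect to a parameter $s\in\{t_j\}$ while the domain $K=K(s)$ moves with normal velocity $v_\n$ on $\partial K$. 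By the smooth-boundary assumption, Reynolds' transport theorem gives
\[
\frac{\ud F}{\ud s} = \int_{\partial K}\Big(-2U^K(z) - (|z|^2-H(z))\Big) v_\n\,\frac{\ud s(z)}{\pi} + \int_K \partial_s H\,\ud A ,
\]
where $U^K(z)=\int_K\log\frac{1}{|z-\zeta|}\ud A(\zeta)$. Here the factor $2$ comes from the symmetry of the double integral, and the $1/\pi$ from $\ud A = \ud^2z/\pi$.

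The key step is to use the Euler--Lagrange identity of Lemma~\ref{lem_variational equality}: on $K$ (hence on $\partial K$ by continuity) we have $2U^K+t_0Q = C$. The bracket is therefore the constant $-C$, and the boundary term becomes $-C\cdot\frac{\ud}{\ud s}\mathrm{area}(K)/\pi = -C\,\partial_s t_0$, since $\mathrm{area}(K)=\pi t_0$.

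For $s=t_j$ with $j\ge1$, $t_0$ is held fixed, so the boundary term vanishes. We also have $\partial_{t_j}H = 2\re(z^j/j)$ (treating $t_j,\bar t_j$ as independent, the holomorphic derivative gives $z^j/j$). Hence $\partial F/\partial t_j = M_j/j$, and this matches \eqref{eq_integrability}.

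For $s=t_0$ (with $t_j$ fixed, $j\ge1$), $H$ is independent of $t_0$, so the interior term vanishes and $\partial F/\partial t_0 = -C$. This is $-C_Q$ once $K$ is the droplet. The independence relations $\partial t_\ell/\partial t_j=\delta_{j\ell}$ hold by construction, since we parametrise by the independent times.

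The main obstacle is justifying the transport formula with the logarithmic kernel, whose integrand is only weakly singular, and arranging that the family $K(s)$ is genuinely a family of local droplets. I would handle the first by writing $\frac{\ud}{\ud s}\int_{K^2} = 2\int_{\partial K} U^K v_\n$, which holds because $U^K$ is continuous ($C^1$) across $\partial K$. The second is exactly what the lemma's smoothness hypothesis supplies. Finally, for the identification with the energy, I insert $\ud\mu_Q = \frac{1}{\pi t_0}\mathbbm 1_K\ud^2 z=\frac{1}{t_0}\mathbbm 1_K\ud A$ into \eqref{def of log energy}, which gives $I_Q[\mu_Q]=t_0^{-2}\int_{K^2}\log\frac1{|z-\zeta|}\ud A\,\ud A + t_0^{-1}\int_K Q\,\ud A$. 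Multiplying by $-t_0^2$ yields \eqref{eq_tau function}.
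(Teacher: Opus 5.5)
Your proof is correct, but it is organised differently from the paper's. The paper first uses the Euler--Lagrange identity to write $\log\tau=\int_{K^2}\log\frac{1}{|z-\zeta|}\ud A(z)\ud A(\zeta)-t_0C_Q$. It then evaluates $C_Q$ at a fixed interior point $z_0$ to get $\log\tau=\log\widetilde\tau-t_0^2Q(z_0)$, where $\log\widetilde\tau=-\int_{K^2}\log\bigl|\frac{1}{z-z_0}-\frac{1}{\zeta-z_0}\bigr|\ud A(z)\ud A(\zeta)$ is the classical dispersionless Toda tau-function. The relations for $j\ge1$ are imported from the literature on that object. Only the $t_0$-relation is checked by hand, and it is checked by exactly the transport-plus-Euler--Lagrange computation you use.

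You run that single computation for every time at once. This makes the argument self-contained and uniform in $j$, and removes the need for an interior base point. What the paper's route buys is the explicit link between \eqref{eq_tau function} and the classical tau-function.

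Your route also makes the normalisation transparent. You get exactly $M_j/j$ because you hold all of $H$ other than the term $2\re(t_jz^j/j)$ fixed, including its additive constant. If one instead keeps $Q(z_0)$ invariant, as the paper stipulates, your own first-variation formula gives $\frac1j\int_K(z^j-z_0^j)\ud A(z)$, which equals $M_j/j$ only when $z_0=0$. Your convention is the one actually needed later in the proof of Proposition~\ref{prop_sub_simply}, where $\partial_{t_1}I_{\mathsf Q}[\mu_{\mathsf Q}]=-t_0^{-2}\int_{\mathsf K}z\ud A(z)$ is applied to the explicit potential $\mathsf Q(\cdot;t_1)$.

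One small correction of wording. You should write $\partial H/\partial t_j=z^j/j$, with $t_j,\bar t_j$ treated as independent; the expression $2\re(z^j/j)$ is $\partial H/\partial(\re t_j)$. Apply the transport formula in the real directions $\re t_j$ and $\im t_j$, where the boundary term vanishes in both because $t_0$ is fixed, and then combine.
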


\begin{proof}
By the assumption that the droplets vary smoothly under sufficiently small perturbations of the exterior moments, there exists a point $z_0 \in \interior K$ that remains inside the local droplet throughout such perturbations. Moreover, the potential $Q$ can be written in the form 
\begin{equation}
    Q(z) = \frac{1}{t_0}\big(|z|^2 -H(z)\big) -\frac{1}{t_0}\big(|z_0|^2 -H(z_0)\big) + Q(z_0).
\end{equation}
Therefore, $Q(z_0)$ is invariant under perturbation by exterior harmonic moments $t_j$ for $j\ge 1$. Since $z_0 \in K$, it follows from ~\eqref{eq_Euler lagrange} that 
\begin{equation}\label{eq_Robins const}
    C_Q = \int_K \log \frac{1}{|z_0-\zeta|^2} \ud A(\zeta) + t_0 Q(z_0 ).
\end{equation}
We emphasise that the Robin constant $C_Q$ also varies under perturbations of the exterior moments. 

Now let $\log \tau$ be defined as in~\eqref{eq_tau function}. One then observes that 
\begin{align*}
    \log \tau 
    &= \int_{K^2}\log\frac{1}{ |z-\zeta|}\ud A(z)\ud A(\zeta) - \int_K \bigg(2\int_K \log \frac{1}{|z-\zeta|}\ud A(\zeta) + t_0Q(z)\bigg)\ud A(z)
    \\
    &=\int_{K^2}\log\frac{1}{ |z-\zeta|}\ud A(z)\ud A(\zeta) - t_0C_Q.
\end{align*}
Then, by \eqref{eq_Robins const}, we have 
\begin{align} \label{eq_tau function 2}
\log \tau= \log \widetilde{\tau} -t_0^2Q(z_0), \qquad \log \widetilde{\tau}:= - \int_{K^2}\log \Big|\frac{1}{z-z_0}-\frac{1}{\zeta-z_0}\Big|\ud A(z)\ud A(\zeta).  
\end{align}
Note that $\log \widetilde{\tau}$ coincides with the original tau-function associated with exterior conformal maps in the theory of the dispersionless $2$D Toda hierarchy~\cite{KKMWZ01, Te09, GTV14}. Since the quantity $t_0Q(z_0)$ remains invariant under perturbations of the moments $t_j$ for $j\ge 1$, the tau-function~\eqref{eq_tau function} satisfies all of the integrability relations~\eqref{eq_integrability}, except possibly the variation with respect to the area parameter $t_0$. To complete the proof, it therefore remains to verify directly that 
\begin{equation}
    \frac{\ud}{\ud t_0}\log \tau = -C_Q.
\end{equation}
Let $v_\n$ denote the outward normal velocity associated with the perturbation with respect to the area parameter $t_0$, as defined in~\eqref{eq_normal velocity}. This deformation corresponds precisely to the Hele-Shaw flow with source at infinity~\cite{LM16, HM13}. It follows that 
\begin{align*}
    \frac{\ud}{\ud t_0} \log \tau &= -2\int_{\partial K} \bigg(\int_K \log \frac{1}{|z-\zeta|}\ud A(\zeta)\bigg)v_\n(z)\ud s(z) -\int_{\partial K} t_0 Q(z) v_\n(z) \ud s(z)\\
    &= - \int_{\partial K} \bigg(\int_K \log \frac{1}{|z-\zeta|^2}\ud A(\zeta) + t_0 Q(z)\bigg)v_\n(z)\ud s(z)\\
    &= -C_Q \int_{\partial K} v_\n(z) \ud s(z) = -C_Q \frac{\ud t_0}{\ud t_0} = -C_Q.
\end{align*}
This completes the proof. 
\end{proof}

\subsection{Proof of Proposition~\ref{prop_sub_simply}} \label{Subsec_coefficient simply asymp}

We now specialise to the particular potential $Q$ defined in~\eqref{potQ}. With the preparatory results established above, we are ready to derive the refined asymptotic behaviour of the coefficient $\AA_{N,N}$ defined in~\eqref{def of AB}.

\begin{proof}[Proof of Proposition~\ref{prop_sub_simply}] Recall that the droplet $K$ is simply connected. We introduce the function  
\begin{equation}
    g(z) := \frac{\QQ(z)}{2} + \log \frac{\phi(z)}{\phi'(\infty)}, \qquad z\in \overline{K^c},
\end{equation}
where $\QQ$ is the function appearing in Theorem~\ref{thm_Hedenmalm Wennman}. Then it follows from \eqref{def of HW expansion} that  
\begin{equation} \label{eq_HW expansion in terms of g}
    P_{N,N}(z) =(\phi'(\infty))^{-1}F_0(z)e^{Ng(z)} \Big(1+\frac{F_1(z)}{F_0(z)}\frac{1}{N} + O(\frac{1}{N^2})\Big), 
\end{equation}
as $N \to \infty.$

Note that by \eqref{eq_F0}, we have 
\begin{equation} \label{asymp of F0 z to infty}
    F_0(z) =\sqrt{\phi'(\infty)(\phi'(z))} = \phi'(\infty)\Big(1+ O(\frac{1}{z^2})\Big), \qquad z\to \infty.
\end{equation}
To derive the asymptotic behaviour of $g(z)$ as $z\to\infty$, let $\mathsf{S}=\mathsf{S}_{K^c}$ denote the Schwarz function associated with the exterior domain $K^c$. Since $K$ is a simply connected droplet, Theorem~\ref{thm_local droplet} gives $r_{K^c}=\partial H$. Hence, by~\eqref{eq_Schwarz QD}, we obtain 
\begin{equation}
    \mathsf{S}(z) - \partial H(z) = \int_K\frac{\ud A(\zeta)}{z-\zeta}.
\end{equation}
Therefore, it follows from \eqref{eq_QQ algebraic Hele Shaw} that 
\begin{equation} \label{asymp of g z to infty}
    g(z) = \frac{1}{1-\tau^2}\int_K \log (z-\zeta)\ud A(\zeta)= \log z - \frac{1}{1-\tau^2}\sum_{j=1}^\infty \frac{M_j}{j}\frac{1}{z^j},
\end{equation}
where $M_j$ are interior harmonic moments~\eqref{eq_interior harmonic moment}.  

Then by combining \eqref{def of AB}, \eqref{eq_HW expansion in terms of g}, \eqref{asymp of F0 z to infty} and \eqref{asymp of g z to infty}, we obtain
\begin{equation} \label{eq_AAN asymp v0}
    \AA_{N,N} = -\frac{M_1}{1-\tau^2} N +  \big[ \frac{1}{z} \big] \Big(\frac{F_1}{F_0} \Big)(z)\frac{1}{N} + O(N^{-2}),
\end{equation}
where $[1/z]f(z)$ denotes the coefficient of $1/z$ of a function $f(z)$ as $ z \to \infty$. It remains to identify the coefficients appearing in~\eqref{eq_sub_simply} and~\eqref{eq_AAN asymp v0}, i.e. 
\begin{equation} \label{desired eq for ANN prop}
 \frac{1+\tau}{2}\frac{\ud}{\ud a}I_Q[\mu_Q] = \frac{M_1}{1-\tau^2}, \qquad  \frac{1+\tau}{48} \frac{\ud}{\ud a}\SS[\psi] =\big[\frac{1}{z}\big]\Big(\frac{F_1}{F_0}\Big)(z). 
\end{equation}

We begin with the coefficient of the linear term in $N$.
For this, it is convenient to consider a translation of external potential,
\begin{equation}
    \mathsf{Q}(z; t_1) = \frac{1}{1-\tau^2}\Big(|z|^2 -2\re \Big(t_1z + \frac{\tau }{2}z^2\Big)\Big)- 2c\log |z|, \qquad t_1=  -(1-\tau)a. 
\end{equation}
Note that
\begin{equation}
    Q(z+a) = \mathsf{Q}(z; t_1)+ \frac{a^2}{1+\tau}.
\end{equation}
Observe that the droplet $\mathsf{K}$ associated with the potential $\mathsf{Q}(z;t_1)$ is simply the translated domain $K-a$. Moreover, it follows directly from~\eqref{def of log energy} that
\begin{equation}
    I_\mathsf{Q}[\mu_\mathsf{Q}] = I_Q[\mu_Q]-\frac{a^2}{1+\tau}.
\end{equation}

Identifying $-t_0^2 I_{\mathsf Q}[\mu_{\mathsf Q}]$ with the tau-function of the dispersionless $2$D Toda hierarchy associated with $\mathsf{K}$, by Lemma~\ref{Lem_tau function energy}, the integrability relations~\eqref{eq_integrability} imply that 
\begin{equation}\label{eq_diff t1}
    \frac{\partial}{\partial t_1}I_\mathsf{Q}[\mu_\mathsf{Q}] = -\frac{1}{t_0^2}\int_{\mathsf{K}}z  \ud A(z).
\end{equation}
Since $t_0 =1-\tau^2$ in our present case (cf. \eqref{eq:eqmeasure}), it follows from ~\eqref{eq_diff t1} that 
\begin{align*}
    \frac{\ud}{\ud a}I_Q[\mu_Q] &=
    \frac{2a}{1+\tau}+\frac{\ud }{\ud a }I_\mathsf{Q}[\mu_\mathsf{Q}]
    =\frac{2a}{1+\tau}-2(1-\tau) \re \frac{\partial}{\partial t_1}\Big|_{t_1= -(1-\tau)a}I_\mathsf{Q}[\mu_\mathsf{Q}] \\
    &= \frac{2a}{1+\tau} +\frac{2(1-\tau)}{(1-\tau^2)^2} \re \int_{K-a}z \ud A(z) 
    = \frac{2(1-\tau)}{(1-\tau^2)^2} \re M_1.
\end{align*}
Moreover, since $M_1$ is real by symmetry with respect to the $x$-axis, we obtain the first identity in~\eqref{desired eq for ANN prop}.

Next, we prove the second identity in~\eqref{desired eq for ANN prop}. Let $\mathsf{K}(t_1)$ denote the droplet associated with the potential $\mathsf{Q}(\cdot;t_1)$, and let $\varphi_{t_1}$ be the corresponding exterior conformal map. Since $t_1=-(1-\tau)a$, we have 
\begin{equation}
    \psi(w) = \varphi_{t_1}(w)+a. 
\end{equation}
To establish the desired identity, we analyse the Löwner--Kufarev evolution of the domains $\mathsf{K}(t_1)$, namely the evolution for which the associated conformal map satisfies the Löwner--Kufarev equation~\eqref{eq_LK evolution}. 
We write the Löwner--Kufarev equation as
\begin{equation}
    \frac{\partial \varphi_{t_1}}{\partial \re t_1} = w\varphi_{t_1}'(w) P(t_1, w).
\end{equation}
Note that the second and third identities in the integrability relations~\eqref{eq_integrability} imply that 
\begin{align}
    -\frac{\partial}{\partial \re t_1}\int_{\mathsf{K}(t_1)^c} z^{-\ell}\ud A(z)&=\frac{1}{\pi}\int_{\partial \mathsf{K}(t_1)} z^{-\ell}v_\n(z) \ud s(w) =
    \begin{cases}
        0 &\textup{if } \ell=0, \;\text{or } \ell\ge2,  
        \smallskip 
        \\ 
        1 &\textup{if } \ell=1.
    \end{cases}
\end{align}
Moreover, by using~\eqref{eq_normal velocity} we have
\begin{equation*}
    \frac{1}{\pi}\int_{\T} w^{-\ell}|\varphi'(w)|^2 \re P(t_1, w) \ud s(w) = \frac{1}{\pi}\int_{\partial \mathsf{K}(t_1)} (\varphi^{-1}(z))^{-\ell} v_\n(z) \ud s(z) = 
    \begin{cases}
        0 &\textup{if } \ell = 0\;\text{or }\ell\ge 2,
        \smallskip 
        \\
        \ds \varphi'(\infty) &\textup{if } \ell=1.
    \end{cases}
\end{equation*}
Since $|\varphi'(w)|^2 \re P(t_1,w)$ is real and real-analytic on the unit circle $\T$, we obtain 
\begin{equation}
    |\varphi'(w)|^2 \re P(t_1,w) = \frac{\varphi'(\infty)}{2} (w+\frac{1}{w}) = \varphi'(\infty)\re w, \quad w \in \T.
\end{equation}
Note that $\SS[\psi]=\SS[\varphi]$, since the Liouville action is translation invariant by definition; see~\eqref{def of Liouville action}. Therefore, by using Proposition~\ref{prop_var Liouville}, we have 
\begin{align*}
\begin{split}
    \frac{\ud }{\ud a}\SS[\psi] &= -(1-\tau) \frac{\partial}{\partial \re t_1}\bigg|_{t_1= -(1-\tau)a} \SS[\varphi]\\
    &= \frac{2(1-\tau)\varphi'(\infty)}{\pi}  \int_\T \Big(\re \big(w^2\{\varphi, w\}\big)+|\varphi'(w)|^2 \kappa^2(\varphi(w))\Big) \frac{\re w}{|\varphi'(w)|^2} \ud s(w). 
\end{split}
\end{align*}
Finally,
\begin{equation*}
    \varphi'(w) = \psi'(w), \qquad \varphi'(\infty) = \psi'(\infty) = \frac{1}{\phi'(\infty)} 
\end{equation*}
and $\Delta Q = (1-\tau^2)^{-1}$ gives
\begin{equation}\label{eq_Liouville deform a}
    \frac{1+\tau}{48}\frac{\ud}{\ud a}\SS[\psi] = \frac{1}{24\pi (\Delta Q) \phi'(\infty)}\int_\T\Big(\re \big(w^2\{\psi, w\}\big)+|\psi'(w)|^2 \kappa^2(\psi(w))\Big) \frac{\re w}{|\psi'(w)|^2} \ud s(w).
\end{equation}
Hence the second identity of~\eqref{desired eq for ANN prop} follows from~\eqref{eq_F1 1/z} and~\eqref{eq_Liouville deform a} as $\AA_{N,N}$ is real due to the conjugate invariance $Q(z) = Q(\bar{z})$.
\end{proof}

\begin{rem}
The key idea of the proof is to interpret the deformation with respect to the parameter $a$ as a deformation with respect to the generalised time $t_1$ in the integrable hierarchy. This is achieved by introducing a translated determinantal Coulomb gas model, in the same spirit as in the proof of~\eqref{eq_deform a}. 

A crucial point is that the integrability relations provide an explicit expression for $\re P(t_1,w)$ in the Löwner--Kufarev equation with respect to $\re t_1$. Indeed, if one attempts instead to analyse directly the Löwner--Kufarev evolution of $\psi$ with respect to the parameter $a$, the resulting expressions become difficult to relate to the Liouville action, and the computation strongly depends on the explicit form of $\psi$. In contrast, the present argument is entirely independent of any explicit formula for $\psi$. This flexibility allows the argument to extend naturally to more general settings in which analogous integrable-hierarchy structures remain available. We plan to return to this direction elsewhere. 
\end{rem}

\section{Doubly connected case}\label{sec doubly}

In this section, we derive the asymptotic behaviour of the orthogonal polynomials $P_{n,N}$ in the doubly connected regime and thereby prove Proposition~\ref{Prop_OP coefficients doubly}. In contrast to the simply connected case, the doubly connected regime requires analysing deformations not only with respect to the parameter $a$, but also with respect to $\tau$; see Figures~\ref{fig_phases} and~\ref{fig_deformation}. 
Note that the deformation formula with respect to $\tau$ in~\eqref{eq_deform tau} involves the second subleading coefficient $\BB_{n,N}$. Consequently, one must compute both $F_1$ and $F_2$ in the asymptotic expansion \eqref{def of F} of $F$. 

Although $F_2$ can in principle be computed algorithmically using the iterative scheme developed in Section~\ref{sec RNM OP} and implemented in Theorem~\ref{thm_F0F1} for the computation of $F_1$, we instead present a simpler argument based on the following a priori geometric observation: throughout the entire doubly connected regime (namely Regime I in Theorem~\ref{Thm_droplet}), the outer boundary component of the droplet $K_Q$ remains fixed and is explicitly given by an ellipse; see~\eqref{def of KQ doubly}. This special geometric structure allows one to derive the free energy variations in a substantially simpler manner; see also~\cite{Rou25} for a recent related study.

For the convenience of the reader, we briefly outline the main idea of this section. In Subsection~\ref{sec RNM OP}, we observed that the algorithmic computation of the asymptotic expansion of $F$ depends only on the behaviour of the external potential in a neighbourhood of the \textit{outermost} boundary component of the droplet. Consequently, two families of planar orthogonal polynomials associated with different external potentials share the same fine asymptotic structure---in particular, the same $O(N^{-1})$ correction terms---provided that the corresponding droplets have identical outer boundaries and that the external potentials agree near this boundary component.

In the present setting, the planar orthogonal polynomials associated with the potential $Q$ in~\eqref{potQ} share the same fine asymptotic behaviour as the Hermite polynomials, which arise as the planar orthogonal polynomials associated with the elliptic Ginibre potential $Q^{\rm e}$ defined in~\eqref{def of Q induced elliptic}. Hence, the explicit coefficient formulas for the Hermite polynomials given in~\eqref{eq_Hermite} determine the fine asymptotic expansion of $P_{n,N}$. In fact, this argument yields not only the expansion up to order $O(N^{-2})$, but also asymptotic expansions to arbitrary precision; see Remark~\ref{Rem_full order expansion}.

In Subsection~\ref{Subsec_OP with multiply connected droplet}, we consider general algebraic Hele-Shaw potentials associated with multiply connected droplets. There, Proposition~\ref{prop_multiply connected} shows that the asymptotic behaviour of the corresponding planar orthogonal polynomials is essentially governed by the geometry of the outermost boundary component of the droplet. Intuitively, this means that the analysis can be reduced to the simply connected case corresponding to the outer boundary via a ``hole-filling'' procedure. 
In Subsection~\ref{Subsec_coefficient doubly asymp}, we specialise to the model~\eqref{potQ} and prove Proposition~\ref{Prop_OP coefficients doubly} using the structural observation provided by Proposition~\ref{prop_multiply connected}.

\subsection{Planar orthogonal polynomials with multiply connected droplets} \label{Subsec_OP with multiply connected droplet}


Throughout this subsection, we work under the following setting and assumptions.

Let $K$ be a multiply connected local droplet associated with an algebraic Hele-Shaw potential $W$ of the form~\eqref{def of algebraic HeleShaw}, having $d$ holes and total area $\pi t_0$. Then, by Theorem~\ref{thm_local droplet}, we have 
\begin{equation}
    K^c = \Omega_1 \cup \Omega_2 \cup \ldots \cup \Omega_d \cup \Omega_\infty,
\end{equation}
where $\Omega_1,\ldots,\Omega_d$ are bounded quadrature domains, while $\Omega_\infty$ is an unbounded quadrature domain.

We assume that all poles of the quadrature functions associated with the bounded domains $\Omega_j$ are simple. Since the quadrature functions $r_\ell$ corresponding to the components $\Omega_\ell$ sum to $\partial H$ by Theorem~\ref{thm_local droplet}, we may write
\begin{equation}
    \partial H (z) = r_\infty(z) + \sum_{j=1}^m \frac{c_jt_0}{z-a_j},
\end{equation}
where $r_\infty$ is a rational function whose poles lie in $\Omega_\infty$, and each point $a_j$ belongs to some bounded component $\Omega_\ell$ with $\ell\in\{1,\ldots,d\}$.

A typical class of external potentials in this setting is given by
\begin{equation}\label{eq_multiply Q}
    Q(z) = Q^0(z) - 2\sum_{j=1}^m c_j \log |z-a_j|, \qquad c_1,\ldots, c_m>0,
\end{equation}
where
\begin{equation}\label{eq_multiply Q0}
    Q^0(z) := \frac{1}{t_0} \Big(|z|^2- H^0(z)\Big), \qquad \partial H^0(z) = r_\infty(z).
\end{equation}
Clearly, our particular potential~\eqref{potQ} belongs to this class. 

In this general setting, we first establish the following lemma. 

\begin{lem}
Under the assumptions above, the domain $K^0 := \Omega_\infty^c$ is a simply connected local droplet associated with the potential $Q^0/(1+c)$, where $c=c_1+\ldots + c_m. $ 
\end{lem}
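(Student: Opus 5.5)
The plan is to check the characterisation of local droplets in Lemma~\ref{lem_variational equality} for the potential $Q^0/(1+c)$ and the compact set $K^0=\Omega_\infty^c=K\cup\Omega_1\cup\dots\cup\Omega_d$. We need three things: the area identity $\mathrm{area}(K^0)=\pi(1+c)t_0$, and the Euler--Lagrange identity \eqref{eq_Euler lagrange} on $K^0$, in which the factor $t_0$ is replaced by $(1+c)t_0$. Since $(1+c)t_0\cdot Q^0/(1+c)=t_0Q^0$, that identity reads $U^{K^0}+t_0Q^0\equiv C'$ on $K^0$, where $U^{E}(z):=\int_E\log|z-\zeta|^{-2}\ud A(\zeta)$. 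Simple connectedness comes for free: $K^0$ is the complement in $\wh{\C}$ of the simply connected domain $\Omega_\infty\ni\infty$. Also $Q^0/(1+c)$ is again of the form \eqref{def of algebraic HeleShaw}, because $H^0=H+2t_0\sum_j c_j\log|z-a_j|$ is real harmonic near $K^0$ away from the $a_j$, with $\partial H^0=r_\infty$ rational.

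The first step is to identify the bounded quadrature functions. By Theorem~\ref{thm_local droplet}, $r_1+\dots+r_d+r_\infty=\partial H$. Each $r_\ell$ has its poles in $\Omega_\ell$ and vanishes at $\infty$, since the computation in the proof of Proposition~\ref{prop_R Hele-Shaw} gives $r_\ell(z)=O(1/z)$. Matching principal parts therefore gives
\begin{equation*}
r_\ell(z)=\sum_{a_j\in\Omega_\ell}\frac{c_jt_0}{z-a_j}.
\end{equation*}
Taking $f\equiv1$ in \eqref{eq_quadrature identity} then yields $\mathrm{area}(\Omega_\ell)=\pi t_0\sum_{a_j\in\Omega_\ell}c_j$. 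Summing over $\ell$ and adding $\mathrm{area}(K)=\pi t_0$ gives $\mathrm{area}(K^0)=\pi t_0(1+c)$.

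The second step is the Euler--Lagrange identity on $K$. Fix $z\in K$ with $z\notin\overline{\Omega_\ell}$. The function $\zeta\mapsto\log|z-\zeta|^{-2}$ is the real part of a function holomorphic on a neighbourhood of $\overline{\Omega_\ell}$, since $\Omega_\ell$ is simply connected and does not contain $z$. The quadrature identity, applied to the holomorphic function and then taking real parts, gives
\begin{equation*}
U^{\Omega_\ell}(z)=-\sum_{a_j\in\Omega_\ell}c_jt_0\log|z-a_j|^2 .
\end{equation*}
Summing over $\ell$,
\begin{equation*}
U^{K^0}(z)+t_0Q^0(z)=U^K(z)-2t_0\sum_j c_j\log|z-a_j|+t_0Q^0(z)=U^K(z)+t_0Q(z)=C .
\end{equation*}
Both sides are continuous, so the identity extends to $z\in\partial\Omega_\ell$.

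The third step covers the holes. On $\Omega_\ell$ we have $\Delta U^{K^0}=-1$ with our normalisation $\ud A=\ud^2z/\pi$, while $\Delta(t_0Q^0)=1$ because $H^0$ is harmonic on $\Omega_\ell$: the poles of $r_\infty$ lie in $\Omega_\infty$. Hence $U^{K^0}+t_0Q^0$ is harmonic on $\Omega_\ell$, continuous up to the boundary, and equal to $C$ on $\partial\Omega_\ell$. By the maximum principle it equals $C$ throughout $\Omega_\ell$. Lemma~\ref{lem_variational equality} then shows that $K^0$ is a local droplet of $Q^0/(1+c)$, with some neighbourhood $U$ provided by that lemma.

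The main point needing care is the application of the quadrature identity to $\log|z-\zeta|$. One must choose a single-valued holomorphic branch of $\log(z-\zeta)$ on $\overline{\Omega_\ell}$, which is possible because $\Omega_\ell$ is simply connected and $z\notin\overline{\Omega_\ell}$. The boundary case $z\in\partial\Omega_\ell$ is then handled by continuity of logarithmic potentials of bounded densities. A further subtlety is that $a_j$ lies in the interior of $K^0$, so $Q^0$ must be shown regular there: this holds because $Q^0$ contains no logarithmic terms at the $a_j$.
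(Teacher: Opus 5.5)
Your proof is correct, but it reaches the Euler--Lagrange identity by a different route from the paper. The area computation is the same as the paper's: summing the residues of $r_1+\dots+r_d=\sum_j c_jt_0/(z-a_j)$.

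The paper never uses the Euler--Lagrange equation of $K$ and never treats the holes separately. It differentiates and checks
$\bar z-\partial H^0(z)=\int_{K^0}\frac{\ud A(\zeta)}{z-\zeta}$ on all of $\interior K^0$ at once. By Green's formula this becomes a Cauchy integral of $\bar\zeta=\mathsf{S}_{\Omega_\infty}(\zeta)$ over $\partial K^0$. That integral evaluates to $-r_\infty$, because $\mathsf{S}_{\Omega_\infty}-r_\infty=\int_{K^0}\frac{\ud A(\zeta)}{\cdot-\zeta}$ is holomorphic on $\Omega_\infty$ and decays at infinity. Apart from the area count, this uses only data of the outer quadrature domain, which fits the paper's theme that everything is governed by the outer boundary.

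You instead use the quadrature identities of the bounded holes. They show that each $\Omega_\ell$, viewed as a uniform charge, has the same logarithmic potential outside itself as the point charges $c_jt_0$ at the $a_j\in\Omega_\ell$; this is a Newton-type theorem for quadrature domains. That converts the known identity $U^K+t_0Q=C$ on $K$ into $U^{K^0}+t_0Q^0=C$ on $K$. You then fill each hole by harmonicity and the maximum principle.

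Your route costs more bookkeeping: explicit identification of each $r_\ell$, a branch of $\log(z-\zeta)$, a continuity extension to $\partial\Omega_\ell$, and the maximum-principle step. In return it is more elementary, gives the individual hole areas, and makes the electrostatic mechanism behind the hole-filling reduction fully transparent. The paper's argument is shorter and intrinsic to $\Omega_\infty$.

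One small slip in your side remark: it should read $H^0=H-2t_0\sum_j c_j\log|z-a_j|$, with a minus sign. Only then do you get $\partial H^0=\partial H-\sum_j\frac{c_jt_0}{z-a_j}=r_\infty$. This sign is also the one consistent with the relation $t_0Q^0=t_0Q+2t_0\sum_jc_j\log|z-a_j|$ that you correctly use in your second step.
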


\begin{proof}
By virtue of Lemma~\ref{lem_variational equality}, it suffices to verify that
\begin{equation} \label{area of K0}
    \textup{area}(K^0)=\pi(1+c)t_0
\end{equation} 
and that
\begin{equation}\label{eq_Euler lagrange2}
    \int_{K^0} \log \frac{1}{|z-\zeta|^2}\ud A(\zeta)+t_0Q^0(z)= C, \quad z\in K^0,
\end{equation}
for some constant $C$.

We begin by computing the area of $K^0$. For each bounded quadrature domain $\Omega_\ell$, the quadrature identity~\eqref{eq_quadrature identity} yields 
\begin{equation}
   \frac{1}{\pi} \text{area }\Omega_\ell = \int_{\Omega_\ell} \ud A(z) = \frac{1}{2\pi i}\int_{\partial \Omega_\ell} r_\ell(z)\ud z.
\end{equation}
Note that by \eqref{LM sum of quad functions}, we have 
\begin{equation*}
    r_1(z) + \ldots + r_d(z) = \partial H(z) -r_\infty(z) = \sum_{j=1}^m \frac{c_jt_0}{z-a_j},
\end{equation*}
Therefore, we obtain 
\begin{align}
    \frac{1}{\pi}\textrm{area }K^0 = \frac{1}{\pi} \textrm{area }K + \sum_{\ell=1}^d\frac{1}{\pi} \text{area }\Omega_\ell
    = t_0 + \sum_{j=1}^m \Res_{z=a_j}(r_1+\ldots +r_d) = (1+c)t_0,
\end{align}
which gives rise to \eqref{area of K0}. 

Next, we show \eqref{eq_Euler lagrange2}. To prove this, it suffices to show that the derivative with respect to $z$ of the left-hand side of~\eqref{eq_Euler lagrange2} vanishes on $\interior K^0$. Equivalently, it is enough to verify that 
\begin{equation}
    \bar{z} - \partial H^0(z) = \int_{K^0}\frac{\ud A(\zeta)}{z-\zeta},
    \qquad z\in \interior K^0,
\end{equation}
By Green's formula, this is equivalent to 
\begin{equation} \label{H derivative Cauchy}
 - \partial H^0(z) =   \frac{1}{2\pi i}\int_{\partial K^0}\frac{\bar{\zeta} \ud \zeta}{z-\zeta}, \qquad z\in \interior K^0,
\end{equation}

By~\eqref{eq_Schwarz QD}, the Schwarz function $\mathsf{S}_{\Omega_\infty}$ satisfies 
\begin{equation}
    \mathsf{S}_{\Omega_\infty}(z) = r_\infty(z) + \int_{K^0}\frac{\ud A(\zeta)}{z-\zeta}, \quad z\in \bar{\Omega}_\infty.
\end{equation}
Thus $\mathsf{S}_{\Omega_\infty}- r_\infty$ is a bounded holomorphic function on $\Omega_\infty$. Consequently, \eqref{H derivative Cauchy} follows from
\begin{align*}
    \frac{1}{2\pi i}\int_{\partial K^0}\frac{\bar{\zeta} \ud \zeta}{z-\zeta} &= \frac{1}{2\pi i}\int_{\partial K^0} \frac{\mathsf{S}_{\Omega_\infty}(\zeta)}{z-\zeta}\ud \zeta = -r_{\infty}(z) + \frac{1}{2\pi i}\int_{\partial K^0}\frac{\mathsf{S}_{\Omega_\infty}(\zeta)- r_\infty(\zeta)}{z-\zeta}\ud \zeta \\
    &=-r_\infty(z)+ \lim_{r\to \infty} \frac{1}{2\pi  i}\int_{|\zeta|=r} \frac{\mathsf{S}_{\Omega_\infty}(\zeta)-r_\infty(\zeta)}{z-\zeta}\ud \zeta  = -r_\infty(z)= -\partial H^0(z).
\end{align*}
Here we used that $r_\infty$ is a rational function holomorphic on $(K^0)^c= \Omega_\infty$. 
This completes the proof of~\eqref{eq_Euler lagrange2}.
\end{proof}

We now relate the planar orthogonal polynomials associated with the potentials~\eqref{eq_multiply Q} and~\eqref{eq_multiply Q0}. To this end, let $U$ and $U^0$ be open neighbourhoods of $K$ and $K^0$, respectively, such that $K$ and $K^0$ are the droplets corresponding to the localised potentials $Q_{\overline U}$ and $Q_{\overline{U^0}}$. 

We denote by $P_{n,N}$ and $P_{n,N}^0$ the planar orthogonal polynomials associated with the measures
\begin{equation}
     e^{-NQ}\mathbbm{1}_{\bar{U}}\ud A \quad \text{and }\quad  e^{-NQ^0}\mathbbm{1}_{\bar{U}^0}\ud A,
\end{equation}
respectively. Under the assumption that the droplet is multiply connected, these two families of orthogonal polynomials are related by the following proposition.

\begin{prop}[\textbf{Hole-filling reduction for orthogonal polynomials}] \label{prop_multiply connected}
Assume that $c_1N,\ldots,c_mN$ are integers. Under the assumptions above, we have
\begin{equation}
    \Big(\prod_{j=1}^m(z-a_j)^{c_j N}\Big)P_{N,N}(z) = P^0_{(1+c)N,N}(z) (1 +O(N^{-\infty})), \quad N\to \infty,
\end{equation}
where $c=c_1+\ldots + c_m$ and the error term is uniform for all $z\in\C$ satisfying 
\begin{equation}\label{eq_multiply error range}
    \textup{dist}(z, (K^0)^c) \le (N^{-1}\log N)^{1/2}.
\end{equation}
\end{prop}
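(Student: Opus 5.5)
The plan is to apply the Hedenmalm--Wennman expansion (Theorem~\ref{thm_Hedenmalm Wennman}, with Remark~\ref{rem HW for M=tN} and Remark~\ref{Rem_HW localised potentials}) to both families and show that every ingredient of the two expansions coincides. Set $p_N(z):=\prod_{j=1}^m(z-a_j)^{c_jN}P_{N,N}(z)$, a monic polynomial of degree $(1+c)N$. I would not try to derive orthogonality relations for $p_N$ directly: by construction $p_N$ is orthogonal, with respect to $e^{-NQ^0}\mathbbm{1}_{\bar U}\ud A$, only to the codimension-$cN$ subspace of multiples of $\prod_j(z-a_j)^{c_jN}$. Instead I would compare the two asymptotic expansions term by term.

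\textbf{Step 1: the expansion of $P^0_{(1+c)N,N}$.} By the preceding lemma, $K^0$ is a simply connected local droplet for $Q^0/(1+c)$. Remark~\ref{rem HW for M=tN} with $t=1+c$ then gives
\[
P^0_{(1+c)N,N}=(\phi'(\infty))^{-(1+c)N-1}\phi^{(1+c)N}e^{N\QQ^0/2}\sum_k N^{-k}F^0_k .
\]
Here $\phi:\Omega_\infty\to\D_e$ is the same conformal map as for $K$, since both droplets have outer boundary $\Gamma=\partial\Omega_\infty$.

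\textbf{Step 2: the expansion of $P_{N,N}$.} Applying Theorem~\ref{thm_Hedenmalm Wennman} to $Q$ gives
\[
P_{N,N}=(\phi'(\infty))^{-N-1}\phi^{N}e^{N\QQ/2}\sum_k N^{-k}F_k .
\]

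\textbf{Step 3: the exponential prefactors agree.} Consider
\[
G(z):=\prod_j (z-a_j)^{c_j}\,e^{\QQ(z)/2}\,\phi(z)^{-c}\,e^{-\QQ^0(z)/2}.
\]
Since every $a_j$ lies in a bounded hole, $G$ is holomorphic and nonvanishing on $\Omega_\infty$. It is bounded at $\infty$, because both $\prod_j(z-a_j)^{c_j}$ and $\phi^c$ grow like $z^c$. On $\Gamma$ we have $|\phi|=1$ and $\re\QQ=Q=Q^0-2\sum_j c_j\log|z-a_j|=\re\QQ^0-2\sum_j c_j\log|z-a_j|$, so $|G|=1$ there. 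Hence $\log|G|$ is a bounded harmonic function on $\Omega_\infty$ vanishing on $\Gamma$, so it vanishes identically and $G$ is a unimodular constant. Consequently
\[
\prod_j(z-a_j)^{c_jN}\phi^Ne^{N\QQ/2}=\text{const}^N\cdot\phi^{(1+c)N}e^{N\QQ^0/2}.
\]
The constant is then fixed by comparing leading behaviour at infinity, where both sides must be monic-compatible with the factors $(\phi'(\infty))^{-N-1}$ and $(\phi'(\infty))^{-(1+c)N-1}$.

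\textbf{Step 4: the correction terms agree.} I would show $F_k=F^0_k$ for all $k$. The iterative scheme \eqref{eq_iteration Ak}--\eqref{eq_iteration Bk} takes as input only $\phi$, the normalisation \eqref{eq_FA asymp infty}, and the function $R$ (through $\wh R$) near $\Gamma$. By Proposition~\ref{prop_R Hele-Shaw}, $R$ is $t_0^{-1}\bigl(|z|^2-|z_0|^2-2\re\int_{z_0}^z\mathsf S_{\Omega_\infty}\bigr)$, which depends only on $\Omega_\infty$ and $\Delta Q=\Delta Q^0=1/t_0$. For $Q^0$, the $M=tN$ rescaling uses the potential $Q^0/t$ with parameter $Nt$, so the product $N\cdot(Q^0/t)$ reproduces the same $N\Delta Q$. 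A direct check is that $R^0=Q^0-\re\QQ^0-2(1+c)\log|\phi|$ equals $Q-\re\QQ-2\log|\phi|$ by Step 3. So the master equations \eqref{def of master eq} coincide, and by uniqueness of the solution all $F_k$, $A_k$, $B_k$ agree.

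\textbf{Conclusion.} Dividing the two expansions, on the region \eqref{eq_multiply error range} we get $p_N/P^0_{(1+c)N,N}=1+O(N^{-k_0-1})$ for every $k_0$. The sum $\sum_k N^{-k}F_k$ is bounded below there, since $F_0=\sqrt{\phi'(\infty)\phi'}\neq0$. This is the claim.

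The step I expect to be the main obstacle is Step 4. One must make sure the Hedenmalm--Wennman iteration really is insensitive to everything except the local data at $\Gamma$, and that the rescaling $Q^0\mapsto Q^0/(1+c)$ with degree $(1+c)N$ produces literally the same master equation, including the normalising constants $\sqrt2,\sqrt{2\pi}$ and $\phi'(\infty)$. If the literal identification is delicate, I would fall back on verifying $R=R^0$ directly, as in Step 4, and then invoking uniqueness of the iterative solution.
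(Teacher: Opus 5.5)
Your proposal is correct and follows essentially the same route as the paper: apply the Hedenmalm--Wennman expansion to $Q$ and, via the $M=tN$ remark with $t=1+c$, to $Q^0$; observe that the correction terms coincide because $R$ agrees near $\Gamma$; and match the exponential prefactors to produce $\prod_j(z-a_j)^{c_jN}$. The only difference is in the prefactor step: you derive the relation between $\QQ$ and $\QQ^0$ by a maximum-principle argument and fix the remaining constant by monicity, whereas the paper writes this relation down directly. Your version is, if anything, more careful about the normalising constants.
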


\begin{proof}
Let $\phi:(K^0)^c\to\D_e$ denote the inverse of the normalised exterior conformal map associated with $K^0$. Then, by Theorem~\ref{thm_Hedenmalm Wennman} and Remark~\ref{rem HW for M=tN}, as $N \to \infty$, we have 
\begin{align}
\begin{split} \label{HW multiply and outer}
    P_{N, N}(z) &= (\phi'(\infty))^{-N-1} (\phi(z))^{N} e^{N\QQ(z)/2} F(z),\\
    P_{(1+c)N, N}^0(z) &= (\phi'(\infty))^{-(1+c)N-1} (\phi(z))^{(1+c) N} e^{N\QQ^0(z)/2} F^0(z).
\end{split}
\end{align}
Here, $\QQ,F$ and $\QQ^0,F^0$ denote the corresponding functions associated with the potentials $Q$ and $Q^0$, respectively. 

We observe that $F$ and $F^0$ agree up to arbitrary precision as $N \to \infty$, in the sense that
\begin{equation} \label{F F0 asymp full}
    F(z) =F^0(z)(1+O(N^{-\infty})), \quad N \to \infty,
\end{equation}
uniformly for all $z\in\C$ satisfying~\eqref{eq_multiply error range}. By the discussion following Theorem~\ref{thm_Hedenmalm Wennman}, it suffices to verify that $R=R^0$ in a neighbourhood of $\Gamma=\partial\Omega_\infty$. This is immediate, since both functions are given by the same expression~\eqref{def of R Hele-Shaw}. On the other hand, it follows directly from~\eqref{eq_multiply Q} and~\eqref{eq_multiply Q0} that
\begin{equation}
    \QQ(z) = \QQ^0(z) -2 \sum_{j=1}^m c_j\log(z-a_j) + 2c \log \big(\phi(z)/\phi'(\infty)\big).
\end{equation} 
Therefore,
\begin{equation} \label{exp N QQ0 Q}
    e^{N(\QQ^0(z)- \QQ(z))/2}(\phi(z)/\phi'(\infty))^{cN} = \prod_{j=1}^m (z-a_j)^{c_jN}, \qquad N \to \infty.
\end{equation} 
Combining~\eqref{HW multiply and outer}, \eqref{F F0 asymp full}, and~\eqref{exp N QQ0 Q}, we obtain the proposition.
\end{proof}

Potentials of the form~\eqref{eq_multiply Q} correspond to determinantal Coulomb gas models with point-charge insertions into an algebraic Hele-Shaw potential. 
This also naturally arises in the products of characteristic polynomials; see e.g. \cite{AV03}. 
The asymptotic behaviour of the associated orthogonal polynomials has been studied extensively from the Riemann--Hilbert perspective; see, for instance,~\cite{BBLM15, BFKL26, KLY25}. In the context of multiple insertions, Ginibre ensembles with two point charges were recently analysed in~\cite{KKL25}. 

Proposition~\ref{prop_multiply connected} can in fact be extended beyond the setting of algebraic Hele-Shaw potentials. In several works, it has been shown that droplets associated with certain non-Hele-Shaw potentials likewise have complements given by unions of disjoint quadrature domains; see, for example,~\cite{BFL25, BFKL26, BDSW18, CK22}, where spherical ensembles with point-charge insertions were studied. In many such situations, the multiply connected regime leads to rather complicated motherbody structures, which substantially complicates the corresponding Riemann--Hilbert analysis. 

On the other hand, Proposition~\ref{prop_multiply connected} suggests that one can still extract the asymptotic behaviour of the orthogonal polynomials, at least near infinity, by reducing the problem to the outer boundary component. For instance, this perspective applies to the post-critical regime studied in~\cite{BFKL26}. We emphasise, however, that in such settings the scaling factor $1+c$ must be modified, since the equilibrium measure typically has non-uniform density.


\subsection{Proof of Proposition~\ref{Prop_OP coefficients doubly}} \label{Subsec_coefficient doubly asymp}

Returning to our model, recall that the external potential is given by~\eqref{potQ}. By~\eqref{Q in terms of Qe and log}, the potential $Q$ is of the form~\eqref{eq_multiply Q} with $Q^0=Q^{\rm e}$, where $Q^{\rm e}$ denotes the elliptic Ginibre potential defined in~\eqref{def of Q induced elliptic}. Hence, Proposition~\ref{prop_multiply connected} provides a direct relation between the orthogonal polynomials $P_{n,N}$ associated with $Q$ and the planar Hermite polynomials $P_{n,N}^{\rm e}$ defined in~\eqref{def of planar Hermite}, corresponding to the potential $Q^{\rm e}$. Based on this observation, we now prove Proposition~\ref{Prop_OP coefficients doubly}.

\begin{proof}[Proof of Proposition~\ref{Prop_OP coefficients doubly}]
Note that if $cN$ were an integer, Proposition~\ref{Prop_OP coefficients doubly} would follow immediately from Proposition~\ref{prop_multiply connected} together with the explicit formula~\eqref{eq_Hermite} for the Hermite polynomials. However, since $cN$ is not necessarily an integer in our setting, we slightly modify the proof of Proposition~\ref{prop_multiply connected}.

Recall from~\eqref{def of KQ doubly} that, in the doubly connected regime, the droplet is given by $K_Q=\mathsf{E}\sm \mathsf{D}$. Accordingly, the inverse of the normalised exterior conformal map associated with $K^0=\mathsf{E}$ is given by the inverse Joukowsky transform
\[
    \phi_{1+c}(z) := \frac{z+ \sqrt{z^2-4\tau(1+c)}}{2\sqrt{1+c}}, \quad z \in \mathsf{E}^c,
\]
where we have made the dependence on the parameter $c$ explicit. 
We denote by $F^{\rm e}$ and $\QQ^{\rm e}$ the functions appearing in Theorem~\ref{thm_Hedenmalm Wennman} associated with the elliptic potential $Q^{\rm e}$. Note that 
\begin{equation} \label{QQ e Q e relation}
  \QQ(z) = \QQ^{\rm e}(z) -2c \log (z-a) + 2c \log \frac{\phi_{1+c}(z)}{\phi_{1+c}'(\infty)}. 
\end{equation}

Due to the explicit form of $P^{\rm e}_{n,N}$ in~\eqref{eq_Hermite}, the asymptotic expansion~\eqref{def of HW inhomogeneous} yields the following whenever $t>0$ is rational and $tN$ is an integer: 
\begin{align*}
    (\phi_t'(\infty))^{-tN-1}(\phi_t(z))^{tN} e^{N\QQ^{\rm e}(z)/2} F^{\rm e}(z;t) & =z^{tN} - \frac{\tau t(tN-1)}{2}z^{tN-2} + O(z^{tN-4}), \quad z\to\infty.
\end{align*}
Since $F^{\rm e}(z;t)$ depends continuously on the parameter $t$ (cf. Remark~\ref{rem HW for M=tN}), the asymptotic expansion
\begin{equation}
    F^{\rm e}(z;t) =  \phi_t'(\infty)\exp\Big(-N\Big(\frac{\QQ^{\rm e}(z)}{2}+ t\log \frac{\phi_t(z)}{\phi'_t(\infty) z}\Big)\Big) \Big(1- \frac{\tau t (tN-1)}{2} z^{-2} + O(z^{-4})\Big), \quad z\to \infty,
\end{equation}
holds not only for rational values of $t$, but for all real $t>0$. 
Then as in the proof of Proposition~\ref{prop_multiply connected}, we obtain
\begin{align*}
  P_{N,N}(z)&=  (\phi_{1+c}'(\infty))^{-N-1}(\phi_{1+c}(z))^N e^{N\QQ(z)/2} F(z)\\
    &= (\phi_{1+c}'(\infty))^{-N-1}(\phi_{1+c}(z))^N e^{N\QQ(z)/2} F^{\rm e}(z; 1+c) (1+O(N^{-\infty})). 
\end{align*}
Combining this with \eqref{QQ e Q e relation}, it follows that 
\begin{align}
\begin{split}\label{eq double connected asymptotic}
    P_{N,N}(z)   &=\exp\Big( \frac{N}{2} \Big(\QQ(z)- \QQ^{\rm e}(z)-2c \log \frac{\phi_{1+c}(z)}{\phi_{1+c}'(\infty)}-2(1+c)\log z\Big)\Big) \\
    &\quad \times \Big(1- \frac{\tau (1+c)((1+c)N-1)}{2} z^{-2} + O(z^{-4})\Big)(1+O(N^{-\infty}))\\
    &=\big(1-\frac{a}{z}\big)^{-cN}\Big(z^N- \Big(\frac{\tau(1+c)^2}{2}N -\frac{\tau(1+c)}{2}\Big) z^{N-2} + O(z^{N-4})\Big)(1+O(N^{-\infty})),
\end{split}
\end{align}
as $z\to\infty$. The asymptotic behaviours~\eqref{eq_sub_doubly0} and~\eqref{eq_sub_doubly1} now follow immediately from~\eqref{eq double connected asymptotic}.

To derive the asymptotic behaviour of $\AA_{N+1,N}$ and $\BB_{N+1,N}$, we make use of the following scaling property of the planar orthogonal polynomials:
\begin{equation}\label{eq_P scaling}
    P_{n,N}(z;a,c,\tau) = \big(\frac{n}{N}\big)^{n/2} P_{n,n}\big(\sqrt{\frac{N}{n}} z; \sqrt{\frac{N}{n}}a, \frac{N}{n}c, \tau\big).
\end{equation}
This identity follows directly from the orthogonality relation~\eqref{def of planar OP} together with the potential~\eqref{potQ} by a simple change of variables; see e.g. \cite{LY17}. 
Setting $n=N+1$ in~\eqref{eq_P scaling}, we have
\begin{align*}
\AA_{N+1, N}(a,c,\tau) &= \sqrt{\frac{N+1}{N}}\AA_{N+1, N+1}\big(\sqrt{\frac{N}{N+1}}a, \frac{N}{N+1}c, \tau\big), \\  \BB_{N+1, N}(a,c,\tau) &= \frac{N+1}{N}\BB_{N+1, N+1}\big(\sqrt{\frac{N}{N+1}}a, \frac{N}{N+1}c, \tau\big). 
\end{align*}
Now the asymptotic behaviours~\eqref{eq_sub_doubly2} and~\eqref{eq_sub_doubly3} follow from~\eqref{eq_sub_doubly0} and~\eqref{eq_sub_doubly1} by straightforward computations. This completes the proof.
\end{proof}

\appendix

\section{Explicit description of the droplet} \label{Appendix_exact description}

Throughout the proof of the main results, we do not rely on the explicit shape of the droplet associated with the potential $Q$ in~\eqref{potQ}, nor on explicit formulas for the associated logarithmic energy. Nevertheless, for completeness, we record these explicit expressions here, which were analysed in detail in the recent work~\cite{BY25}.

Recall that the potential~\eqref{potQ} depends on the three real parameters $\tau\in[0,1)$, $c\ge0$, and $a\ge0$. As discussed in Theorem~\ref{Thm_droplet}, the associated droplet exhibits three distinct topological phases, and the corresponding parameter regions are given as follows; see Figure~\ref{fig_phases}.  
  
\begin{itemize}
    \item (Regime I) The droplet is doubly connected if and only if the parameters $(a,c,\tau)$ satisfy
    \begin{equation}\label{eq_doubly connected range 1}
    a \le \min \Big \{  2\sqrt{\frac{2\tau(1+\tau)}{3+\tau^2}},2 \sqrt{ \frac{     \tau(1-\tau-2c\tau)   }{  1-\tau  } } \Big \}  \quad \text{and} \quad  0\leq c\leq \frac{1-\tau}{2\tau},
    \end{equation}
    or 
    \begin{equation}\label{eq_doubly connected range 2}
  2\sqrt{\frac{2\tau(1+\tau)}{3+\tau^2}} \le   a \le (1+\tau)\sqrt{1+c}-\sqrt{c(1-\tau^2)} 
  \quad \text{and} \quad 0\leq c\leq \frac{(1-\tau)^3}{2\tau(3+\tau^2)}. 
    \end{equation}
  \item (Regime II) The droplet is simply connected if and only if the parameters $(a,c,\tau)$ satisfy the following conditions. For a fixed $\tau$, the parameters $c$ and $a$ can be parametrised by two auxiliary parameters $q$ and $\lambda$ according to
\begin{align}
c& \equiv c(q,\lambda) =   \frac{\lambda}{q^2} \frac{(1-q^2)^2 (1-\tau q^2) +  q^2\lambda  }{(1-q^2)^2(1-\tau^2 + 2\tau \lambda ) - \lambda^2 }, \label{eq_simply connected c}
\\
a & \equiv a(q,\lambda) =  \sqrt{ \frac{1+\tau}{1-\tau} }
 \frac{  (1-\tau)(1-q^2)(  1+\tau q^2 ) -( 1-\tau q^2) \lambda   }{q\sqrt{ (1-q^2)^2 (1-\tau^2 + 2\tau \lambda) - \lambda^2 } }.\label{eq_simply connected p}
\end{align}
Here, the parameters $q$ and $\lambda$ lie in the range
\begin{equation}
  q\in(0,1), \qquad \lambda \in [0, \lambda_{ \rm cri}), 
\end{equation}
where $\lambda_{\rm cri}$ denotes the unique zero of the function $H(q,\cdot)$ given by 
\begin{align}\label{def of H(a,kappa)}
\begin{split} 
    H(q,\lambda) &:= \frac{1-\tau}{q} \Big(1+\tau q^2 -\frac{1-\tau q^2}{1-\tau}\frac{\lambda}{1-q^2} \Big)\Big(w_*-\frac{1}{w_*}\Big)
    \\
    &\quad - 2\Big(\frac{1-\tau q^2}{q^2}\lambda + \frac{\lambda^2}{(1-q^2)^2}\Big)\log\frac{|qw_*-1|}{|w_*-q|} - 2\Big(1-\tau^2 +\frac{1+\tau q^2}{q^2}\lambda\Big)\log |w_*|,
\end{split}
\end{align}
where $w_*$ is given by
\begin{equation} \label{def of zstar}
w_* := \frac1{2q} \bigg( q^2+ 1+\frac{\lambda}{1-\tau} + \sqrt{ \Big( q^2+ 1+\frac{\lambda}{1-\tau} \Big)^2-4q^2 }\,\bigg). 
\end{equation}
\smallskip 
\item (Regime III) The droplet consists of two simply connected components whenever the parameters $(a,c,\tau)$ lie outside both Regime I and Regime II.
\end{itemize}

We also note that, in Theorem~\ref{Thm_droplet}(ii), the parameters $R$, $\lambda$, and $q$ are characterised by the following coupled equations:
\begin{equation}
\begin{cases} \label{def of coupled equations}
\displaystyle 1 = \frac{R^2 }{1-\tau^2} \Big( 1-\tau^2 +  2\tau\lambda -\frac{\lambda^2}{ (1-q^2)^2 } \Big), 
\smallskip 
\\
\displaystyle   c = \frac{R^2 \lambda }{1-\tau^2}\Big(\frac{1-\tau q^2}{q^2} + \frac{\lambda}{(1-q^2)^2} \Big) ,
\smallskip 
\\
\displaystyle  a  = \frac{R}{q}\Big(1+\tau q^2-\frac{1-\tau q^2}{1-\tau}\frac{\lambda}{1-q^2}\Big).
\end{cases}
\end{equation} 
Finally, we record the explicit formulas for the logarithmic energy, which are given as follows.

\begin{prop}[\textbf{Logarithmic energy of the equilibrium measure; cf. Theorem 1.2 in \cite{BY25}}] \label{Prop_energy explicit} We have the following. 
\begin{itemize}
\item[\textup{(i)}] Suppose that $(a,c,\tau)$ falls within \textup{Regime I}. Then 
    \begin{equation} \label{weighted energy doubly connected}
    I_Q[\mu_Q]=\frac{3}{4}+\frac{3c}{2} + \frac{c^2}{2}\log \Big( c(1-\tau^2) \Big) -\frac{(1+c)^2}{2}\log(1+c) - \frac{c a^2}{1+\tau}.
    \end{equation}  
\item[\textup{(ii)}] Suppose that $(a,c,\tau)$ falls within \textup{Regime II}. Then   
    \begin{align}
    \begin{split} \label{weighted energy simply connected}
     I_Q[\mu_Q] &= \frac{3}{4} + \frac{3c}{2} -\frac{ca^2}{1+\tau} +\frac{R^3\lambda a (2-3q^2-3\tau q^2 + 2\tau q^4) }{2(1-\tau^2)^2q^3} \Big( 1-\tau  -\frac{ 2-3q^2+3\tau q^2-2\tau q^4}{  2-3q^2-3\tau q^2 + 2\tau q^4 } \frac{\lambda}{1-q^2}\Big)
    \\
    &\quad +2c(1+c)\log q + c^2 \log \Big( \frac{ c (1-\tau^2)(1-q^2) }{ R \lambda } \Big) - (1+c)^2 \log R. 
    \end{split}
    \end{align}
\end{itemize}
\end{prop}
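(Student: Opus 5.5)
We would prove both parts of Proposition~\ref{Prop_energy explicit} with the deformation philosophy of Section~\ref{Section_proof main}. The goal is to compute $\frac{\ud}{\ud a}I_Q[\mu_Q]$ in closed form, then integrate from a reference configuration whose energy is known. The derivative comes from Lemma~\ref{Lem_tau function energy}: the computation in the proof of Proposition~\ref{prop_sub_simply} gives
\[
\frac{\ud}{\ud a}I_Q[\mu_Q]=\frac{2(1-\tau)}{(1-\tau^2)^2}\,M_1, \qquad M_1=\int_{K_Q} z\,\ud A(z).
\]
That argument used only the translation $z\mapsto z+a$, the integrability relations \eqref{eq_integrability}, and the assumption that the droplet deforms smoothly without topological change. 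We would repeat it verbatim in Regime~I; the hypothesis of Lemma~\ref{Lem_tau function energy} there is replaced by the same variational computation as in its proof, since $\mathrm{area}(K)$ is fixed and the Euler--Lagrange constant is constant on each boundary component.

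\textbf{Regime I.} Here $K_Q=\mathsf E\setminus\mathsf D$. The ellipse $\mathsf E$ is centred at the origin, so $\int_{\mathsf E} z\,\ud A=0$, while $\int_{\mathsf D} z\,\ud A=a\cdot c(1-\tau^2)$. Hence $M_1=-ac(1-\tau^2)$ and $\frac{\ud}{\ud a}I_Q=-\frac{2ca}{1+\tau}$, which integrates to the term $-\frac{ca^2}{1+\tau}$.

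It remains to evaluate $I_Q$ at $a=0$. We would use $I_Q[\mu_Q]=\tfrac12\big(C_Q/t_0+\int Q\,\ud\mu_Q\big)$, which follows from $2U^{\mu_Q}+Q=\text{const}$ on $K_Q$. With $a=0$ both $\int Q\,\ud\mu_Q$ and the Robin constant are explicit:
\begin{itemize}
\item the logarithmic potential of a uniform ellipse is quadratic inside and elementary outside;
\item the potential of the disc $\mathsf D$ is that of a point mass outside it;
\item $C_Q$ can be read off at a point of the inner circle $\partial\mathsf D$.
\end{itemize}
Alternatively, one could integrate the $\tau$-variation from the radial case $\tau=0$, where the energy is the known $Q^{\rm i}$ value appearing in \eqref{Z ind Gin asy}. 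The result should be \eqref{weighted energy doubly connected}, and at $\tau=a=0$ it must agree with the $N^2$ coefficient of \eqref{Z ind Gin asy}, which gives a consistency check.

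\textbf{Regime II.} We compute $M_1$ by residues, using $M_1=\frac{1}{2\pi i}\int_{\partial K}z\bar z\,\ud z$ (up to the factor fixed by the normalisation of $\ud A$) and $\bar z=\psi(1/w)$ on $\T$. This gives
\[
M_1=-\operatorname{Res}\text{-sum over }\D\text{ of }\ \psi(w)\psi(1/w)\psi'(w)
\]
for the rational map \eqref{eq_simply connected conformal map}. The poles in $\D$ come from $\psi(1/w)$ (at $w=0$ and $w=1/q$ reflected, i.e.\ $w=q$ mapped appropriately), so $M_1$ is an explicit rational expression in $(R,q,\lambda)$.

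We would not integrate in $a$ directly. Instead we parametrise Regime~II by $(q,\lambda)$ via \eqref{def of coupled equations}, and with $\tau$ fixed take the right-hand side of \eqref{weighted energy simply connected} as the candidate $\mathcal I(q,\lambda)$. Two things then need to be checked:
\begin{enumerate}
\item[(a)] $\partial\mathcal I$ equals $\frac{2(1-\tau)}{(1-\tau^2)^2}M_1\,\partial a$ along every direction that keeps $c$ fixed. This is implicit differentiation of the system \eqref{def of coupled equations}.
\item[(b)] The boundary value is correct. As $a\to\infty$ with $c$ fixed, which corresponds to $q\to0$ with $\lambda\sim c\,q^2$, one must have $\mathcal I+2c\log a\to\frac34$ by \eqref{energy a infty limit}.
\end{enumerate}
A second check is continuity with \eqref{weighted energy doubly connected} across the I/II critical curve, where $\lambda\to\lambda_{\rm cri}$.

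The main obstacle is the algebraic verification in (a). It requires eliminating $R$ and differentiating the logarithmic terms $\log q$ and $\log\frac{c(1-\tau^2)(1-q^2)}{R\lambda}$ together with the cubic polynomial term. We would first reduce to checking a single polynomial identity in $(q,\lambda,\tau)$ after substituting $R^2$ from the first equation of \eqref{def of coupled equations}, and then verify that identity by direct expansion, possibly with computer algebra. The asymptotic check (b) should be routine once the scaling $\lambda\sim cq^2$, $R\to1$ is inserted.
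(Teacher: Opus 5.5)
The paper gives no argument of its own here. The proposition is quoted from \cite[Theorem~1.2]{BY25}, and only part~(i) is used in the paper, to identify the $N^2$ coefficient in Regime~I. Your proposal is therefore a genuinely different, self-contained route. You feed the deformation identity $\frac{\ud}{\ud a}I_Q[\mu_Q]=\frac{2(1-\tau)}{(1-\tau^2)^2}M_1$ from the proof of Proposition~\ref{prop_sub_simply} into an integration from a reference configuration.

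The identity also holds in Regime~I. It is just $\frac{\ud}{\ud a}I_Q[\mu_Q]=\int\partial_aQ\,\ud\mu_Q$: the boundary term is $C_Q$ times the variation of the area, which vanishes, and $C_Q$ is the same on both boundary components. Part~(i) then closes as you describe:
\begin{itemize}
\item $M_1=-ac(1-\tau^2)$, which integrates to $-\frac{ca^2}{1+\tau}$;
\item at $a=0$ one finds $C_Q/t_0=(1+c)(1-\log(1+c))$ and $\int Q\,\ud\mu_Q=\frac12+2c+c^2\log(c(1-\tau^2))-c(1+c)\log(1+c)$, so $\frac12(C_Q/t_0+\int Q\,\ud\mu_Q)$ is \eqref{weighted energy doubly connected} at $a=0$.
\end{itemize}
The same Robin-constant computation works verbatim for any $a$ in Regime~I, so the deformation is optional there.

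For part~(ii) your scheme is a verification rather than a derivation. It needs the closed form as input and a sizeable rational identity in $(q,\lambda,\tau)$. It also needs structural facts from \cite{BY25}: for fixed $(c,\tau)$ the Regime~II values of $a$ form an interval reaching $+\infty$, and the level curve $\{c(q,\lambda)=c\}$ traces it with $q\to0$ as $a\to\infty$. Without these, (a) and (b) do not pin down $\mathcal I$. What your route buys is that the formula becomes a consequence of the paper's own deformation framework rather than of explicit energy integrals.

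Three details must be corrected.
\begin{enumerate}
\item The residue formula has the wrong sign and pole attribution. In fact $M_1=\frac{1}{2\pi i}\oint_{\T}\psi(w)\psi(1/w)\psi'(w)\,\ud w=+\bigl(\Res_{w=0}+\Res_{w=q}\bigr)$; as a check, $\psi(w)=w+b$ gives $M_1=b$. The pole at $w=q$ comes from $\psi$ and $\psi'$, whereas $\psi(1/w)$ has its pole at $1/q\notin\D$.
\item The scaling is wrong. The coupled equations \eqref{def of coupled equations} force $R\to1$ and $\lambda\sim c(1-\tau^2)q^2$, not $\lambda\sim cq^2$. The latter is incompatible with $R\to1$ and would shift the logarithmic part by $c^2\log(1-\tau^2)$.
\item Check (b) is not a leading-order check. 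The $q^{-2}$ terms of $-\frac{ca^2}{1+\tau}$ and of the cubic term cancel, so the constant is governed by the next corrections:
$a=q^{-1}+(\tau-c-2c\tau)q+O(q^3)$, $R=1-c\tau q^2+O(q^4)$, $\lambda=c(1-\tau^2)q^2\bigl(1+\tau(1+2c)q^2\bigr)+O(q^6)$.
With these, the non-logarithmic part of $\mathcal I$ tends to $\frac34+\frac{3c}{2}-\frac{3c}{2}=\frac34$. The logarithms reduce to $2c\log q+o(1)$, which cancels $2c\log a$, so (b) does hold.
\end{enumerate}
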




\begin{thebibliography}{999}


\bibitem{AS76} D. Aharonov and H.~S. Shapiro, \emph{Domains on which analytic functions satisfy quadrature identities}, J. Anal. Math. \textbf{30} (1976), 39--73.

\bibitem{ABK21} G. Akemann, S.-S. Byun and N.-G. Kang, \emph{A non-Hermitian generalisation of the Marchenko–Pastur distribution: from the circular law to multi-criticality}, Ann. Henri Poincaré \textbf{22} (2021), 1035--1068.

\bibitem{ACV18} G. Akemann, M. Cikovic and M. Venker, \emph{Universality at weak and strong non-Hermiticity beyond the elliptic Ginibre ensemble}, Comm. Math. Phys. \textbf{362} (2018), 1111--1141.

\bibitem{AV03} G. Akemann and G. Vernizzi, \emph{Characteristic polynomials of complex random matrix models}, Nuclear Phys. B \textbf{660} (2003), 532--556.

\bibitem{AFLS25} M. Allard, P. J. Forrester, S. Lahiry and B.-J. Shen, \emph{Partition function of 2D Coulomb gases with radially symmetric potentials and a hard wall}, arXiv:2506.14738.

\bibitem{AL25} M. Allard and S. Lahiry, \emph{Birth of a gap: Critical phenomena in 2D Coulomb gas}, arXiv:2509.24529.


\bibitem{Am25} Y. Ameur, \emph{A formula for the edge density $\sqrt{n}$-correction for two-dimensional Coulomb systems}, arXiv:2510.16945.

\bibitem{ACC24} Y. Ameur, C. Charlier and J. Cronvall, \emph{Random normal matrices: Eigenvalue correlations near a hard wall}, J. Stat. Phys. \textbf{191} (2024), 98.

\bibitem{ACC26} Y. Ameur, C. Charlier and J. Cronvall, \emph{Free energy and fluctuations in the random normal matrix model with spectral gaps}, Constr. Approx. \textbf{63} (2026), 279--335.

\bibitem{ACCL23} Y. Ameur, C. Charlier, J. Cronvall and J. Lenells, \emph{Exponential moments for disc counting statistics at the hard edge of random normal matrices}, J. Spectr. Theory \textbf{13} (2023), 841--902.

\bibitem{ACCL24} Y. Ameur, C. Charlier, J. Cronvall and J. Lenells, \emph{Disc counting statistics near hard edges of random normal matrices: the multi-component regime}, Adv. Math. \textbf{441} (2024), 109549.

\bibitem{AC23} Y. Ameur and J. Cronvall, \emph{Szeg\H{o} type asymptotics for the reproducing kernel in spaces of full-plane weighted polynomials}, Comm. Math. Phys. \textbf{398} (2023), 1291--1348.

\bibitem{AC26} Y. Ameur and J. Cronvall, \emph{On fluctuations of Coulomb systems and universality of the Heine distribution}, J. Funct. Anal. \textbf{290} (2026), 111301.

\bibitem{AHM11} Y. Ameur, H. Hedenmalm and N. Makarov, \emph{Fluctuations of eigenvalues of random normal matrices}, Duke Math. J. \textbf{159} (2011), 31--81.

\bibitem{AS21} S. Armstrong and S. Serfaty, \emph{Local laws and rigidity for Coulomb gases at any temperature}, Ann. Probab. \textbf{49} (2021), 46--121.

\bibitem{BBLM15} F. Balogh, M. Bertola, S.-Y. Lee and K. D. T.-R. McLaughlin, \emph{Strong asymptotics of the orthogonal polynomials with respect to a measure supported on the plane}, Comm. Pure Appl. Math. \textbf{68} (2015), 112--172.


\bibitem{BGM17} F. Balogh, T. Grava and D. Merzi, \emph{Orthogonal polynomials for a class of measures with discrete rotational symmetries in the complex plane}, Constr. Approx. \textbf{46} (2017), 109--169.


\bibitem{BBNY19} R. Bauerschmidt, P. Bourgade, M. Nikula and H.-T. Yau, \emph{The two-dimensional Coulomb plasma: quasi-free approximation and central limit theorem}, Adv. Theor. Math. Phys. \textbf{23} (2019), 841--1002.



\bibitem{Be09} M. Bertola, \emph{Moment determinants as isomonodromic tau functions}, Nonlinearity \textbf{22} (2009), 29--50.

\bibitem{BEG18} M. Bertola, J. G. Elias Rebelo and T. Grava, \emph{Painlevé IV critical asymptotics for orthogonal polynomials in the complex plane}, SIGMA Symmetry Integrability Geom. Methods Appl. \textbf{14} (2018), Paper No. 091, 34pp.

\bibitem{BL08} M. Bertola and S.-Y. Lee, \emph{First colonization of a spectral outpost in random matrix theory}, Constr. Approx. \textbf{30} (2008), 225--263.

\bibitem{BK12} P. M. Bleher and A. B. J. Kuijlaars, \emph{Orthogonal polynomials in the normal matrix model with a cubic potential}, Adv. Math. \textbf{230} (2012), 1272--1321.

\bibitem{BS20} P. M. Bleher and G. L. F. Silva, \emph{The mother body phase transition in the normal matrix model}, Mem. Amer. Math. Soc. \textbf{265} (2020), 1289, v+144 pp.

\bibitem{BG24} G. Borot and A. Guionnet, \emph{Asymptotic expansion of beta matrix models in the multi-cut regime}, Forum Math. Sigma \textbf{12} (2024), 1--93.

\bibitem{BDHK25} P. Bourgade, G. Dubach, L. Hartung and A. Keles, \emph{Fisher-Hartwig asymptotics for non-Hermitian random matrices}, arXiv:2512.09123.

\bibitem{Bo25} L. Bourgoin, \emph{Free energy of the Coulomb gas in the determinantal case on Riemann surfaces}, arXiv:2508.20598.

\bibitem{BDSW18} J. S. Brauchart, P. D. Dragnev, E. B. Saff and R. S.~Womersley, \emph{Logarithmic and Riesz equilibrium for multiple sources on the sphere: the exceptional case}, Contemporary Computational Mathematics - A Celebration of the 80th Birthday of Ian Sloan (J. Dick, F. Kuo and H. Wo\'zniakowski, eds.), Springer, Cham, 2018, 179--203.

\bibitem{By24} S.-S. Byun, \emph{Planar equilibrium measure problem in the quadratic fields with a point charge}, Comput. Methods Funct. Theory \textbf{24} (2024), 303--332.

\bibitem{By25a} S.-S. Byun, \emph{Anomalous free energy expansions of planar Coulomb gases: multi-component and conformal singularity}, arXiv:2508.00316.

\bibitem{BC25} S.-S. Byun and C. Charlier, \emph{On the characteristic polynomial of the eigenvalue moduli of random normal matrices}, Constr. Approx. \textbf{62} (2025), 471--521.

\bibitem{BCMS25} S.-S. Byun, C. Charlier, P. Moreillon and N. Simm, \emph{Precise large deviations in geometric last passage percolation}, arXiv:2510.17470.

\bibitem{BF25} S.-S.~Byun and P. J.~Forrester, \emph{Progress on the study of the Ginibre ensembles}, KIAS Springer Ser. Math. \textbf{3} Springer, 2025, 221pp.

\bibitem{BF25a} S.-S.~Byun and P. J.~Forrester, \emph{Electrostatic computations for statistical mechanics and random matrix applications}, MATRIX Book Ser. (to appear), arXiv:2510.14334.

\bibitem{BFKL26} S.-S. Byun, P. J. Forrester, A. B. J. Kuijlaars and S. Lahiry, \emph{Orthogonal polynomials in the spherical ensemble with two insertions}, SIAM J. Math. Anal. (to appear), arXiv:2503.15732.

\bibitem{BFL25} S.-S. Byun, P. J. Forrester and S. Lahiry, \emph{Properties of the one-component Coulomb gas on a sphere with two macroscopic external charges}, Pure Appl. Funct. Anal. (to appear), arXiv:2501.05061.

\bibitem{BKS23} S.-S. Byun, N.-G. Kang and S.-M. Seo, \emph{Partition functions of determinantal and Pfaffian Coulomb gases with radially symmetric potentials}, Comm. Math. Phys. \textbf{401} (2023), 1627--1663.

\bibitem{BKSY25} S.-S. Byun, N.-G. Kang, S.-M. Seo and M. Yang, \emph{Free energy of spherical Coulomb gases with point charges}, J. Lond. Math. Soc. (2) \textbf{112} (2025), e70294.


\bibitem{BP26} S.-S. Byun and S. Park, \emph{Large gap probabilities of complex and symplectic spherical ensembles with point charges}, J. Funct. Anal. \textbf{290} (2026), 111260.

\bibitem{BSY25} S.-S. Byun, S.-M. Seo and M. Yang, \emph{Free energy expansions of a conditional GinUE and large deviations of the smallest eigenvalue of the LUE}, Comm. Pure Appl. Math. \textbf{78} (2025), 2247--2304.

\bibitem{BY23} S.-S. Byun and M. Yang, \emph{Determinantal Coulomb gas ensembles with a class of discrete rotational symmetric potentials}, SIAM J. Math. Anal. \textbf{55} (2023), 6867--6897.

\bibitem{BY25} S.-S. Byun and E. Yoo, \emph{Three topological phases of the elliptic Ginibre ensembles with a point charge}, arXiv:2502.02948.

\bibitem{CCG21} M. Cafasso, T. Claeys and M. Girotti, \emph{Fredholm determinant solutions of the Painlevé II hierarchy and gap probabilities of determinantal point processes}, Int. Math. Res. Not. \textbf{2021} (2021), 2437--2478.

\bibitem{CFTW15} T. Can, P. J. Forrester, G. Téllez and P. Wiegmann, \emph{Exact and asymptotic features of the edge density profile for the one component plasma in two dimensions}, J. Stat. Phys. \textbf{158} (2015), 1147--1180.

\bibitem{Ch22} C. Charlier, \emph{Asymptotics of determinants with a rotation-invariant weight and discontinuities along circles}, Adv. Math. \textbf{408} (2022), 108600.

\bibitem{Ch23} C. Charlier, \emph{Large gap asymptotics on annuli in the random normal matrix model}, Math. Ann. \textbf{388} (2024), 3529--3587.


\bibitem{Ch25} C. Charlier, \emph{Smallest gaps of the two-dimensional Coulomb gas}, arXiv:2507.23502.

\bibitem{CFWW25} C. Charlier, B. Fahs, C. Webb and M. D. Wong, \emph{Asymptotics of Hankel determinants with a multi-cut regular potential and Fisher-Hartwig singularities}, Mem. Amer. Math. Soc. \textbf{310} (2025), 1567, v+138pp.

\bibitem{CG21} C. Charlier and R. Gharakhloo, \emph{Asymptotics of Hankel determinants with a Laguerre-type or Jacobi-type potential and Fisher–Hartwig singularities} Adv. Math. \textbf{383} (2021), 107672.

\bibitem{CGM15} T. Claeys, T. Grava and K. D. T.-R. McLaughlin, \emph{Asymptotics for the partition function in two-cut random matrix models}, Comm. Math. Phys. \textbf{339} (2015), 513--587.

\bibitem{CK15} T. Claeys and I. Krasovsky, \emph{Toeplitz determinants with merging singularities}, Duke Math. J. \textbf{164} (2015), 2897--2987.

\bibitem{CJ23} K. Courteaut and K. Johansson, \emph{Partition function for the 2d Coulomb gas on a Jordan curve}, Ann. Fenn. Math. \textbf{50} (2025), 109–-144.


\bibitem{CK22} J. G. Criado del Rey and A. B. J. Kuijlaars, \emph{A vector equilibrium problem for symmetrically located point charges on a sphere}, Constr. Approx. \textbf{55} (2022), 775--827.

\bibitem{CW25} J. Cronvall and A. Wennman, \emph{A direct approach to soft and hard edge universality for random normal matrices}, arXiv:2511.18628.

\bibitem{DMMS25} A. Deaño, K. D. T.-R. McLaughlin, L. Molag and N. Simm, \emph{Asymptotics for a class of planar orthogonal polynomials and truncated unitary matrices}, arXiv:2505.12633.

\bibitem{DS22} A. Deaño and N. Simm, \emph{Characteristic polynomials of complex random matrices and Painlevé transcendents}, Int. Math. Res. Not. \textbf{2022} (2022), 210--264.


\bibitem{DIZ97} P. Deift, A. R. Its and X. Zhou, \emph{A Riemann-Hilbert approach to asymptotic problems arising in the theory of random matrix models, and also in the theory of integrable statistical mechanics}, Ann. of Math. (2) \textbf{146} (1997), 149--235.



\bibitem{Dub09} J. Dubédat, \emph{SLE and the free field: partition functions and couplings}, J. Amer. Math. Soc. \textbf{22} (2009), 995--1054.


\bibitem{EF05} P. Elbau and G. Felder, \emph{Density of eigenvalues of random normal matrices}, Comm. Math. Phys. \textbf{259} (2005), 433--450.

\bibitem{Fo10} P. J. Forrester, \emph{Log-gases and random matrices}, Princeton University Press, Princeton, 2010.

\bibitem{Fo25} P. J. Forrester, \emph{Dualities in random matrix theory}, arXiv:2501.07144.


\bibitem{FW08} P. J. Forrester and S. O. Warnaar, \emph{The importance of the Selberg integral}, Bull. Amer. Math. Soc. (N.S.) \textbf{45} (2008), 489--534.


\bibitem{Fyo18} Y. V. Fyodorov, \emph{On statistics of bi-orthogonal eigenvectors in real and complex Ginibre ensembles: combining partial Schur decomposition with supersymmetry}, Comm. Math. Phys. \textbf{363} (2018), 579--603.



\bibitem{GS05} B. Gustafsson and H. S. Shapiro, \emph{What is a Quadrature Domain?}, Quadrature Domains and Their Applications (P. Ebenfelt, B. Gustafsson, D. Khavinson and M. Putinar, eds.), Operator Theory: Advances and Applications. \textbf{156} Birkhäuser Basel, 2005, 1--25.

\bibitem{GTV14} B. Gustafsson, R. Teodorescu. and A. Y. Vasil’ev, \emph{Classical and Stochastic Laplacian Growth}, Advances in Mathematical Fluid Mechanics, Birkh\"auser/Springer, Cham, 2014.

\bibitem{GV06} B. Gustafsson and A. Y. Vasil'ev, \emph{Conformal and potential analysis in Hele-Shaw cells}, Advances in Mathematical Fluid Mechanics, Birkh\"auser, Basel, 2006.
 

\bibitem{Hed24} H. Hedenmalm, \emph{Soft Riemann-Hilbert problems and planar orthogonal polynomials}, Comm. Pure Appl. Math. \textbf{77} (2024), 2413--2451.

\bibitem{HM13} H. Hedenmalm and N. Makarov, \emph{Coulomb gas ensembles and Laplacian growth}, Proc. Lond. Math. Soc. (3) \textbf{106} (2013), 859--907.


\bibitem{HW21} H. Hedenmalm and A. Wennman, \emph{Planar orthogonal polynomials and boundary universality in the random normal matrix model}, Acta Math. \textbf{227} (2021), 309--406.

\bibitem{HW24} H. Hedenmalm and A. Wennman, \emph{Berezin density and planar orthogonal polynomials}, Trans. Amer. Math. Soc. \textbf{377} (2024), 4825--4863.

\bibitem{HW24a} H. Hedenmalm and A. Wennman, \emph{A global asymptotic expansion of the polynomial Bergman density}, preprint.

\bibitem{JMP94} B. Jancovici, G. Manificat and C. Pisani, \emph{Coulomb systems seen as critical systems: finite-size effects in two dimensions}, J. Stat. Phys. \textbf{76} (1994), 307--329.

\bibitem{Joh98} K. Johansson, \emph{On fluctuations of eigenvalues of random Hermitian matrices}, Duke Math. J. \textbf{91} (1998), 151--204.

\bibitem{Joh00} K. Johansson, \emph{Shape fluctuations and random matrices}, Comm. Math. Phys. \textbf{209} (2000), 437--476.

\bibitem{Joh22} K. Johansson, \emph{Strong Szeg\H{o} theorem on a Jordan curve}. Toeplitz Operators and Random Matrices in Memory of Harold Widom (Basor, et al., eds.), Operator Theory Advances and Applications, Birkhäuser, Basel, 2022.

\bibitem{JV26} K. Johansson and F. Viklund, \emph{Coulomb gas and the Grunsky operator on a Jordan domain with corners}, Invent. Math. (Online), arXiv:2309.00308.
 

\bibitem{KM13} N.-G. Kang and N. Makarov, \emph{Gaussian free field and conformal field theory}, Astérisque \textbf{353} (2013), viii+136.
 

\bibitem{KKL25} M. Kieburg, A. B. J. Kuijlaars and S. Lahiry, \emph{Orthogonal polynomials in the normal matrix model with two insertions}, Nonlinearity \textbf{38} (2025), no. 6, Paper No. 065013, 66pp.
 

\bibitem{KMMW17} S. Klevtsov, X. Ma, G. Marinescu and P. Wiegmann, \emph{Quantum Hall effect and Quillen metric}, Comm. Math. Phys. \textbf{349} (2017), 815--855.

\bibitem{KKMWZ01} I.~K. Kostov, I. Krichever, M. Mineev-Weinstein, P. Wiegmann, A. Zabrodin, \emph{The $\tau$-function for analytic curves}, Random matrix models and their applications, Math. Sci. Res. Inst. Publ. \textbf{40}, Cambridge Univ. Press, 2001, 285--299.

\bibitem{Kra07} I. Krasovsky, \emph{Correlations of the characteristic polynomials in the Gaussian unitary ensemble or a singular Hankel determinant}, Duke Math. J. \textbf{139} (2007), 581--619.

\bibitem{KLY25} T. Krüger, S.-Y. Lee and M. Yang, \emph{Local statistics in normal matrix models with merging singularity}, Comm. Math. Phys. \textbf{406} (2025), 130.
 

\bibitem{Lam20} G. Lambert, \emph{Maximum of the characteristic polynomial of the Ginibre ensemble}, Comm. Math. Phys. \textbf{378} (2020), 943--985.
 

\bibitem{LMS18} P. Le Doussal, S. N. Majumdar and G. Schehr, \emph{Multicritical edge statistics for the momenta of fermions in nonharmonic traps}, Phys. Rev. Lett. \textbf{121} (2018), 030603.

\bibitem{LS17} T. Leblé and S. Serfaty, \emph{Large deviation principle for empirical fields of log and Riesz gases}, Invent. Math. \textbf{210} (2017), 645--757.
 

\bibitem{LM16} S.-Y. Lee and N. Makarov, \emph{Topology of quadrature domains}, J. Amer. Math. Soc. \textbf{29} (2016), 333--369.

\bibitem{LR16} S.-Y. Lee and R. Riser, \emph{Fine asymptotic behavior for eigenvalues of random normal matrices: ellipse case}, J. Math. Phys. \textbf{57} (2016), 023302.

\bibitem{LY17} S.-Y. Lee and M. Yang, \emph{Discontinuity in the asymptotic behavior of planar orthogonal polynomials under a perturbation of the Gaussian weight}, Comm. Math. Phys. \textbf{355} (2017), 303--338.
 

\bibitem{LY23} S.-Y. Lee and M. Yang, \emph{Strong asymptotics of planar orthogonal polynomials: Gaussian weight perturbed by finite number of point charges}, Comm. Pure Appl. Math. \textbf{76} (2023), 2888--2956.
 

\bibitem{MMO25} J. Marzo, L. Molag and J. Ortega-Cerdà, \emph{Universality for fluctuations of counting statistics of random normal matrices}, J. Lond. Math. Soc. (2) \textbf{113} (2026), e70462.

\bibitem{MWZ00} M. Mineev-Weinstein, P.~Wiegmann and A. Zabrodin, \emph{Integrable structure of interface dynamics}, Phys. Rev. Lett. \textbf{84} (2000), 5106--5109.
 

\bibitem{MR25} S. Mukherjee and Rashmita, \emph{On topology and singularities of quadrature domains}, arXiv:2509.21468.
 

\bibitem{No25} K. Noda, \emph{Partition functions of two-dimensional Coulomb gases with circular root- and jump-type singularities}, arXiv:2510.00843.

\bibitem{NIST} F. W. J. Olver, D. W. Lozier, R. F. Boisvert and C. W. Clark, eds. \emph{NIST Handbook of Mathematical Functions}, Cambridge University Press, Cambridge, 2010.
 

\bibitem{Rou25} N. Rougerie, \emph{Free-energy variations for determinantal 2D plasmas with holes}, arXiv:2510.01745.

\bibitem{ST97} E. B. Saff and V. Totik, \emph{Logarithmic Potentials with External Fields}, Grundlehren der Mathematischen Wissenschaften, Springer-Verlag, Berlin, 1997.

\bibitem{Se23} S. Serfaty, \emph{Gaussian fluctuations and free energy expansion for Coulomb gases at any temperature}, Ann. Inst. Henri Poincaré Probab. Stat. \textbf{59} (2023), 1074--1142.

\bibitem{Se24} S. Serfaty, \emph{Lectures on Coulomb and Riesz Gases}, Amer. Math. Soc. Colloq. Publ. (to appear), arXiv:2407.21194.
 

\bibitem{SW24} J. Sung and Y. Wang, \emph{Quasiconformal deformation of the chordal Loewner driving function and first variation of the Loewner energy}, Math. Ann. \textbf{390} (2024), 4789--4812. 


\bibitem{SY25} S. Shen and J. Yu, \emph{Geometric Zabrodin-Wiegmann conjecture for integer Quantum Hall states}, Comm. Math. Phys. \textbf{406} (2025), 298.

\bibitem{TT03} L. A. Takhtajan and L.-P. Teo, \emph{Liouville action and Weil-Petersson metric on deformation spaces, global Kleinian reciprocity and holography}, Comm. Math. Phys. \textbf{239} (2003), 183--240.

\bibitem{TT06} L. A. Takhtajan and L.-P. Teo, \emph{Weil-Petersson metric on the universal Teichmüller space}, Mem. Amer. Math. Soc. \textbf{183} (2006), 861, viii+119 pp.

\bibitem{TF99} G. Téllez and P. J. Forrester, \emph{Exact finite-size study of the 2D OCP at $\Gamma=4$ and $\Gamma= 6$}, J. Stat. Phys. \textbf{97} (1999), 489--521.

\bibitem{Te09} L.-P. Teo, \emph{Conformal mappings and dispersionless Toda hierarchy}, Comm. Math. Phys. \textbf{292} (2009), 391--415.
 

\bibitem{Wa19b} Y. Wang, \emph{Equivalent descriptions of the Loewner energy}, Invent. Math. \textbf{218} (2019), 573--621.
 

\bibitem{WW19} C. Webb and M. D. Wong, \emph{On the moments of the characteristic polynomial of a Ginibre random matrix}, Proc. Lond. Math. Soc. (3) \textbf{118} (2019), 1017--1056.

\bibitem{WZ00} P. Wiegmann and A. Zabrodin, \emph{Conformal maps and integrable hierarchies}, Comm. Math. Phys. \textbf{213} (2000), 523--538.

\bibitem{WZ22} P. Wiegmann and A. Zabrodin, \emph{Dyson gas on a curved contour}, J. Phys. A \textbf{55} (2022), 165202.
 

\bibitem{ZW06} A. Zabrodin and P. Wiegmann, \emph{Large-$N$ expansion for the 2D Dyson gas}, J. Phys. A \textbf{39} (2006), 8933--8964.
 

\end{thebibliography}

\end{document}